\newcommand{\biat}{{\sf E}}
\newcommand{\bel}{{\sf B}}
\newcommand{\ob}{{\sf O}}
\newcommand{\per}{{\sf P}}
\renewcommand{\phi}{\varphi}
\newcommand{\tensor}{\otimes}
\renewcommand{\implies}{\multimap}
\renewcommand\t{\vdash}
\newcommand{\CL}{\ensuremath{\mathsf{CL}}\xspace}
\newcommand{\ILL}{\ensuremath{\mathsf{ILL}}\xspace}
\newcommand{\ILLs}{\ensuremath{\mathsf{ILLs}}\xspace}
\newcommand{\MILLs}{\ensuremath{\mathsf{MILLs}}\xspace}
\newcommand{\AILLs}{\ensuremath{\mathsf{AILLs}}\xspace}
\newcommand{\BILLs}{\ensuremath{\mathsf{BILLs}}\xspace}
\newcommand{\OILLs}{\ensuremath{\mathsf{OILLs}}\xspace}
\newcommand{\ILLD}{\ensuremath{\mathsf{ILLD}}\xspace}
\newcommand{\aILLs}{\ensuremath{\mathsf{aILLs}}\xspace}
\newcommand{\aAILLs}{\ensuremath{\mathsf{aAILLs}}\xspace}
\newcommand{\aBILLs}{\ensuremath{\mathsf{aBILLs}}\xspace}
\newcommand{\aOILLs}{\ensuremath{\mathsf{aOILLs}}\xspace}
\newcommand{\UILLs}{\ensuremath{\mathsf{UILLs}}\xspace}
\newcommand{\XILLs}{\ensuremath{\mathsf{XILLs}}\xspace}
\newcommand{\aXILLs}{\ensuremath{\mathsf{aXILLs}}\xspace}
\newcommand{\BIAT}{\ensuremath{\mathsf{BIAT}}\xspace}
\newcommand{\XCL}{\ensuremath{\mathsf{XCL}}\xspace}
\newcommand{\ACL}{\ensuremath{\mathsf{ACL}}\xspace}
\newcommand{\OCL}{\ensuremath{\mathsf{OCL}}\xspace}
\newcommand{\BCL}{\ensuremath{\mathsf{BCL}}\xspace}
\newcommand{\UCL}{\ensuremath{\mathsf{UCL}}\xspace}
\newcommand\limp\rightarrow
\newcommand\leqv\leftrightarrow
\newtheorem{theorem}{Theorem}
\newtheorem{lemma}[theorem]{Lemma}
\newtheorem{definition}[theorem]{Definition}
\newtheorem{fact}[theorem]{Fact}
\newcommand\AC[1]{\AxiomC{$#1$}}
\newcommand\rl[1]{\RightLabel{#1}}
\newcommand\BC[1]{\BinaryInfC{$#1$}}
\newcommand\UC[1]{\UnaryInfC{$#1$}}
\newcommand\dip{\DisplayProof}
\begin{document}


\title{Logics for modelling collective attitudes}

\address{Free University of Bozen-Bolzano, Piazza Domenicani 3, 39100, Bolzano, Italy. E-mail:
danieleporello{@}gmail.com}

\author{Daniele Porello
} 
\maketitle


\begin{abstract}
We introduce a number of logics to reason about collective propositional attitudes that are defined by means of the majority rule. It is well known that majoritarian aggregation is subject to irrationality, as the results in social choice theory and judgment aggregation show. The proposed logics for modelling collective attitudes are based on a substructural propositional logic that allows for circumventing inconsistent outcomes. Individual and collective propositional attitudes, such as beliefs, desires, obligations, are then modelled by means of minimal modalities to ensure a number of basic principles. In this way, a viable consistent modelling of collective attitudes is obtained.\end{abstract}

\medskip
\noindent
\small
\textbf{Keywords}.  Collective Propositional Attitudes, Group Agency, Substructural Logics, Non-Normal Modal Logics, Collective Rationality, Majority rule, Judgment Aggregation, Social Choice Theory.
\par

\normalsize

\section{Introduction}

The rationality of collective propositional attitudes --- such as beliefs, desires, and intentions --- and of collective agency is a central issue in social choice theory and mathematical economics and it has also become an important topic in knowledge representation and in the foundations of multiagent systems. 
A collective propositional attitude is, generally speaking, a propositional attitude that is ascribed to a collective entity. A map of the most salient notions of collective attitudes was proposed by Christian List \cite{ListErkenntnis2014}, who distinguished between three kinds of collective attitudes: \emph{aggregate}, \emph{common}, and \emph{corporate} attitudes. Corporate attitudes presuppose that the collectivity to which they are ascribed is an \emph{agent} in its own right, an agent who is distinguished from the mere individuals that compose the collectivity. We shall focus, for the main part of this work, on the other two kinds of collective attitudes. Common attitudes are ascribed to collectivities by requiring that every member of the group share the same attitude. Common attitudes have been presupposed for instance by the debate on joint action and collective intentionality \cite{Tuomela2013,ListErkenntnis2014,ListPettit2011}. In this view, possible divergences among the attitudes of the members of the group are excluded and the problem is to understand, for instance, whether an intention that is shared among the members of the group is indeed a collective intention of the group itself.\par
By contrast, aggregative attitudes do not presuppose that every member of the group share the same attitude. In this case, a propositional attitude can be ascribed to the collectivity by solving the possible disagreement by means of a voting procedure such as the majority rule. This view is appealing, since it seems to be capable of accounting for the perspective endowed by common attitudes, for which unanimity is demanded, but also for a number of situations in which it is reasonable to define a collective attitude without assuming that all the members of the group share the same attitudes. For instance, when we model the decisions of parliaments, organizations, or committees, we aim to ascribe collective decisions starting from a situation of initial disagreement. Observe that modelling an aggregative view of collective attitudes is conceptually simpler than modelling common group attitudes in a number of respects. Firstly, we do not need to assume \emph{joint intentionality} nor a shared goal among the group of agents. By the definition of a majoritarian group, we are already assuming that individuals do have different goals and intentions \cite{ListErkenntnis2014}. For that reason, we are terming the attitudes of the group by \emph{group} or \emph{collective} attitudes and not by \emph{joint} attitudes. For an aggregative view of collective attitudes, shared intentionality and shared goals do not need to enter the model for defining what a group attitude is \cite{ListPettit2011}.\par
Besides being descriptively adequate to a number of scenarios, non-unanimous collective attitudes are important also for defining, representing, and assessing group information. Consider the following situations involving artificial agents. Suppose three sensors have been placed in different locations of a room and they are designed to trigger a fire alarm in the case they detect smoke. By viewing the three sensors as a group, we may investigate what are the conditions that define the group action, in this case, ``trigger the alarm'', and its dependency on the group ``beliefs''. By forcing unanimity ---by viewing group attitudes as common attitudes--- we are assuming that the three sensors as a group trigger the alarm only in the case they all agree in detecting smoke. However, a unanimous view of group attitudes may lead to lose important bits of information: if the sensors disagree, the alarm is simply not triggered, even if the disagreement may be caused, for instance, by the fact that one of the three sensors is in a location that has not been reached by the smoke yet. Thus, there are reasons for abandoning common attitudes in modelling information merging \cite{PorelloEndrissJLC2014}. An aggregative view provides the formal means to tailor the concept of collective information to the specific scenario, by selecting the appropriate aggregation method.\par
Although an aggregative view of group attitudes is desirable, several results in social choice theory and judgment aggregation show that many important aggregation procedures are not capable of guaranteeing a rational outcome \cite{ListPuppe2009}. A crucial example is the majority rule, that does not preserve the consistency of individual judgments, as the intriguing discursive dilemmas show \cite{ListPettitEconPhil2002}.
As usual in the (belief-desire-intention) BDI approach to agency, at least a modicum of rationality has to be presupposed in order to define an agent. An agent cannot hold (synchronically) inconsistent attitudes, such as commitments or beliefs. Therefore, when the outcome of an aggregation procedure is inconsistent, as in the case of the majority rule, we simply cannot construe a majoritarian group as an agent. The solution that has been mainly pursued in the literature on judgment aggregation and social choice theory is to give up procedures such as the majority rule and to design aggregation procedures that guarantee consistency \cite{ListPuppe2009}.\par
In this paper, I am interested in pursuing a different strategy: I investigate whether there is a viable notion of rationality with respect to which the outcome of a majoritarian aggregation can be deemed rational. The motivation is that, in several real scenarios, agents actually use the majority rule to settle disagreement. Besides that, the majority rule has a number of desirable features such as it is simple to understand and implement, preference aggregation is non-manipulable (when consistent) \cite{CampbellKelly2003strategy}, it has been associated to a suggestive epistemic virtue justified by the Condorcet's jury theorem \cite{list2001epistemic}.

On a close inspection, as we shall see, the inconsistency of the majority rule is intertwined with the principles of classical logic. In \cite{PorelloIJCAI2013,PorelloPRIMA2015}, a possibility result for the majority rule has been provided by means of a non-classical logic, namely by means of linear logic \cite{Girard1987,Girard1995}. I will build on that in order to develop a number of logics for which majoritarian collective attitudes ---that is, attitudes that are aggregatively defined by means of the majority rule--- are consistent.\par
In principle, this proposal can be applied to any propositional attitudes such as beliefs, desires, intentions, or commitments. For the sake of example, I focus here on three types of propositional attitudes: actions, beliefs and obligations. Moreover, the proposed modelling can in principle be instantiated with a number of aggregation procedures, while I shall focus here on the significant case of the majority rule. \par

The methodology of this paper relies on two approaches. Firstly, at the propositional level, we use an important family of non-classical logics, namely substructural logics ---of which linear logic is an important example--- since they enable a very refined analysis of the collective inconsistency under the majority rule. Secondly, the modelling of individual and collective propositional attitudes makes use of minimal modal logics. This is motivated by the idea of exploring a number of basic principles that govern the reasoning about individual and collective propositional attitudes. A presentation of minimal, or non-normal, modalities is \cite{chellas80}. Non-normal modal logics of actions, beliefs, and obligation that are based on classical propositional logic are extensively studied and discussed in the literature. An exhaustive overview is out of the scope of this paper, I shall only mention a few directly related work in the subsequent sections.\par 
The technical contribution of this paper summarised as follows. For the logical part, I define minimal modal logics of actions, beliefs, and obligations for individual and collective agents based on a significant fragment of substructural logics. I present the Hilbert systems for this logics, their semantics, and I show soundness and completeness. Moreover, I show that the majority rule preserves consistency for interesting fragments of such logics.\par
The conceptual contribution of this paper consists in the application of this logics to propose a consistent modelling of collective attitudes that may serve to the foundation of the status of collective agents.\par
A closely related approach is \cite{BoellaEtAlCOIN2010}, the authors use judgment aggregation for modelling group attitudes by relying on (classical) modal logics of beliefs and goals. The main difference with respect to the present work is that their treatment applies to aggregation procedures that are known to guarantee consistency, e.g. the premise based procedure \cite{ListPuppe2009}, whereas I am interested in approaching the problematic case of the plain majority rule. Another closely related work is the generalisation of the theory of judgment aggregation to capture general propositional attitudes \cite{dietrich2010aggregation}. The present work can be considered a contribution to the study of collective attitudes based on non-classical propositional logics and on weak modalities.\par  

The remainder of this paper is organised as follows. Section 2 discusses a famous case of doctrinal paradox \cite{KornhauserSagerCLR1993} to highlight the problems of a consistent logical modelling of collective attitudes. Section 3 presents the framework of Judgment Aggregation \cite{ListPuppe2009}. In particular, we shall see how the majority rule fails in preserving the notion of consistency based on classical logic. Section 4 introduces and motivates a family of non-classical logics, i.e. substructural logics, as a viable alternative to classical logic to define a notion of consistency that can be preserved by the majority rule at the collective level. Section 5, then, presents a possibility result for the majority rule with respect to substructural logics. Section 6 introduces the basics of the modal logics for modelling individual and collective attitudes. In Section 7, I introduce the proposal for modelling collective attitudes and I exemplify its applications to the doctrinal paradox. In particular, we shall spell out the individual and collective attitudes actually involved in the doctrinal paradox (i.e. beliefs, actions, and obligations). 
Finally, I will approach the treatment of corporate attitudes within the proposed framework, by discussing the nature of the corporate agent who is supposed to be the bearer of the collective attitudes. Section 8 concludes. The technical treatment of the logics introduced in this paper is presented in the Appendices.

\section{The doctrinal paradox}\label{sec:doc}

To illustrate the problems of a theory of collective attitudes, we present the famous case of doctrinal paradox that actually emerged in the deliberative practice of the U.S. Supreme Court, namely the case of \emph{Arizona v Fulminante} \cite{KornhauserSagerCLR1993}. This case originally motivated the study of judgment aggregation, as well as an important debate on the legitimacy of collective decisions, cf. \cite{KornhauserSagerCLR1993} and \cite{ListPettitEconPhil2002,Ottonelli2010}. The Court had to decide whether to revise a trial on the ground of the alleged coercion of the defendant's confession. The legal doctrine prescribes that a trial must be revised if both the the confession was coerced and the confession affected the outcome of the trial.
At the mere level of propositional logic, we formalise the propositions involved as follows: $p$ for ``the confession was coerced'', $q$ for ``the confession affected the outcome of the trial'', and $r$ for ``the trial must be revised''. The legal doctrine is then captured by the formula of classical propositional logic $p \wedge q \rightarrow r$. We only report the votes of three out of the nine Justices of the Supreme Court and we label them by $1$, $2$, and $3$. Individual votes are faithfully exemplified by the following profile.

\begin{center}
\begin{tabular}{ccccccc}
 & $p$ & $p \wedge q$ & $q$ & $p \wedge q \rightarrow r$ & $r$\\
\hline
$1$ & 1 & 1 & 1 & 1 & 1 &\\   
$2$ & 1 & 0 & 0 & 1 & 0 &\\ 
$3$ & 0 & 0 & 1 & 1 & 0 &\\
\hline 
maj. & 1 & 0 & 1 & 1 & 0 &\\
\end{tabular}
\end{center}

By defining the collective attitudes that we ascribe to the Court in an aggregative manner, that is by voting by majority, we obtain the following set of propositions: $p$ is accepted (because of agent 1 and 2), $q$ is accepted (because of 1 and 3), the legal doctrine $p \wedge q \rightarrow r$ is accepted (because it is unanimously accepted), and $r$ is rejected. By viewing the rejection of $r$ as the acceptance of $\neg r$, as usual in this setting, we can easily see that the set of collective attitudes $\{p, q, p \wedge q \rightarrow r, \neg r\}$ is inconsistent. That means that, although each individual set of accepted propositions is consistent, the majority rule does not preserve consistency at the collective level.\par
List and Pettit \cite{ListPettitEconPhil2002} argued that the doctrinal paradox exhibits, besides the irrational outcome, a \emph{dilemma} between a \emph{premise-based} and a \emph{conclusion-based} reading of the majoritarian aggregation. The premise-based reading let the individuals vote on the so called premises $p$ and $q$, then it collectively infers $p \wedge q$ and finally concludes, by $p \wedge q \rightarrow r$, that $r$ is the case.
By contrast, the conclusion-based reading let each individual draw the conclusions by reasoning autonomously on the propositions at issue, then it aggregates the sole conclusions, here $r$, and in this case $r$ is rejected.\par 
For the present applications, three points are worth noticing. 
Firstly, the notion of consistency that is not preserved by the majority rule is the notion of consistency defined with respect to classical logic. It is therefore interesting to investigate whether there are meaningful notions of non-classical consistency that are preserved by the majority rule.\par
Secondly, the distinction between the premise and the conclusion based readings shows that in the doctrinal paradox there are inferences that are performed at the individual level and inferences that are performed at the collective level. In the premise-based reasoning, once $p$, $q$, and $p\wedge q \rightarrow r$ are accepted, the inference that draws $r$ is performed only by a minority of individuals: indeed, this reasoning step is performed only at the collective level on the propositions that have been accepted by majority.
It is then interesting to investigate whether it is possible to make distinction between inferences performed at the individual level and inferences performed at the collective level visible, by means of the logical modelling.\par
Finally, the doctrinal paradox involves a number of individual and collective propositional attitudes such as individual and collective beliefs (concerning whether the confession was coerced) obligations (e.g. the legal doctrine), and actions (e.g. the revision of the trial). For that reason, a proper treatment of the reasoning principles for those attitudes shall be provided. In the remainder of this paper, I will develop a model of collective attitudes that addresses the previous points. 


\section{A model of Judgment Aggregation}\label{sec:ja}

We introduce the basic definitions of the judgement aggregation (JA) setting \cite{ListPuppe2009,EndrissEtAlJAIR2012}, which provides the formal understanding of the aggregative view of collective attitudes \cite{ListErkenntnis2014}. I slightly rephrase the definitions for the present application, by referring to a variable logic $L$ which will be instantiated by a number of logics. Moreover, the definition are presented ``syntactically'', by referring to the derivability relation $\t_{L}$ of a given logic $L$, rather than ``semantically'', as it is usual in JA, in order to focus on the inference principles. In case $L$ is classical logic, the definitions match the standard presentation of JA \cite{ListPuppe2009}. \par


Let $N$ be a (finite) set of agents. Assume that $|N| = n$ and, as usual, $n \geq 3$ and $n$ is odd. An \textit{agenda} $\Phi_{L}$ is a (finite) set of propositions in the language $\mathcal{L}_L$ of a given logic $L$ that is closed under complements, i.e. non-double negations, cf. \cite{EndrissEtAlJAIR2012}. Moreover, we assume that the agenda does not contain tautologies or contradictions, cf. \cite{ListPuppe2009}.\par
A \textit{judgement set} $J$ is a subset of $\Phi_{L}$ such that $J$ is (wrt $L$) \textit{consistent} ($J \nvdash_{L} \bot$), \textit{complete} (for
all $\phi \in \Phi_{L}$, $\phi \in J$ or $\neg \phi \in J$) and \textit{deductively closed} (if $J \t_{L} \phi$ and $\phi \in \Phi_{L}$, $\phi
\in J$). 
Denote by $J(\Phi_{L})$ the set of all judgement sets on $\Phi_{L}$. A \textit{profile} of judgements sets $\textbf{J}$ is a vector $(J_{1}, \dots, J_{n})$, where $n = |N|$. An \textit{aggregator} is then a function $F: J(\Phi_{L})^{n} \to \mathcal{P}(\Phi_{L})$. The codomain of $F$ is the powerset $\mathcal{P}(\Phi_{L})$, therefore we are admitting possibly inconsistent judgments sets. 
 Let $N_{\phi} = \{i \mid \phi \in J_{i} \}$, the majority rule is the function $M: J(\Phi_{L})^{n} \to \mathcal{P}(\Phi_{L})$ such that $M(\textbf{J}) = \{\phi \in \mathcal{X}_{L} \mid |N_{\phi}| > \frac{n+1}{2}\}$.\par
In JA, the collective set $F(\bf{J})$ is also assumed to be consistent, complete, and deductively closed wrt. $L$. We say that the aggregator $F$ is consistent, complete, or deductively closed, if $F(\bf{J})$ is, for every $\textbf{J}$.  The preservation of consistency and completeness with respect to classical logic under any profile defines the standard notion of \emph{collective rationality} in JA. The seminal result in JA can be phrased by saying that the majority rule is not collectively rational \cite{ListPettitEconPhil2002}. That means that there exists an agenda and a profile of judgment sets such that $M(\bf{J})$ is not consistent. A significant example is the doctrinal paradox that we have previously encountered. 
The case of the majority rule generalises to a theorem that applies to any aggregation procedure that satisfies a number of desirable conditions, cf \cite{ListPuppe2009}; that is, many worthy aggregation procedures fail in preserving consistency.\par
We formally define the distinction between the premise-based reading and the conclusion-based reading by defining two aggregation procedures. Suppose that the agenda $\Phi$ is partitioned into two disjoint sets $\Phi^{p}$, the premises of the agenda, and $\Phi^{c}$, its conclusions. The \emph{premise-based} procedure is a function $F$ that firstly aggregates the premises by majority, obtaining the set $S$ of accepted premises, then it infers from $S$ the conclusions $\phi \in \Phi^{c}$; i.e. the output of $F$ is $F(\textbf{J}) \cup \{\phi \mid S \t \phi\}$. The \emph{conclusion-based} procedure simply votes by majority on the conclusions $\phi \in \Phi^{c}$, i.e. those formulas that are inferred individually.\par

\section{Background on Substructural Logics}

We place our analysis in the realm of substructural logics \cite{Paoli2002,Restall2002} to show that the majority rule is in this case quite surprisingly consistent. Substructural logics are a family of logics that reject the (global) validity of the classical structural principle of contraction (C) and weakening (W).\footnote{The principles are labelled \emph{structural} as they correspond to the structural rules of the classical sequent calculus \cite{Paoli2002}. Another structural principle is the \emph{exchange}, which entails the commutativity of the conjunction and of the disjunction.} This principles are captured in classical logic by the axioms (W) $\phi \rightarrow (\psi \rightarrow \phi)$ and (C) $(\phi \rightarrow (\phi \rightarrow \psi)) \rightarrow (\phi \rightarrow \psi)$. Rejecting (W) amounts to preventing the monotonicity of the entailment, while rejecting (C) blocks the possibility of identifying several copies of the assumptions when drawing inferences.

An important family of substructural logics is that of \emph{relevant logics} which are traditionally motivated in philosophical logic by the aim of capturing informative inferences \cite{MasoloPorelloAIC2015,Mares2004}.  Another important family of substructural logics is that of \emph{linear logics}, which are motivated mainly as logics of computation \cite{Girard1995, Abramsky1993}. The linear implication, denoted by $\implies$, keeps track of the amount of formulas actually used in the deduction: for instance,  \emph{modus ponens} $\phi, \phi \implies \psi \t \psi$ is valid only if the right amount of assumptions is given, so that $\phi, \phi, \phi \implies \psi \not\t \psi$. Resource-sensitivity is thus achieved by rejecting (C) and preventing the identification of two occurrences of $\phi$. 
The resource-sensitivity of this logics has been used in applications to a number of topics in knowledge representation and multiagent systems such as planning \cite{KanovichVauzeilles2001}, preference representation and resource allocation \cite{HarlandWinikoffAAMAS2002, PorelloEndrissKR2010,PorelloEndrissECAI2010}, social choice theory \cite{PorelloIJCAI2013}, action modelling \cite{BorgoEtAlFOIS2014,PorelloTroquardJANCL2015}.\par

A crucial observation that motivates the use of substructural logic for modelling collective attitudes is that, by rejecting (W) and (C), we are led to split the classical connectives into two classes: the \emph{multiplicatives} and the \emph{additives}.\footnote{In the tradition of relevant logics, the distinction between multiplicatives and additives corresponds, respectively, to the distinction between \emph{intensional} and \emph{extensional} connectives, cf. \cite{Paoli2002}.} For instance, the classical conjunction $\wedge$ splits into two distinct operators: the multiplicative $\otimes$ (``tensor'') and the additive $\with$ (``with'') with distinct operational meaning \cite{Girard1987,Girard1995}. The inferential behaviour of the two conjunctions in fact differs: one can prove that $\gamma \implies  \phi, \delta \implies \psi \t \gamma \otimes \delta \implies \phi \otimes \psi$  and that $\gamma \implies \phi, \gamma \implies  \psi \t \gamma \implies \phi \with \psi$, that is, the tensor case demands the combination of hypotheses $\gamma \otimes \delta$, whereas  the with case requires that the same hypothesis $\gamma$ has been used to infer $\phi$ and $\psi$. 
In presence of (W) and (C), $\otimes$ and $\with$ are provably equivalent, therefore they are indiscernible in classical logic. For this reason, the classical conjunction is more powerful and permits both inferential patterns. Analogous distinction can be made for disjunctions.\par
For the purpose of this paper, the distinction between multiplicatives and additives is curiously related to an important distinction between the truth makers of a proposition in an aggregative setting: we will see that a multiplicative formula $\phi \otimes \psi$ is made true by two possibly different coalitions of agents, one that supports $\phi$ and one that supports $\psi$, whereas $\phi \with \psi$ shall be made true by a single coalition of agents that supports both $\phi$ and $\psi$, cf. \cite{PorelloIJCAI2013}. This distinction is crucial for the preservation of consistency under the majority rule and we shall use it for distinguishing the reasoning steps that are performed at the individual level from those performed at the collective level.\par
The basic logic that we are going to use throughout this paper is the intuitionistic version of multiplicative additive linear logic.\footnote{Linear logic is capable of retrieving the structural rules in a controlled manner by means of the exponential operators \cite{Girard1987,Girard1995}. Here we confine ourselves to the fragment of multiplicative additive intuitionistic linear logic (exponential-free). We also leave a proper comparison with the families of substructural and relevant logics for future work.}
The motivation for using the \emph{intuitionistic} version is mainly technical: intuitionistic linear logic allows for an easy Kripke-like semantics which in turns provides an easy way to define the semantics of modalities. In principle, the analysis of collective attitudes of this article can be performed by means of the classical version of linear logic. In particular, intuitionistic and classical logic do no differ with respect to the aggregation of propositions under the majority rule \cite{PorelloECSI2014}. For this reason, endorsing the intuitionistic restriction does not entail a substantial loss of generality of the approach in this case. We preferred, for the sake of simplification, to stick to the intuitionistic version of linear logic, leaving a proper definition of the semantics of modalities for classical linear logic for a future dedicated work.\footnote{The semantics of classical linear logic is provided by means of \emph{phase spaces}, \cite{Troelstra1992}. In order to define modalities for classical linear logic, we have to extend phase spaces by introducing neighbourhood functions on those structures and adapting the completeness proof for this case. This requires a dedicated work.}
Moreover, another possibility is to work with a fragment of \emph{relevant logic} which admits the same type of semantics \cite{Paoli2002,Restall2002}.\par
 The only addition to intuitionistic linear logic is the adjunction of a \emph{strong} negation $\sim$ which is motivated for modelling agents' acceptance and rejection of propositions in the agenda in a symmetric manner, as it is usual in the setting of JA.\footnote{The intuitionistic negation, which is definable by means of the implication and the symbol for absurdity (i.e. $A \implies \bot$), does not capture the intended meaning of the rejection of an item in the agenda.}
Linear logic with strong negation has been studied in particular in \cite{wansing1993informational}. We term this system by $\ILLs$.

The language of $\ILLs$, $\mathcal{L}_{\ILLs}$, then is defined as follows. Assume a set of propositional atoms $Atom$, $p \in Atom$, then:\footnote{We use the notation introduced by Girard \cite{Girard1987}, denoting the additive conjunction and disjunction by $\with$ and $\oplus$ respectively. In \cite{wansing1993informational} and \cite{kamide2006linear}, additives are denoted by means of the classical notations $\wedge$ and $\vee$.} 

\[\phi ::= p \mid \textbf{1} \mid \top \mid \bot \mid \sim \phi \mid \phi \otimes \phi \mid \phi \implies \phi \mid \phi \with \phi \mid \phi \oplus \phi\] 

By splitting connectives into additives and multiplicatives, the units of the logic split as well: we have a multiplicative constant for truth $\textbf{1}$, which is the neutral element for $\otimes$, and an additive constant for truth $\top$, which is neutral for $\with$. In the intuitionistic case, we only have one constant for falsity $\bot$ \cite{wansing1993informational,kamide2006linear}. 

\subsection{Hilbert system for \ILLs}
\label{sec:hilbert}

The Hilbert system for intuitionistic (multiplicative additive) linear logic has been basically developed in \cite{AvronTCS1988}, see also \cite{Troelstra1992,Paoli2002}. We extend \ILL to \ILLs by adding a strong negation as in \cite{wansing1993informational}.
We define the Hilbert-style calculus by introducing a list of axioms in Table 1 and by defining the following notion of deduction. The concept of deduction of linear logic requires a tree-structure in order to handle the hypothesis in the correct resource-sensitive way. Notice that, by dropping weakening and contraction, we have to consider \emph{multisets} of occurrences of formulas. This entails that, in particular, in linear logic, every \emph{modus ponens} application (cf. $\implies$-rule) applies to a single occurrence of $\phi$ and of $\phi \implies \psi$.  

\begin{table}
\label{hills}
\begin{enumerate} \setlength\itemsep{0.5pt}
\item $\phi \implies \top$
\item $\bot \implies \phi$
\item $\phi \implies \phi$
\item $ (\phi \implies \psi) \implies ((\psi \implies \gamma) \implies (\phi \implies \gamma))$
\item $ (\phi \implies (\psi \implies \gamma)) \implies (\psi \implies (\phi \implies \gamma))$
\item $ \phi \implies (\psi \implies \phi \otimes \psi)$
\item $ (\phi \implies (\psi \implies \gamma)) \implies (\phi \otimes \psi \implies \gamma)$
\item $ \textbf{1}$
\item $ \textbf{1} \implies (\phi \implies \phi)$
\item $ (\phi \with \psi) \implies \phi$
\item $ (\phi \with \psi) \implies \psi$
\item $ ((\phi \implies \psi) \with (\phi \implies \gamma)) \implies (\phi \implies \psi \with \gamma)$
\item $ \phi \implies \phi \oplus \psi$
\item $ \psi \implies \phi \oplus \psi$
\item $ (\phi \implies \gamma) \with (\psi \implies \gamma) \implies (\phi \oplus \psi \implies \gamma)$
\item $ \phi \implies \sim \sim \phi$
\item $ \sim \sim \phi \implies \phi$
\item $ \sim \textbf{1} \implies \phi$
\item $ \sim \top \implies \phi$
\item $ \phi \implies \sim \bot$
\item $ \sim (\phi \implies \psi) \implies~ \sim \phi \otimes \psi$
\item $\phi \otimes \sim \psi \implies~ \sim (\phi \implies \psi)$
\item $\sim (\phi \with \psi) \implies~ \sim \phi~\oplus~\sim \psi$
\item $\sim \phi \oplus \sim \psi \implies~ \sim (\phi \with \psi)$
\item $\sim (\phi \oplus \psi) \implies~ \sim \phi~\with~\sim \psi$
\item $\sim \phi~ \with \sim \psi~ \implies~ \sim (\phi \oplus \psi)$
\item $\sim (\phi \otimes \psi)~ \implies~  \sim \phi~\otimes~\sim \psi$
\item $\sim \phi~\otimes~\sim \psi ~ \implies~ \sim(\phi \otimes \psi)$  
\end{enumerate}
\caption{Axioms of \ILLs}
\end{table}

The notion of proof in the Hilbert system is then defined as follows.

\begin{definition}[Deduction in \ILLs]\label{def:derivation-ill}
A \emph{deduction tree} $\mathcal{D}$ in \ILLs  is inductively constructed as follows. (i)~The leaves of the tree are assumptions $\phi \t \phi $, for $\phi \in \mathcal{L}_{\ILLs}$, or
$\t \psi$ where $\psi$ is an axiom in Table 1 (base cases).\\
 (ii) We denote by $\stackrel{\mathcal{D}}{\Gamma \t \phi}$ a deduction tree with conclusion $\Gamma \t  \phi$. If $\mathcal{D}$ and $\mathcal{D}'$ are deduction trees, then the following are deduction trees (inductive steps). 

\begin{center}
\begin{tabular}{ccc}
$
\AC{\stackrel{\mathcal{D}}{\Gamma \t  \phi}}
\AC{\stackrel{\mathcal{D}'}{\Gamma' \t  \phi \implies \psi}} \rl{$\implies$-rule}
 \BC{\Gamma, \Gamma' \t  \psi}
 \dip
$

&

\hspace{10pt}
& 

$
\AC{\stackrel{\mathcal{D}}{\Gamma \t \phi}}
\AC{\stackrel{\mathcal{D}'}{\Gamma \t \psi}} \rl{$\with$-rule}
\BC{\Gamma \t \phi \with \psi}
\dip
$
\\
\end{tabular}
\end{center}

\end{definition}

We say that $\phi$ is derivable from the multiset $\Gamma$ in \ILLs,  denoted by $\Gamma \t \phi$, iff there exists a deduction tree with conclusion $\Gamma \t \phi$. The semantics of \ILLs is presented in Appendix \ref{app:sem}.

\section{A possibility result for substructural logics}\label{sec:possibility}

By assessing the outcome of the majority rule with respect to classical logic, collective rationality fails. In \ILLs the situation radically changes. Recall that the \emph{multiplicative connectives} are $\otimes$ and $\implies$ and the \emph{additive connectives} of \ILLs are $\with$ and $\oplus$. We denote the additive implication by $\phi \rightsquigarrow \psi$ and we define it by $\sim \phi \oplus \psi$. We label by \aILLs the additive fragment of \ILLs whose Hilbert system is obtained by taking all the axioms that contain additive formulas plus the $\with$-rule (therefore we drop the $\implies$-rule).\par
If $\phi$ is a formula in classical logic, we define its \emph{additive translation} $\phi'$ as follows: $p' = p$, for $p$ atomic, $(\neg \phi)' =\; \sim \phi'$, $(\phi \wedge \psi)' = \phi' \with \psi'$ and $(\phi \vee \psi)' = \phi' \oplus \psi'$. The additive translation of the outcome of the majority rule is then simply $M(\textbf{J})' = \{\phi' | \phi \in M(\textbf{J})\}$.\par

As it is standard in JA, we assume that the individuals reason by means of classical logic (\CL), however we assess the outcome of the majority rule with respect to the logic \aILLs, by means of the additive translation. This returns a voting procedure from profiles of judgments defined on a certain classical agenda $\Phi_{\CL}$ to sets of judgments defined in \aILLs.\footnote{Observe that Theorem \ref{th:aills} does not depend on the logic being intuitionistic, it holds for a classical version of additive linear logic \cite{PorelloIJCAI2013}. We present it for \aILLs, since it is the logical setting of this paper.}

\begin{theorem}\label{th:aills}
For any agenda $\Phi_{\CL}$ in classical logic, suppose that for each $i \in N$, $J_{i}$ is consistent and complete wrt classical logic, then for any profile $\textbf{J}$, $M(\textbf{J})'$ is consistent and complete wrt \aILLs. 
\end{theorem}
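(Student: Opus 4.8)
The plan is to reduce the claim to two separate facts: that majoritarian outcomes of consistent–complete classical profiles are always \emph{completable} classically (i.e.\ contained in some maximal consistent set), and that the additive translation faithfully transports classical consistency and completeness into \aILLs. The key observation driving everything is that \aILLs, restricted to the additive connectives $\with$, $\oplus$, $\sim$, behaves essentially like a fragment of classical logic on \emph{single} coalitions of agents: because the additive rules ($\with$-rule, and the axioms 10--17, 23--26) never combine two distinct multisets of hypotheses, a derivation in \aILLs from a set $S$ of additive formulas can always be mimicked by a classical derivation from the classical ``de-translation'' of $S$. So I would first prove a \emph{translation lemma}: for classical formulas $\phi_1,\dots,\phi_k,\psi$, we have $\phi_1,\dots,\phi_k \vdash_{\CL}\psi$ implies $\phi_1',\dots,\phi_k' \vdash_{\aILLs}\psi'$ is \emph{false} in general (contraction/weakening fail), but the converse-style statement we actually need — that $\aILLs$-inconsistency of $M(\textbf{J})'$ would force classical inconsistency of something — does hold. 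Concretely I want: if $\Gamma' \vdash_{\aILLs}\bot$ for a \emph{set} (not multiset) $\Gamma'$ of translated literals and conjunctions thereof, then $\Gamma \vdash_{\CL}\bot$.

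With that lemma in hand, the argument is short. Suppose for contradiction that $M(\textbf{J})'$ is not consistent in \aILLs, i.e.\ $M(\textbf{J})' \vdash_{\aILLs}\bot$. Any \aILLs-derivation uses only finitely many leaves, hence only finitely many formulas $\phi_1',\dots,\phi_m'$ from $M(\textbf{J})'$, each occurring with some multiplicity; but since all the additive rules copy the ambient hypothesis multiset rather than splitting it, I can collapse multiplicities and view the derivation as one from the \emph{set} $\{\phi_1',\dots,\phi_m'\}$. By the translation lemma, $\{\phi_1,\dots,\phi_m\}\vdash_{\CL}\bot$. Now $\phi_i \in M(\textbf{J})$ means a strict majority of agents accept $\phi_i$; with $m$ such formulas and $n$ agents ($n$ odd), a counting/pigeonhole argument shows that some single agent $i_0$ accepts all of $\phi_1,\dots,\phi_m$ simultaneously (the majorities for any \emph{finite collection} of agenda items picked out by the derivation have nonempty intersection — here is exactly the place where we must be careful, see below). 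Then $\{\phi_1,\dots,\phi_m\}\subseteq J_{i_0}$, so $J_{i_0}\vdash_{\CL}\bot$, contradicting consistency of $J_{i_0}$. For completeness of $M(\textbf{J})'$: given $\phi\in\Phi_{\CL}$, since each $J_i$ is complete, for every agent either $\phi\in J_i$ or $\neg\phi\in J_i$, so $N_\phi$ and $N_{\neg\phi}$ partition $N$; as $n$ is odd one of them is a strict majority, hence $\phi\in M(\textbf{J})$ or $\neg\phi\in M(\textbf{J})$, and applying $(\cdot)'$ gives $\phi'\in M(\textbf{J})'$ or $\sim\!\phi'\in M(\textbf{J})'$.

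The main obstacle is the pigeonhole step, and it is genuinely the heart of why additives (and not multiplicatives) make this work. It is \emph{not} true that an arbitrary finite subset of $M(\textbf{J})$ is held by a common agent — that would be the much stronger (and false) claim that $M(\textbf{J})$ itself is a judgment set. What is true is the weaker statement tailored to \aILLs-derivations: the derivation of $\bot$ in the additive fragment, read bottom-up, is essentially a single classical derivation, so it really only ``tests'' whether one coalition's accepted set is classically consistent. I would make this precise by an induction on the \aILLs-derivation showing that from premises $\Gamma'$ one can derive $\bot$ only if $\Gamma$ (the de-translation) is classically inconsistent — equivalently, $\aILLs$ proves $\sim(\phi_1'\with\dots\with\phi_m')$ iff $\CL$ proves $\neg(\phi_1\wedge\dots\wedge\phi_m)$ — and then invoking that $\phi_1\wedge\dots\wedge\phi_m$, being a conjunction of agenda items each in a strict majority, is itself accepted by a majority only if\dots\ no: rather, I observe directly that a classically inconsistent $\{\phi_1,\dots,\phi_m\}$ cannot have all its members in strict majorities with a common witness, because if no common agent existed then some agent would reject some $\phi_i$, but that does not immediately give the contradiction either. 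The clean route, and the one I would actually write, is: $\CL$-inconsistency of $\{\phi_1,\dots,\phi_m\}$ means $\phi_1,\dots,\phi_{m-1}\vdash_{\CL}\neg\phi_m$; by deductive closure and consistency of each $J_i$, no $J_i$ contains all of $\phi_1,\dots,\phi_m$; hence for each $i$ there is some $j(i)\le m$ with $\phi_{j(i)}\notin J_i$, i.e.\ (by completeness) $\neg\phi_{j(i)}\in J_i$; summing over $i$, some fixed $\phi_j$ has $\neg\phi_j\in J_i$ for at least $n/m$ agents — which is \emph{not} yet a majority. So the counting must instead be run against the fact that each $\phi_j$ is in a \emph{strict} majority: $|N_{\phi_j}| \ge (n+1)/2$ for all $j$, so $\sum_j |N\setminus N_{\phi_j}| \le m(n-1)/2$, while if every agent misses some $\phi_j$ then $\sum_j|N\setminus N_{\phi_j}|\ge n$; these are compatible, so this crude bound fails and one genuinely needs the derivation-structural argument rather than pure counting. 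I therefore expect the real work to be the structural lemma that \aILLs-derivability of $\bot$ from translated agenda formulas collapses to classical derivability from a \emph{single} agent's judgment set, exploiting that additive rules never fuse hypothesis multisets; everything else is bookkeeping.
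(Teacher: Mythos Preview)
Your proposal has a genuine gap: you never extract the precise structural fact that makes the additive fragment work, and as a result you get stuck on a counting argument for general $m$ that, as you yourself observe, cannot succeed.

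The observation you need is much sharper than ``additive rules never fuse hypothesis multisets''. In \aILLs the only inference rule is the $\with$-rule, which requires the \emph{same} hypothesis multiset $\Gamma$ on both premises; the leaves are either axioms $\vdash\psi$ or identities $\phi\vdash\phi$. Hence by a trivial induction every derivable sequent $\Gamma\vdash\psi$ has $|\Gamma|\le 1$. Equivalently, every minimally \aILLs-inconsistent set has cardinality at most $2$ (this is the median-property characterisation the paper cites). So if $M(\textbf{J})'$ is \aILLs-inconsistent, there exist $\phi',\psi'\in M(\textbf{J})'$ with $\{\phi',\psi'\}\vdash_{\aILLs}\bot$. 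Now the pigeonhole is trivial: $|N_\phi|\ge(n+1)/2$ and $|N_\psi|\ge(n+1)/2$ force $N_\phi\cap N_\psi\neq\emptyset$, so some agent $i$ has $\phi,\psi\in J_i$. Your translation observation (\aILLs-inconsistency implies \CL-inconsistency, since \CL\ proves strictly more) then finishes the contradiction.

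Your writeup reaches the right neighbourhood (``additive rules never fuse hypothesis multisets'', ``collapses to a single agent's judgment set'') but never states the bound $m\le 2$, and your proposed ``structural lemma'' conflates two separate steps: the derivation-size bound, and the majority-overlap argument. The repeated attempts at a counting inequality for arbitrary $m$ are a dead end; the whole point of restricting to additives is that $m$ is not arbitrary. Once you write down the one-line induction showing $|\Gamma|\le 1$ in \aILLs, the proof is three sentences. Your completeness argument is fine.
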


\begin{proof}
The proof is based on the fact that, in additive linear logic, every minimally inconsistent  set has cardinality 2.\footnote{Recall that a minimally inconsistent set $Y$ is an inconsistent set that does not contain inconsistent subsets.} This follows from the fact that every deduction in the additive fragment of linear logic contains exactly two formulas $\phi, \psi$, as it has been noticed in \cite{HughesGlabbeek2003}.\footnote{This condition surprisingly corresponds to the \emph{median property} introduced by \cite{NehringPuppeJET2007}. See also \cite{ListPuppe2009}. This condition in fact characterizes agendas for which the majority rule returns consistent sets of judgments.}
Suppose then by contradiction that $M(\textbf{J})$ is inconsistent. Then, there exists a set of two formulas $\{\phi,\psi\} \subseteq M(\textbf{J})$ such that $\{\phi, \psi\}$ is inconsistent. Since both $\phi$ and $\psi$ are in  $M(\textbf{J})$, this entails that $|N_{\phi}| \geq \frac{n+1}{2}$ and $|N_{\psi}| \geq \frac{n+1}{2}$, which entails that there exists an individual $i$ such that $\phi$ and $\psi$ are in $J_{i}$. If $\{\phi, \psi\}$ is inconsistent wrt \aILLs, then $\{\phi, \psi\}\t \bot$, therefore $\{\phi, \psi\}$ is also inconsistent wrt  classical logic, since classical logic derives more theorems than \aILLs. 
Therefore, $J_{i}$ is inconsistent, against the assumption of (classical) consistency of individual judgment sets. Completeness of $M(\textbf{J})'$ follows by noticing that the majority rule always preserves completeness \cite{EndrissEtAlJAIR2012}. 
\end{proof}

Theorem \ref{th:aills} shows that, although the majority rule does not preserve the notion of consistency of classical logic, it is still capable of preserving the notion of consistency defined by means of \aILLs.\footnote{A number of extensions of \aILLs for which consistency is preserved by majority is discussed in \cite{PorelloIJCAI2013}.}\par
Observe that the notion of consistency of \aILLs is \emph{weaker} than the notion of consistency of classical logic. Since in classical logic we can prove more theorems than in \aILLs, if a set of formulas is consistent in classical logic --- i.e. we cannot prove $\bot$--- then its translation in \aILLs is consistent. By contrast, if a set of formulas is consistent in \aILLs ---i.e. it cannot prove $\bot$--- that does not mean that its classical counterpart is consistent, since classical logic is more powerful.\par

Take the case of the doctrinal paradox that we have previously discussed. The agenda of propositions in classical logic is given by the following set: $\{p, q, p \wedge q,  (p \wedge q) \rightarrow r, r, \neg p, \neg q, \neg (p \wedge q), \neg ((p \wedge q) \rightarrow r), \neg r\}$. The salient part of the profile of judgments $\textbf{J}$ involved in the doctrinal paradox is represented in Table 2.

\begin{table}\label{dd}
\begin{center}
\begin{tabular}{cccccccccc}
 & $p$ & $p \wedge q$ & $q$ & $\neg p$ & $\neg (p \wedge q)$ & $\neg q$ & $ (p \wedge q) \rightarrow r $ & $r$ & $\neg r$ \\ 
 \hline
 1 & yes & yes & yes & no & no & no & yes & yes & no\\
 2 &  no & no & yes  & yes & yes & no & yes & no & yes\\ 
 3 & yes & no & no  & no & yes & yes & yes & no & yes\\   
 \hline
 maj. & yes & no & yes  & no & yes & no & yes & no & yes\\
\end{tabular}
\end{center}
\caption{The doctrinal paradox}
\end{table}

In classical logic, the outcome of the majority rule $M(\textbf{J})$ is inconsistent since, for instance, the acceptance of $p$ and of $q$ entails the acceptance of $p \wedge q$ which contradicts $\neg (p \wedge q)$ (i.e. $p, q, \neg (p \wedge q) \t \bot$). Moreover, the premise-based reading and the conclusion-based reading provide contradictory outcomes. The premise-based reading accepts $p$, $q$, and $p \wedge q \rightarrow r$, then infers $r$; by contrast, the conclusion-based procedure just votes on the conclusion ($r$ and $\neg r$) and accepts $\neg r$.\par
The additive translation of $M(\textbf{J})$ is $\{p, q, \sim (p \with q), (p \with q) \rightsquigarrow r, \sim r\}$. Firstly, notice that in \aILLs, we cannot infer the \emph{additive} conjunction $p \with q$ from $p$ and $q$, we could only infer the multiplicative conjunction $p \otimes q$ (by Axiom 6). Thus, the (multi)set $p, q, \sim (p \with q)$ is not inconsistent in \aILLs.\footnote{In \ILLs $p \otimes q, \sim (p \with q) \not\t \bot$, therefore, in \aILLs, $p,q, \neg (p \wedge q) \not\t \bot$.}\par
Moreover, the premise-based and the conclusion-based reading do not provide contradictory outcomes. The conclusion-based reading provides $\sim r$, whereas the premise-based reading does not infer any conclusion now, since $p, q, p \with q \rightsquigarrow r \not\t r$ in \ILLs.
Therefore, the analysis of the doctrinal paradox in terms of \aILLs takes the conclusion-based horn of the dilemma. The plain majority rule provides the same outcome as the conclusion-based procedure.\footnote{Two arguments are usually raised against the conclusion-based procedure: Firstly, it is usually not complete, secondly the reasons of the collective decisions, i.e. the premises, are invisible in the collective set. 
For a definition of complete conclusion-based procedure, see \cite{pigozzi2009complete}. The solution provided by means of substructural logics shows that the plain majority rule provides the conclusion-based solution, therefore, it is both complete and it shows in the collective sets the reason for the collective choice.}\par

We see how the distinction between additives and multiplicatives of substructural logics allows for distinguishing inferences performed at the individual level and inferences performed at the collective level. Firstly, notice that since the Hilbert system of \ILLs is sound and $M(\textbf{J})'$ is consistent, the following fact holds.

\begin{fact}\label{fact} The deductive closure in \ILLs of $M(\textbf{J})'$, $Cl(M(\textbf{J})'$), is consistent.
\end{fact}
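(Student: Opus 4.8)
The plan is to derive Fact~\ref{fact} directly from Theorem~\ref{th:aills} together with soundness of the Hilbert system for \ILLs. The statement asserts that the deductive closure $Cl(M(\textbf{J})')$ is consistent, i.e. $Cl(M(\textbf{J})') \nvdash_{\ILLs} \bot$. First I would observe that $M(\textbf{J})'$ is a set of formulas of \aILLs, hence of \ILLs, and that by Theorem~\ref{th:aills} it is consistent: $M(\textbf{J})' \nvdash_{\aILLs} \bot$. Since \aILLs\ is the additive fragment of \ILLs\ and $M(\textbf{J})'$ consists only of additive formulas (the additive translations of classical formulas), I would first need to note that consistency of $M(\textbf{J})'$ in \aILLs\ coincides with consistency in \ILLs. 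This is because any \ILLs-deduction of $\bot$ from additive formulas can only use the additive axioms and the $\with$-rule — the $\implies$-rule cannot introduce or eliminate a multiplicative implication in a way that helps, since there are none among the hypotheses; more carefully, one appeals to the conservativity of \ILLs\ over its additive fragment, which is a standard property of linear logic and is implicitly used throughout the paper.

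The key step is then purely a matter of the definition of deductive closure and cut/transitivity of derivability. Suppose for contradiction that $Cl(M(\textbf{J})') \vdash_{\ILLs} \bot$. By definition of deductive closure, every $\chi \in Cl(M(\textbf{J})')$ satisfies $M(\textbf{J})' \vdash_{\ILLs} \chi$. A derivation of $\bot$ from $Cl(M(\textbf{J})')$ uses only finitely many such $\chi_1,\dots,\chi_k$ as leaves; replacing each leaf $\chi_j \vdash \chi_j$ by the derivation tree witnessing $M(\textbf{J})' \vdash_{\ILLs} \chi_j$ (and collecting the resulting multiset of hypotheses) yields a derivation of $\bot$ from some multiset drawn from $M(\textbf{J})'$. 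Here one must be slightly careful: in the resource-sensitive setting the substitution may duplicate hypotheses, so the resulting multiset has the form $\Gamma$ with each formula occurring possibly several times but every occurrence being an element of $M(\textbf{J})'$. Hence $\Gamma \vdash_{\ILLs} \bot$ with $\mathrm{set}(\Gamma) \subseteq M(\textbf{J})'$, which means $M(\textbf{J})'$ is inconsistent in the sense used in the JA definitions (where $J \nvdash_L \bot$ ranges over the underlying multiset), contradicting Theorem~\ref{th:aills}.

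The main obstacle, and the point most worth spelling out, is exactly this last bookkeeping about multisets: the notion of consistency used in the judgment-aggregation section, $J \nvdash_{L} \bot$, is stated for a set $J$, whereas \ILLs-derivations consume a multiset of occurrences. One needs the (routine but non-vacuous in a contraction-free logic) observation that for the purpose of deriving $\bot$ it makes no difference how many copies of each hypothesis are available in $M(\textbf{J})'$, because the additive formulas in play and the additive rules do not create genuine resource conflicts — equivalently, one restricts attention to the additive fragment where, as noted in the proof of Theorem~\ref{th:aills}, every derivation uses exactly two formulas, so at most two elements of $M(\textbf{J})'$ are ever needed and inconsistency would already give a minimally inconsistent pair inside $M(\textbf{J})'$. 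With that remark in place, Fact~\ref{fact} is immediate, which is presumably why the authors relegate it to a one-line "the following fact holds": it is the conjunction of soundness of the Hilbert system (so $\Gamma \vdash \bot$ is not vacuous) and Theorem~\ref{th:aills} (so no such $\Gamma \subseteq M(\textbf{J})'$ exists), packaged via the elementary fact that deductive closure does not enlarge the set of derivable consequences.
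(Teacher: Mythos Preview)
Your proposal is correct and considerably more detailed than the paper's own treatment, which is just the one-line remark preceding the Fact: ``since the Hilbert system of \ILLs is sound and $M(\textbf{J})'$ is consistent, the following fact holds.'' The paper's justification is semantic (soundness: a model of $M(\textbf{J})'$ is automatically a model of its \ILLs-deductive closure, so the closure cannot entail~$\bot$), whereas your main argument is syntactic (substituting derivations at the leaves to reduce $Cl(M(\textbf{J})')\vdash\bot$ to $\Gamma\vdash\bot$ with $\Gamma$ drawn from $M(\textbf{J})'$). Both are standard routes to the elementary point that deductive closure does not create new consequences.

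You go beyond the paper in flagging two issues it leaves implicit: the passage from \aILLs-consistency (which is what Theorem~\ref{th:aills} actually delivers) to \ILLs-consistency, and the set/multiset bookkeeping in a contraction-free setting. Your informal conservativity argument (``the $\implies$-rule cannot help since there are no multiplicative implications among the hypotheses'') is not quite right as stated for a Hilbert system --- axioms containing $\implies$ are always available as leaves --- but your fallback to the known conservativity of \ILLs over its additive fragment (via cut elimination and the subformula property for the sequent calculus) is the correct move, and that is precisely the substantive step the paper's one-liner elides.
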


\noindent
Henceforth, we can safely reason about $M(\textbf{J})'$ by means of full \ILLs.  
By Fact \ref{fact}, we can extend the majority rule $M(\textbf{J})'$, which is defined from profile in $J(\Phi_{\CL})^{n}$ to subsets of an agenda in \aILLs, to any agenda $\Phi_{\ILLs}$ of formulas of full \ILLs that contains the additive translation of the classical agenda $\Phi_{\CL}$.

\begin{equation}\label{eq:cl_maj}
Cl_{\ILLs}(M(\textbf{J})') = \{ \phi \in \Phi_{\ILLs} \mid M(\textbf{J})' \t \phi\}
\end{equation}

\noindent
By means of the additive translation, we are viewing the formulas that have been accepted by majority as additive formulas. That is, the inferences that have been performed autonomously by (a majority of) the individuals are visible in $Cl_{\ILLs}(M(\textbf{J})')$ as additive formulas. Multiplicatives then can only combine formulas that have already been accepted by majority. For this reason, we view multiplicatives as performing inferences at the collective level.\par
For instance, if we assume that $\Phi_{\ILLs}$ contains the formula $p \otimes q$, we can infer from the acceptance of $p$ and $q$ in $M(\textbf{J})'$, the multiplicative conjunction $p \otimes q$. The difference between $p \otimes q$ and $p \with q$ in this setting is that $p \with q$ requires a single majority of agents that support $p$ and $q$, in order to be collectively accepted. By contrast, $p \otimes q$ only requires that there exist two possibly distinct majorities of individuals, one that supports $p$ and one that supports $q$.\par
The reason why the analysis of the doctrinal paradox by means of substructural logics takes the conclusion-based solution is that the legal doctrine has been interpreted in additive terms by means of the formula $p \with q \rightsquigarrow r$. This formulation entails that $r$ is accepted only if a majority of individuals autonomously infer $r$.

\subsection{Collective reasoning}

In order to obtain the solution provided by the premise-based reading of the doctrinal paradox, as we argued, we have to introduce a reasoning step that is not performed by (a majority of) the individuals, that means a reasoning step that takes place only at the collective level. For this reason, the premise-based reading requires a multiplicative formulation. We can obtain the premise-based solution to the dilemma by dropping the additive formulation of the legal doctrine and replace it with a multiplicative formulation: $p \otimes q \implies r$.  Notice that, by keeping both the formulations of the legal doctrine, the inconsistency between accepting and rejecting $r$ is back. This fact shows why in classical logic, that does not make the distinction between multiplicatives and additives, the dilemma between the premise-based and the conclusion-based procedures is challenging and corresponds to the inconsistency of the majority rule.\par
Since the multiplicative formulation of the legal doctrine applies as step of reasoning performed only at the collective level, it has to be modelled as a constraint that is external to the agenda of propositions about which the agents express their votes.\footnote{The view of constraints that are external to the agenda is close to the approach in \cite{GrandiEndrissAI2013}.} 
We term this external constraints \emph{collective constraint} and we define a set $\Sigma$ of collective constraints as a set of formulas in $\mathcal{L}_{\ILLs}$. A collective constraint norms the way in which the group has to reason once a number of propositions is accepted by majority. Collective constraints do not belong to the individual judgments sets and they are not used to perform inferences at the individual level. 
We can in principle assume that collective constraints are subject to aggregation, namely, that the individuals decide by voting how the group is supposed to reason about the collectively accepted propositions. To model that, we assume that the individual judgments are partitioned into two sets: a subset $J_i$ of the agenda $\Phi_\CL$ (as usual) and a subset $S_i \subseteq \Sigma$ of collective constraints. We assume that, for each $i$, $S_i$ is consistent wrt \ILLs; moreover, the collective constraints are compatible with the judgments of the individuals, once they are interpreted as additive formulas: that is, we assume that $(J_i)' \cup S_i$ is consistent in \ILLs. 
Denote by $\textbf{S}$ the profile $(S_1, \dots, S_N)$ of individually accepted collective constraints. The majority rule $M(\textbf{S})$ is defined as usual from profiles of constraints to sets of constraints in $\Sigma$.\footnote{We are proposing to model the decision about the collective constraints by means of the majority rule, in order to meet the standard view of the doctrinal paradox, where the legal doctrine was in fact a matter of voting.} We define the following \emph{two steps} procedure.

\begin{definition}\label{def:twostep}
Let $\Phi_{\CL}$ an agenda in \CL and $\Phi_{\ILLs}$ an agenda in \ILLs that is composed of three disjoint sets: the additive translation of $\Phi_{\CL}$, a set of collective constraint $\Sigma$, and a set of conclusions $\Delta$. The two step procedure $T: J(\Phi_{\CL})^{n} \to \mathcal{P}(\Phi_{\ILLs})$ is defined as follows:

$$T(\textbf{J}) = \{\phi \mid \phi \in M(\textbf{J})'\; \text{or}\; \phi \in M(\textbf{S})\; \text{or}\; \phi \in \Delta\;\text{and}\; M(\textbf{J})', M(\textbf{S}) \t \phi \}$$
\end{definition}

\noindent
$T$ operates as follows: firstly, $T$ aggregates by majority the formulas in the individuals' agenda $\Phi_{\CL}$ (and translates them into additive formulas) and the collective constraints in $\Sigma$; secondly, $T$ infers possible conclusions in $\Delta$, by means of the logic \ILLs and the constraints in $M(\textbf{S})$.\footnote{Note that by construction, $T$ returns a set of formulas, not a multiset. $T(\textbf{J})$  is nonetheless evaluated wrt \ILLs by viewing it as a multiset with multiplicity 1. Although multisets have no straightforward interpretation in an aggregative setting, it is the device that is necessary to really drop contraction. Once we assume that formulas form sets, then $\{\phi \}$ and $\{\phi, \phi\}$ become indistinguishable and that would enable contraction, which entails $A \with B \implies A \otimes B$.
In principle, contraction is not harmful for the majoritarian aggregation, cf. \cite{PorelloIJCAI2013}. Thus we could add contraction and work with sets, rather then with multisets in this case. We preferred to assume multisets here, because they provide the standard means to define the Hilbert systems for linear logics.
We leave a treatment for logics enabling contraction in an aggregative setting for future work.}\par

%

If we do not assume any collective constraint,  i.e. $\Sigma = \emptyset$, then the procedure of Definition \ref{def:twostep} coincide with the procedure defined in Equation \ref{eq:cl_maj}. Thus, in that case, the consistency of $T$ is guaranteed. Unfortunately, by adding possibly arbitrary constraints in full \ILLs, consistency is not ensured any longer.\footnote{This is standard for premise-based procedures in the case individuals are permitted to vote on non-atomic propositions.} To guarantee the consistency of $T$ in case the constraints $\Sigma$ are expressed by means of any multiplicative formula, we have to assume that the following property holds: every subset of $(\Phi_{\CL})' \cup \Sigma$ that is minimally inconsistent in \ILLs has cardinality at most 2.\footnote{This is the median property instantiated for the logic \ILLs, that provides necessary and sufficient conditions for the consistency of the majority rule, cf. \cite{PorelloIJCAI2013}. See also \cite{PorelloEndrissJLC2014} for the use of the median property under external constraints.}\par
The premise-based reading of the doctrinal paradox can be obtained by instantiating $T$ as follows: $\Phi_{\CL} = \{p, q, \neg p, \neg q, p \wedge q, \neg (p \wedge q)\}$,  $\Sigma = \{ p \otimes q \implies r\}$, and $\Delta = \{r, \sim r, p \otimes q\}$.\footnote{Note that in this case $(\Phi_{\CL})' \cup \{ p \otimes q \implies r \}$ satisfies the median property.} By assuming the same profile as in Table 2 on the judgments $J_i$ and by assuming unanimity on the collective constraint, the outcome of $T$ is then $\{p, q, \sim (p \with q), p \otimes q, p\otimes q \implies r, r\}$. The collective set shows that $p \with q$ does not hold because there is no single majority of agents that supports both $p$ and $q$, $p \otimes q$ is accepted because there are indeed two majorities that support $p$ and $q$, and that is sufficient to apply the collective constraint $p \otimes q \implies r$ in $\Sigma$, and conclude by \emph{modus ponens} $r$.\par
It is possible to obtain the conclusion-based outcome by means of $T$ by simply replacing $\Sigma$ with $\Sigma' = \{p \with q \implies r\}$, that imposes the additive formulation of the legal doctrine.\footnote{Observe that in presence of weakening and contraction, as in classical logic, the two formulations of the legal doctrine are equivalent.} In this view, the additive formulation forces the conclusion-based reading, whereas the multiplicative constraint entails the premise-based reading. From this perspective, the additive formulation is more demanding in terms of the cohesion of the group: it requires that the \emph{same} majority accepts the premises in order to draw the conclusion, whereas the premise-based reading only requires that there exist possibly distinct majorities on all the premises.\par

 \ILLs provides an interesting propositional base to model collective attitudes. In the next Section, we introduce a number of modal extensions of \ILLs in order to model the collective propositional attitudes involved in the doctrinal paradox.

\section{Modal logics for modelling collective attitudes}

We present three modal extensions of \ILLs to reason about actions, beliefs, and obligations of individuals and collectives. We assume a set of coalitions $\textbf{C}$ and to express individual propositional attitudes, we admit singleton coalitions; in that case the meaning of a coalition $C$ in $\textbf{C}$ is $\{ i \}$.\footnote{This move is similar to the approach in \cite{TroquardJAAMAS2014} to discuss coalitional ability.}
We introduce three modalities $\biat_C \phi$, $\bel_C \phi$, and $\ob_C \phi$ to express that coalition $C$ \emph{does} $\phi$, \emph{believes} that$\phi$, and \emph{is obliged to} $\phi$. We label this logics by \AILLs, \BILLs, and \OILLs, for action, beliefs, and obligations, respectively.\par
We use minimal (or non-normal) modalities in order to ensure a number of basic principles to reason about this modalities. The discussion of further principles is left for  future work.\par
As we have seen in the previous section, in order to understand and possibly circumvent collective inconsistency, we crucially rely on the distinction between multiplicatives and additives. The role of modalities is to make the coalitions of agents that support a certain proposition explicit. 
Let $\Box$ be one of the modalities in $\{\biat, \bel, \ob\}$. The distinction between multiplicatives and additives affects the modal extension in the so called \emph{combination axioms}, which is usually admitted in the classical counterparts of this logics: $\Box_C \phi \wedge \Box_D \psi \rightarrow \Box_{C \cup D} (\phi \wedge \psi)$. 
This axiom can be interpreted in substructural logics in two ways: a multiplicative and an additive way. The multiplicative combination of two propositions, in our interpretation, requires two possibly different winning coalitions that support each combined propositions, whereas the additive combination forces the same coalition to support both propositions. This distinction is reflected by the operational meanings of the two conjunctions.\footnote{This distinction is related to the distinction between group knowledge and distributed knowledge in epistemic logics, cf. for instance \cite{aagotnes2016coalition}.} Therefore, the additive version of the axiom means that if the same coalition support $\phi$ and $\psi$, then the same coalition supports the additive combination of $\phi$ and $\psi$: $\phi \with \psi$. The multiplicative means that if a coalition $C$ supports $\phi$ and coalition $C'$ supports $\psi$ then, the \emph{disjoint union} of the two coalitions $C \sqcup C'$ supports the combination $\phi \otimes \psi$. The condition of disjointness of $C$ and $C'$ is crucial for modelling the collective attitudes in a consistent way. In particular, the condition shows that the individuals that are member of the coalition are all equally relevant to make the propositions accepted. Take for instance the case of $\Box_{\{1,2\}} \phi$ and $\Box_{\{2,3\}} \psi$: if we enable the inference to $\Box_{\{1,2,3\}} \phi \otimes \psi$, then we lose the information concerning the marginal contribution of agent 2 in both coalitions. For this reason, we assume that the set \textbf{C} of coalitions is closed under disjoint union.\footnote{We define the disjoint union of two coalitions $C \sqcup D$ by $C \cup D$, if $C \cap D = \emptyset$ and $C \sqcup D = (C \times \{1\}) \cup (D \times \{0\})$, otherwise.}
The combination axiom is reminiscence of coalition logic \cite{PaulyCoalitionLogic2002}. Note that we do not assume any further axiom of coalition logic. For instance, we do not assume \emph{coalition monotonicity}: $\Box_C \phi \implies \Box_D \phi$, if $C \subseteq D$. The motivation is that we are modelling \emph{profile-reasoning}, that is, we start by a fixed profile of individual attitudes and we want to capture, by means of the modalities $\Box_C$, how the collectivity reasons about the propositions that have been accepted by majority in that profile. In this setting, given a profile of individual attitudes, there exists only one coalition that supports a proposition that has been accepted by majority. This is a different perspective wrt coalition logic and logic of coalitional ability \cite{TroquardJAAMAS2014}.
The technicalities of this logics are presented in Appendix \ref{app:mills} and \ref{app:abo}.\footnote{We do not discuss here the computational complexity of the logics that we introduce. We simply summarise a few points. Intuitionistic linear logic is PSPACE complete \cite{KanovichVauzeilles2001} and the logic of action based on \ILL is proved to be in PSPACE in \cite{PorelloTroquardJANCL2015}. For the case of  \ILLs (i.e. with strong negation), we have decidability, since the version of \ILLs based on sequent calculus enjoys cut elimination \cite{wansing1993informational}. A number of normal modal logics  based on \ILLs are also proved to be decidable \cite{kamide2006linear}. Therefore, in principle, it is possible to show that also our modal logics are decidable. Moreover, it is known that the fusion of decidable logics is decidable \cite{blackburn2006handbook}, thus we can in principle adapt the standard theory for the case of non-normal modal logics based on \ILLs. 
}

\subsection{Collective actions}

The logic to reason about collective actions is based on the minimal logic of \emph{bringing-it-about}, which was traditionally developed on top of classical propositional logic~\cite{Elgesem97agency,Governatori05Elgesem}. The principles of this logic aim to capture a very weak view of actions that, for instance, does not presuppose intentionality or explicit goals. We apply this minimal view to conceptualise collective actions. This is adequate for an aggregative perspective on collective actions, for which the collective is not assumed, in general, to have joint intentionality nor any shared goal, \cite{ListErkenntnis2014}.\par

 In \cite{PorelloTroquardECAI2014}, a version of bringing-it-about based on \ILL  was introduced for modelling resource-sensitive actions of a single agent. We propose here a version with coalitions, based on \ILLs, we label it by \AILLs (action-\ILLs). Four principles of agency are captured by the classical bringing-it-about logic~\cite{Elgesem97agency}. The first  corresponds to the axiom $T$ of modal logics: $\biat_i \phi \implies \phi$, it states that if an action is brought about, then the action affects the state of the world, i.e. the formula $\phi$ that represents the effects of the action holds. The second principle  corresponds to the axiom $\lnot \biat_i \top$ in classical bringing-it-about logic. It amounts to assuming that agents cannot bring about tautologies. The motivation is that a tautology is always true, regardless what an agent does, so if acting is construed as something that affects the state of the world, tautologies are not apt to be the content of something that an agent actually does. 
The third principle corresponds to the axiom: $\biat_i \phi \land \biat_i \psi \limp \biat_i (\phi \land \psi)$.  The fourth item allows for viewing bringing it about as a modality, validating the rule of equivalents: if $\vdash \phi \leqv \psi$ then $\vdash \biat_i \phi \leqv \biat_i \psi$. (cf. Appendix B).\par

The language of \AILLs, $\mathcal{L}_{\AILLs}$ simply extends the definition of $ \mathcal{L}_{\ILLs}$, by adding a formula $\biat_C \phi$ for each coalition $C \in \textbf{C}$. 
The axioms of \AILLs are presented in Table 3. The Hilbert system is defined by extending the notion of deduction in Definition \ref{def:derivation-ill} by means of the new axioms in Table 3 and of two new rules for building deduction trees, cf. Definition 6.1.\par
A number of important differences are worth noticing, when discussing the principles of agency in linear logics. The first principle is captured by E1, that is, the linear version of T: $\biat_{C} \phi \implies \phi$. Since in linear logics all the tautologies are not provably equivalent, the second principle changes into an inference rule, that is ($\sim$ nec) in Definition 6.1: if $\vdash \phi$, then $\sim \biat_C \phi$. Note that this also entails that agents do not bring about contradictions. Moreover, the rule ($\biat_C$re) captures the fourth principle.\par
The principle for combining actions is crucial here: as we have discussed, it can be interpreted in substructural logic in a multiplicative way, by means of $\otimes$ (axiom E3), and in an additive way, by means of $\with$ (axiom E2).\footnote{The principles for combining actions have been criticised on the ground that coalitions $C$ and $D$ may have different goals, therefore it is not meaningful to view the action of $C \sqcup D$ as a joint action. However, the aggregative view of group actions defined by means of the majority rule presupposes that the group is not defined by means of a shared goal nor a shared intention. Therefore, Axioms E2 and E3 are legitimate from this point of view.}  

\begin{table}[t]
\begin{enumerate}\label{hlbiatC}
\item[-] All the axioms of \ILLD (cf. Table 1)
\item[E1] $\biat_C \phi \implies \phi$
\item[E2] $\biat_C \phi \wedge \biat_C \psi \implies \biat_C (\phi \wedge \psi)$
\item[E3] $\biat_C \phi \otimes \biat _D \psi \implies \biat_{C \sqcup D}  (\phi \otimes \psi)$
\end{enumerate}
\caption{Axioms of \AILLs}
\end{table}

\begin{definition}[Deduction in \AILLs]\label{def:derivation-biatC}
A \emph{deduction tree} $\mathcal{D}$ in \AILLs is inductively constructed as follows. (i)~The leaves of the tree are assumptions $\phi \t \phi $, for $\phi \in \mathcal{L}_{L \BIAT C}$, or $\t \psi$ where $\psi$ is an axiom in Table 3.\\
 (ii) If $\mathcal{D}$ and $\mathcal{D}'$ are deduction trees, then the trees in Definition \ref{def:derivation-ill} are also deduction trees in \AILLs. Moreover, the following are deduction trees (inductive steps). 

\begin{center}
\begin{tabular}{ccc}
$
 \AC{\stackrel{\mathcal{D}}{\vdash \phi \implies \psi}}
 \AC{\stackrel{\mathcal{D}'}{\vdash \psi \implies \phi}}
 \rl{$\biat_C$(re)}
 \BC{\t \biat_C \phi \implies \biat_C \psi}
 \dip
$
&
\hspace{10pt}
&
\AC{\t \phi}\rl{$\sim$ nec}
\UC{\t \sim \biat_C \phi}
\dip\\
\end{tabular}
\end{center}
\end{definition}

\subsection{Collective beliefs}

The logic for modelling collective beliefs is a minimal epistemic logic \cite{vardi1986epistemic}. Non-normal epistemic modalities have been applied also to modelling reasoning about epistemic states of strategic agents in \cite{parikh1995logical,bacharach1994epistemic}. By using beliefs operators to model propositional attitudes, we are viewing the content of beliefs as sharable among the community of agents, in accordance with the motivations of the framework of judgment aggregation.\footnote{Alternatively, we could have epistemic attitudes of agents by means of acceptance logic \cite{lorini2009logic}. This solution is the one adopted by \cite{BoellaEtAlCOIN2010}.}
The language of intuitionistic linear logic of beliefs with coalitions, label it by \BILLs , extends the language $\mathcal{L}_{\ILLs}$ by adding a doxastic modality $\bel_{C} \phi$.  The Hilbert system for \BILLs extends \ILLs with the following axioms (cf. \cite{vardi1986epistemic}). The notion of derivation extends that of \ILLs by adding the rule of equivalents for the modalities $\bel_{C} \phi$. 

\begin{table}[t]
\begin{enumerate}\label{helC}
\item[-] All the axioms of \ILLs (cf. Table 1)
\item[B1] $\sim \bel_{C} \bot$
\item[B2]  $\bel_C \phi \wedge \bel_C \psi \implies \bel_C (\phi \wedge \psi)$
\item[B3] $\bel_C \phi \otimes \bel _D \psi \implies \bel_{C \sqcup D}  (\phi \otimes \psi)$
\item[B4] $\bel_{C} \phi \implies \bel_{C} \bel_{C} \phi$
\item[B5] $\bel_{C} (\phi \wedge \psi) \implies \bel_{C} \phi$
\end{enumerate}
\caption{Axioms of \BILLs}
\end{table}

\begin{definition}[Deduction in \BILLs]\label{def:derivation-}
A \emph{deduction tree} in \BILLs  $\mathcal{D}$ is inductively constructed as follows. (i)~The leaves of the tree are assumptions $\phi \t \phi $, for $\phi \in \mathcal{L}_{\BILLs}$, or $\t \psi$ where $\psi$ is an axiom in Table 4 (base cases).\\
 (ii) If $\mathcal{D}$ and $\mathcal{D}'$ are deduction trees, then the trees in Definition \ref{def:derivation-ill} are also deduction trees in \AILLs. Moreover, the following are deduction trees (inductive steps). 
 
$$
 \AC{\stackrel{\mathcal{D}}{\vdash \phi \implies \psi}}
 \AC{\stackrel{\mathcal{D}'}{\vdash \psi \implies \phi}}
 \rl{$\bel_C$(re)}
 \BC{\t \bel_C \phi \implies \bel_C \psi}
 \dip
$$
 
\end{definition}

\noindent
Note that the rule ($\sim$ nec) that we used for actions is dropped for the case of epistemic logics, that is, we assume that agents may believe tautologies although we do not force them to do so. However, we assume that a modicum of rationality still applies and agents cannot believe contradictions, cf. axiom B1.
We also assume the positive introspection axiom, B4, which exhibits in this context the public nature of the belief. A negative introspection axiom can be defined as well, but we do not discuss it in this paper. Moreover, we assume that for the additive conjunction, agents can infer their belief in $\phi$ from the belief in $\phi \with \psi$ (axiom B5). 
The combination of beliefs is the original part here. Again, we distinguish between an additive and a multiplicative combination in order to keep track of the contribution of individuals to the acceptance of the collective belief. 

\subsection{Collective norms}

The use of non-normal modal logic to express deontic modalities was motivated in \cite{chellas80,goble2004proposal}. Moreover, non-normal deontic logics have been used to discuss institutional agency in \cite{carmo2001deontic} and to model weak permissions in \cite{roy2012logic}. Here we present a minimal version of deontic logic, for the sake of the exemplification, and we leave a proper treatment of deontic principles in \ILLs for future work. 
We extend the language of \ILLs by adding a number of modalities for obligations $\ob_{C}$, for $C \in \textbf{C}$. Moreover, we also assume the respective dual modalities (permission) $\per_{C}$. We term this system by $\OILLs$. 
The motivation for this formalisation is that we want to distinguish norms that constraint the behaviour of individuals, e.g. $\ob_{i} A$, to norms that applies to coalitions or collectives, e.g. $\ob_{\{1,3\}} \phi$.\footnote{For a taxonomy of norms that applies to groups, we refer to \cite{DignumCOIN2014}.}\par

The Hilbert system for \OILLs extends the case of \ILLs by adding the axioms in Table 5.

\begin{table}
\begin{enumerate}\label{hobC}
\item[-] All the axioms of \ILLs (cf. Table 1)
\item[O1] $\sim \ob_{C} \bot$
\item[O2]  $\ob_C \phi \wedge \ob_C \psi \implies \ob_C (\phi \wedge \psi)$
\item[O3] $\ob_C \phi \otimes \ob _D \psi \implies \ob_{C \sqcup D}  (\phi \otimes \psi)$
\item[O4] $\sim \ob_{C} \phi \implies  \per_{C} \sim \phi$
\item[O5] $\sim \per_{C} \phi \implies \ob_{C} \sim \phi$ 
\item[O6] $\ob_{C} \phi \implies \per_{C}\phi$
\end{enumerate}
\caption{Axioms of \OILLs}
\end{table}

\begin{definition}[Deduction in \OILLs]\label{def:derivationOILLs}
A \emph{deduction tree} $\mathcal{D}$  in \OILLs is inductively constructed as follows. (i)~The leaves of the tree are assumptions $\phi \t \phi $, for $A\phi\in \mathcal{L}_{\OILLs}$, or $\t \psi$ where $\psi$ is an axiom in Table 5 (base cases).\\
 (ii) I (ii) If $\mathcal{D}$ and $\mathcal{D}'$ are deduction trees, then the trees in Definition \ref{def:derivation-ill} are also deduction trees in \OILLs. Moreover, the following are deduction trees (inductive steps). 
 
$$
 \AC{\stackrel{\mathcal{D}}{\vdash \phi \implies \psi}}
 \AC{\stackrel{\mathcal{D}'}{\vdash \psi \implies \phi}}
 \rl{$\ob_C$(re)}
 \BC{\t \ob_C \phi \implies \ob_C \psi}
 \dip
$$
 \end{definition}

\section{A model of collective attitudes}

We are ready now to introduce the model of collective attitudes. Individual attitudes shall be captured by formulas $\Box_i \phi$, where $i$ is an individual (coalition) and $\Box$ is one of the modalities that we have encountered. Collective attitudes are then modelled by formulas $\Box_C \phi$, where $C$ is a winning coalition in a given profile with respect to the majority rule.\footnote{The majority rule is not going to be interpreted by means of any logic formula, as for instance in \cite[chapter 4]{Rubinstein2000}}
We focus in this paper on the majority rule, however the proposed model of collective attitudes can be applied in principle to any aggregation procedure.\par

In order to meet the standard hypothesis of JA, we assume that the individuals reason in classical logic. That is, we assume that the judgments of the individuals are expressed by means of the versions based on classical logic (\CL) of the logics $\AILLs$, $\BILLs$, and $\OILLs$, which are known from the literature;\footnote{Note that the classical versions of the modal logics that we have introduced can be obtained by adding (W) and (C) to the lists of axioms.} we label them by \ACL, \BCL, and \OCL, respectively. Let $\XCL$ be one of \ACL, \BCL, and \OCL.\par
All of this logics are defined, besides by the axioms of classical propositional logic, by means of the classical versions of the modal axioms and rules that we have encountered. For instance, each satisfies a rule of equivalents $\ob_{C}$(re), $\bel_{C}$(re), $\biat_{C}$(re). 
Observe that, the classical combination axioms, which dismiss the distinction between multiplicatives and additives, in this case have all the following form. Let $\Box_{C} \in \{\ob_C, \biat_C, \bel_C\}$ and $C, D \in \textbf{C}$:

\begin{equation}
\Box_{C} \phi \wedge \Box_{D} \psi \rightarrow \Box_{C \cup D} (\phi \wedge \psi)
\end{equation}

The \emph{agenda} $\Phi_{L}$ is a subset of $\mathcal{L}_{L}$ that is closed under negations. 
An individual judgment set $J_i$ is a subset of $\Phi_{L}$ such that it is consistent and complete wrt $L$; moreover, we assume that for every modal operator $\Box_{C}$, for $\Box \in \{\biat, \bel, \ob\}$, that occurs in formulas in $J_i$, $C=i$. That is, individual judgments are about individual beliefs/actions/obligations. This is motivated by the rationale of the framework of judgment aggregation, where individuals reason independently of each other on the matters of the agenda.\par
We want to associate the collective judgments accepted by majority with the coalitions of agents that support them. For this reason, we define the \emph{collective agenda} $\Phi^{C}_{L}$ that coincides with $\Phi_{L}$ for the formulas that do not contain modalities and for any formula in $\Phi_{L}$ that contains a modal operator, denoted by $\theta[\Box_{C} \psi]$, we have $\theta[\Box_{C} \psi] \in \Phi^{C}_{\XCL}$ for any $C$ such that $|C| \geq \frac{n+1}{2}$. That is, $\Phi^{C}_{L}$ contains all the possible collective attitudes that can be ascribed to winning coalitions of individuals.\par
We restate the definition of the majority rule as follows: $M_{C}: \mathcal{J}(\Phi_{\XCL})^{n} \to \mathcal{P}(\Phi^{C}_{L})$ is defined by

\begin{equation}
\phi \in M_{C}(\textbf{J})\; \text{iff}
\begin{cases}
|N_{\phi}| \geq \frac{n+1}{2},\;  \text{and $\phi$ does not contain modal operators}\\
\phi = \theta[\Box_{C} \psi]\; \text{and}\;  \theta[\Box_{1} \psi] \in J_{1},  \dots ,\theta[\Box_{l}\psi] \in J_{l},\; C = \{1, \dots, l\}\; \text{and}\; |C| \geq \frac{n+1}{2}
\end{cases}
\end{equation}

That is, for non-$\Box$ formulas the majority works as usual, for $\Box$-formulas, the majority aggregates them by merging the supporting individuals into the relevant winning coalition.\par
Note that, in case the codomain of $M_C$ is based on classical logic, $M_{C}(\textbf{J})$ may be inconsistent wrt to classical logic, therefore the codomain of $M_{C}$ is the power set of the agenda. For instance, take the following profile $\textbf{J}$ of sets of formulas of \BCL: 

\begin{center}
\begin{description}
\item $J_1 = \{\bel_1 \phi, \bel_1 \psi, \bel_1 (\phi \wedge \psi)\}$
\item $J_2 = \{\neg \bel_2 \phi, \bel_2 \psi,  \bel_2  \neg (\phi \wedge \psi)\}$
\item $J_3 = \{\bel_3 \phi, \neg \bel_3 \psi,  \bel_3 \neg (\phi \wedge \psi) \}$
\end{description}
\end{center}

\noindent
The majority rule $M_C(\textbf{J})$ returns the following set: $\{ \bel_{\{1,3\}} \phi, \bel_{\{1,2\}} \psi, \bel_{\{2,3\}} \neg (\phi \wedge \psi)\}$. 
We can show that $M_C(\textbf{J})$ is inconsistent in \BCL as follows. From $\bel_{\{1,3\}} \phi$ and $\bel_{\{1,2\}} \psi$, we can infer, by means of a classical combination axiom, $\bel_{\{1,2,3\}} (\phi \wedge \psi)$. From $\bel_{\{2,3\}} \neg (\phi \wedge \psi)$ and $\bel_{\{1,2,3\}} (\phi \wedge \psi)$, we can infer, again by a combination axiom, that $\bel_{\{1,2,3\}} (\phi \wedge \psi) \wedge \neg (\phi \wedge \psi)$, which entails $\bel_{\{1,2,3\}} \bot$, since $(\phi \wedge \psi) \wedge \neg (\phi \wedge \psi)$ is equivalent to $\bot$ and $\bel_{C}$ satisfies the classical version of axiom $\bel_{C}$(re).
Therefore, $M_C(\textbf{J})$ is inconsistent since $\bel_{\{1,2,3\}} \bot$ contradicts (a classical version of) axiom B1 $\neg \bel_{\{1,2,3\}} \bot$.\par

We assess the collective attitudes in $M_C(\textbf{J})$ wrt the logics \AILLs, \BILLs or \OILLs. We label by \XILLs one among \AILLs, \BILLs, and \OILLs. We extend the additive translation presented in Section \ref{sec:possibility} by $(\Box_{C}(\phi))' = \Box_{C}(\phi)'$. Hence, we can translate $M_{C}(\textbf{J})$ into its counterpart $M_{C}(\textbf{J})'$ in the additive fragment of the modal logics that we have defined. This returns an aggregation procedure $\mathcal{J}(\Phi_{\XCL})^{n} \to \mathcal{P}(\Phi^{C}_{\XILLs})$.  We can now assess the outcome of the majority rule defined on profiles based on classical modal logics wrt an agenda of propositions of \XILLs. We label the additive fragment of \XILLs by \aXILLs.\par

%

By means of Theorem \ref{th:aills}, the consistency result of the majority rule for \aILLs can be extended to \aXILLs. Note that the additive logics $\aAILLs$, $\aBILLs$, or $\aOILLs$ contain only the additive axioms for the modalities (i.e. they do not contain E3, B3 and O3).

\begin{theorem}\label{thm:axills}
For every agenda $\Phi_{\XCL}$ and profile $\bf{J}$, $M_{C}(\textbf{J})'$ is consistent and complete wrt $\aXILLs$.
\end{theorem}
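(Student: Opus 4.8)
The plan is to re-run the argument of Theorem~\ref{th:aills}, enriched with a bookkeeping step for the coalition labels that the modalities now carry. As there, the statement splits into a cheap half and a substantial half. Completeness of $M_C(\textbf{J})'$ wrt $\aXILLs$ I would obtain exactly as in Theorem~\ref{th:aills}: since $n$ is odd and each $J_i$ is complete, for every item of the collective agenda exactly one of it and its complement is supported by a set of agents of size $\geq\frac{n+1}{2}$, hence is placed into the outcome by $M_C$ together with its winning coalition; and the additive translation, being a bijection on the agenda that commutes with complementation and with the coalition-labelling, preserves this property.

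For consistency, I would first check that $\aXILLs$ still enjoys the ``median'' property used in Theorem~\ref{th:aills}: every minimally inconsistent set of $\aXILLs$-formulas has cardinality at most $2$. Passing from $\aILLs$ to $\aXILLs$ only adds the \emph{additive} modal axioms, the rules of equivalents, and $\sim$-nec for actions; crucially the multiplicative combination axioms E3, B3, O3 are \emph{absent}, so nothing combines two formulas multiplicatively and the observation of \cite{HughesGlabbeek2003} that every additive deduction involves only two formulas carries over. I would also record the structural point that none of these axioms or rules ever relates two \emph{distinct} coalition subscripts --- each modal axiom instance carries one coalition on all of its boxes, and $\sim$-nec and the rules of equivalents act inside a single fixed $\Box_C$ --- so in any $\aXILLs$-deduction the coalition subscripts of the premises may be renamed by a substitution without affecting validity.

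With these in hand I would argue by contradiction. If $M_C(\textbf{J})'$ were inconsistent there would be $\alpha,\beta\in M_C(\textbf{J})'$ with $\{\alpha,\beta\}\vdash_{\aXILLs}\bot$. Each of $\alpha,\beta$ is the additive translation of a collective-agenda item and is therefore supported, in the sense of $M_C$, by a set of agents of size $\geq\frac{n+1}{2}$ --- the usual majority set $N_\phi$ for a modality-free item, and the winning coalition $\{j : \theta[\Box_j\psi]\in J_j\}$ for an item $\theta[\Box_{C'}\psi]$. Two such sets intersect (as $2\cdot\frac{n+1}{2}>n$), so some agent $i$ supports both. Renaming every coalition subscript in $\alpha,\beta$ to the singleton $\{i\}$ yields $\alpha_i,\beta_i$ which, by the definition of the individual agenda (all modal subscripts equal $i$) and of $M_C$, are the additive translations of formulas of $J_i$, and which still satisfy $\{\alpha_i,\beta_i\}\vdash_{\aXILLs}\bot$ by the renaming observation. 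Finally, translating back to classical logic via $\with\mapsto\wedge$, $\oplus\mapsto\vee$, $\sim\mapsto\neg$, $\Box_C\mapsto\Box_C$ --- which sends every $\aXILLs$-axiom to an $\XCL$-theorem (the additive combination axioms become the classical ones; $\sim$-nec becomes derivable from $\neg\biat_C\top$ and $\biat_C$(re)) and every rule to an admissible one --- gives $\{\phi,\psi\}\vdash_{\XCL}\bot$ for the back-translations $\phi,\psi\in J_i$, contradicting consistency of $J_i$ in $\XCL$.

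The delicate points, which I expect to be the real work, are the two bookkeeping items: making the coalition-renaming observation fully precise --- that an $\aXILLs$-derivation never has to fuse two coalitions, so that a uniform substitution on the premises yields a genuine derivation --- and checking that $\Phi^C_{\XCL}$ together with the coalition-aware $M_C$ genuinely preserves completeness once modal items admitting several winning coalitions are present. The median property for $\aXILLs$ and the back-translation into $\XCL$ are routine once one commits to the additive-only modal axioms; the one ingredient genuinely new over Theorem~\ref{th:aills} is the interaction between the pigeonhole step and this renaming.
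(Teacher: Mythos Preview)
Your proposal is correct and follows essentially the same approach as the paper: both establish the median property for $\aXILLs$ (no modal rule or additive modal axiom increases the number of formulas in a derivation), then use pigeonhole to find a common individual $h$ supporting the two offending formulas, and finally argue that renaming the coalition subscripts to $h$ preserves the derivation of $\bot$, contradicting the classical consistency of $J_h$. The only cosmetic difference is in how the renaming step is justified: you phrase it as a uniform substitution of coalition indices (licensed because no additive axiom or rule relates two distinct subscripts), whereas the paper argues that identifying $C$ and $C'$ with $h$ can only \emph{add} axiom instances (specifically the additive combination $\Box_h\phi\with\Box_h\psi\implies\Box_h(\phi\with\psi)$), so any proof of $\bot$ available for $C\neq C'$ remains available after identification.
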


\begin{proof}
Firstly, we show that also for $\aAILLs$, $\aBILLs$, or $\aOILLs$, it holds that every derivation $\Gamma \t \phi$ contains at most two formulas. For any $\box \in \{\biat, \bel, \ob \}$, notice that the rule $\Box$(re) does not increase the number of formulas in a derivation. Moreover, ($\sim$ nec) in $\aAILLs$ does not increase the number of formulas in a derivation as well. The additive versions of the axioms in $\aAILLs$, $\aBILLs$, or $\aOILLs$ cannot increase the number of formulas, since in the respective additive versions, we dropped the $\implies$-rule.\par
Therefore, by the argument of Theorem \ref{th:aills}, $M(\textbf{J})'$ is consistent and complete wrt modal \aXILLs. We show that also $M_{C}(\textbf{J})$ is consistent. Suppose by contradiction that it is not. Then, since every derivation $\Gamma \t \phi$ in $\aAILLs$, $\aBILLs$, or $\aOILLs$ contains at most two formulas, there exist $\phi_1$ and $\phi_2$ in $M_{C}(\textbf{J})$ such that $\phi_1, \phi_2 \t \bot $. Suppose $\phi_1 = \theta[\Box_{C} \phi]$ and  $\phi_2= \theta'[\Box_{C'}\psi]$, where $C$ possibly differs from $C'$. If $\phi_1$ and $\phi_2$ are in $M_{C}(\textbf{J})$, then there exist at least $\frac{n+1}{2}$ individuals such that  $\theta[\Box_{i} \phi] \in J_{i}$ and at least $\frac{n+1}{2}$ individuals such that  $\theta'[\Box_{i} \psi] \in J_{i}$. From this we infer that there exists an individual $h$ such that  $\theta[\Box_{h} \phi] \in J_{h}$ and  $\theta'[\Box_{h} \psi] \in J_{h}$. To conclude, we show that if $\theta[\Box_{C} \phi]$ and $\theta'[\Box_{C'}\psi]$ are inconsistent, then $\theta[\Box_{h} \phi] \in J_{h}$ and  $\theta'[\Box_{h} \psi] \in J_{h}$ are. If $C = C'$, then the argument holds because it amounts to a simple relabelling of the indexes of the modalities. If $C \neq C'$, then the proof of  $\theta[\Box_{C} \phi], \theta'[\Box_{C'}\psi] \t \bot$ may differ form the proof of  $\theta[\Box_{h} \phi], \theta'[\Box_{h} \psi]\t \bot$ only in that it may not use the instances of the axiom $ \Box_{X} \phi \with \Box_{X}\psi \rightarrow \Box_{X}(\phi \with \psi)$; therefore if one can prove $\bot$ in the case of $C \neq C'$, with less axioms, one can do that by identifying $C$ and $C'$ with $h$. Therefore, $J_{h}$ is inconsistent in \aXILLs, thus it is inconsistent in their classical counterparts.
\end{proof}

For instance, take the profiles of judgments sets in \BCL that we have previously encountered. The majority $M_{C}(\textbf{J})'$ returns now the collective set in \BILLs: $\{ \bel_{\{1,3\}} \phi, \bel_{\{1,2\}} \psi, \bel_{\{2,3\}} \sim (\phi \with \psi)\}$, which is \emph{not} inconsistent in \aBILLs. The reason is that we cannot derive $\bel_{C} (\phi \with \psi)$ and $\bel_{C} \sim (\phi \with \psi)$ for any $C$, from the formulas $\bel_{\{1,3\}} \phi$ and $\bel_{\{1,2\}} \psi$. This is due to the fact that the additive combination of beliefs is not applicable in this case.\par
By means of the additive connectives, we are only reasoning about propositions that are accepted because of a winning coalitions of agents; this means that the reasoning steps that are visible in  $M_{C}(\textbf{J})'$ as additive formulas are those that have already been performed autonomously by the individuals. In order to reason about propositions that are accepted by distinct winning coalitions and to model inferences performed at the collective level, we embed $M_{C}(\textbf{J})$ into the full logics \AILLs, \BILLs, and \OILLs.\par
A version of Fact \ref{fact} easily holds also for the modal counterparts, thus, we can define the aggregation procedure $Cl_{\XILLs}(M_{C}(\textbf{J})')$
that returns the deductive closure of $M_{C}(\textbf{J})'$ with respect to an given agenda $\Phi_{\XILLs}$ of formulas in \XILLs that contains $M_{C}(\textbf{J})'$ (cf. Equation \ref{eq:cl_maj}).\par
To add collective constraints, that are external to the individual agenda, we restate the definition of the aggregation procedure $T$ in this setting (cf. Definition \ref{def:twostep}). A set of \emph{collective constraints} is a set of formulas $\Sigma$ in $\mathcal{L}_{\XILLs}$ such that for each modal formula $\theta[\Box_{C}\phi] \in \Sigma$, $C$ is either a winning coalition of agents or $C = C_1 \sqcup \dots \sqcup C_l$, where each $C_i$ is a winning coalition of agents.\par 
We assume that, for each individual $i$, $i$ accepts a set of constraints $S_i \subseteq \Sigma$. Moreove, $S_i$ is compatible with the judgments of the individuals, once they are interpreted by means of the additive translation: $(J_i)' \cup S_i$ is consistent in \XILLs.

\begin{definition}\label{def:twostepC}
Let $\Phi_{\CL}$ an agenda in \CL and $\Phi_{\XILLs}$ an agenda in \XILLs that is composed of three disjoint sets: the additive translation of $\Phi_{\XCL}$, a set of constraints $\Sigma$, and a set of conclusion $\Delta$. The two step procedure $T_{C}: J(\Phi_{\XCL})^{n} \to \mathcal{P}(\Phi_{\XILLs})$ is defined as follows:

$$T(\textbf{J}) = \{\phi \mid \phi \in M_{C}(\textbf{J})'\; \text{or}\; \phi \in M(\textbf{S})\; \text{or}\; \phi \in \Delta\;\text{and}\; M_{C}(\textbf{J})', M(\textbf{S}) \t \phi \}$$
\end{definition}

\noindent
%
%
To guarantee the consistency of $T_{C}$ in case the constraints $\Sigma$ are expressed by means of any multiplicative formula, we have to assume  that every minimally inconsistent subset of $(\Phi_{\XCL})' \cup \Sigma$ has cardinality at most 2.

\subsection{Collective attitudes in the doctrinal paradox}\label{sec:cadp}

We conclude our analysis by exemplifying our treatment of collective attitudes for the case of the doctrinal paradox, cf. Section \ref{sec:doc}. We use the combination of the collective attitudes that we have introduced for the sake of illustration. In fact, in order to define a logic that combines all the logics \XILLs, we should extend the theory of \emph{fusion} of modal logics to the case of non-normal modalities and to substructural propositional axioms. This is in principle a viable solution, given that our modal logics are mutually independent. Let \UILLs be the fusion of the logics \XILLs. A version of Theorem \ref{thm:axills} for \UILLs holds. Note that, by putting all the modalities together, since no new inference rule nor axioms have been introduced, we still have that every derivation in such a system would contain at most two formulas. A detailed treatment of that goes beyond the scope of the present paper and it is left for future work.\par
We formalise the individual and collective propositional attitudes involved in the doctrinal paradox as follows. We use $C \in \textbf{C}$ as a variable for coalitions of agents. 

\begin{itemize}
\item[] $\bel_{C} p$ : $C$ believes that the confession was coerced.

\item[] $\bel_{C} q$ :  $C$ believes that the confession affected the outcome of the trial.

\item[] $\ob_{C} \biat_{C} r$ : $C$ is obliged to bring about that the trial is revised.
\end{itemize}

\noindent
The issues to be decided can be represented by means of the agenda of propositions in the fusion \UCL of the classical modal logics \ACL, \BCL, and \OCL:
 
$$\{\bel_{C} p, \bel_{C} q, \bel_{C}(p \wedge q), \bel_{C} \neg p, \bel_{C} \neg q, \bel_{C} \neg(p \wedge q),  B_{C}(p \wedge q) \rightarrow \ob_{C} \biat_{C} r,  \ob_{C} \biat_{C} r, \ob_{C} \neg \biat_{C} r\}$$

\noindent
The $C$s are here coalitions of agents in $N = \{1,2, 3\}$. We separate the individual agenda, for which $C = i$, for $i \in N$, and the collective agenda $\Phi^{C}_{L}$ where $C$ is any winning coalition wrt the majority rule.\par 
The legal doctrine is here represented by means of the combination of the deontic operator and of the action operator $\ob_{C} \biat_{C} r$ that means, intuitively, that under the condition that the court believes $\bel_{C} (p \wedge q)$, it is obligatory to bring about that the trial is revised, which captures the normative force of ``the trial \emph{must} be revised'' (see for instance \cite{carmo2001deontic}).\par
The critical part of the profile $\textbf{J}$ of the doctrinal paradox is represented in Table 6.

\begin{table}[h]\label{table:dpca}
\begin{tabular}{ccccccc}
 & $\bel_{C}p$ & $\bel_{C} (p \wedge q)$ & $\bel_{C} q$ & $\bel_{C}(p \wedge q) \rightarrow \ob_{C}\biat_{C} r$ & $\ob_{C}\biat_{C} r$ & $\ob_{C}\biat_{C} \neg r$\\
\hline
$1$ & 1 & 1 & 1 & 1 & 1 & 0\\   
$2$ & 1 & 0 & 0 & 1 & 0 & 1\\ 
$3$ & 0 & 0 & 1 & 1 & 0 & 1\\
\hline 
maj. & 1 & 0 & 1 & 1 & 0 & 1\\
\end{tabular}
\caption{Collective attitudes in the doctrinal paradox}
\end{table}

\noindent
The outcome of $M_{C}(\textbf{J})$ in classical logic on the profile above is then the following set of collective attitudes:

\begin{center}
$\{\bel_{\{1,2\}} p,\; \bel_{\{1,3\}} q,\; \bel_{\{2,3\}} \neg (p \wedge q),\; \bel_{\{1,2,3\}} (p  \wedge q) \rightarrow \ob_{\{1,2,3\}}\biat_{\{1,2,3\}} r,\; \ob_{\{2,3\}}\biat_{\{2,3\}} \neg r \}$
\end{center}

\noindent
The collective set is inconsistent wrt the (classically based) logic \UCL. It is illustrative to see why. From  $\bel_{\{1,2\}} p$ and $\bel_{\{1,3\}} q$ follows, by a classical versions of attitudes combination, that $\bel_{\{1,2,3\}} (p \wedge q)$, which entails, by \emph{modus ponens} and by $\bel_{\{1,2,3\}} (p  \wedge q) \rightarrow \ob_{\{1,2,3\}}\biat_{\{1,2,3\}} r$, the formula $\ob_{\{1,2,3\}}\biat_{\{1,2,3\}} r$.  We show that  $\ob_{\{1,2,3\}}\biat_{\{1,2,3\}} r$ contradicts $\ob_{\{2,3\}}\biat_{\{2,3\}} \neg r$  (which is in $M_{C}(\textbf{J})$) as follows. Again by classical combination, we obtain $\ob_{\{1,2,3\}}(\biat_{\{1,2,3\}} r \wedge \biat_{\{2,3\}} \neg r)$. From $\biat_{\{1,2,3\}} r$ and $\biat_{\{2,3\}} \neg r$ we infer, by means of a classical version of attitudes combination, that $\biat_{\{1,2,3\}}(r \wedge \neg r)$; $\biat_{\{1,2,3\}}(r \wedge \neg r)$ entails $\bot$, by a classical version of axiom $E1$ and by means of propositional reasoning that establishes that $(r \wedge \neg r)$ is equivalent to $\bot$. Therefore, $\biat_{\{1,2,3\}} r \wedge \biat_{\{2,3\}} \neg r \rightarrow \bot$ holds. On the other hand, by the \emph{ex falso quodlibet} principle of classical propositional reasoning, we have $\bot \rightarrow \biat_{\{1,2,3\}} r \wedge \biat_{\{2,3\}} \neg r$. Therefore, they are equivalent. By a classical version of the rule $\ob_{C}$(re), we obtain from $\ob_{\{1,2,3\}}(\biat_{\{1,2,3\}} r \wedge \biat_{\{2,3\}} \neg r)$ and from 
$\bot \leftrightarrow \biat_{\{1,2,3\}} r \wedge \biat_{\{2,3\}} \neg r$, $\ob_{\{1,2,3\}} \bot$, which contradicts axiom O1 (i.e. $\sim O_{C} \bot$).\par

By assessing the outcome of the majority rule $M_C(\textbf{J})$ in the additive fragment of \UILLs, i.e. by means of $M_C(\textbf{J})'$, we obtain the following set of formulas:

\[
\{\bel_{\{1,2\}}p, \bel_{\{1,3\}} q , \bel_{\{2,3\}} \sim (p \with q), \bel_{\{1,2,3\}} (p \with q) \rightsquigarrow \ob_{\{1,2,3\}} \biat_{\{1,2,3\}} r, \ob_{\{2,3\}}  \biat_{\{2,3\}} \sim r\}
\] 

In order to preserve the consistency of the majority rule, the formulas of the classical agenda are viewed, from the perspective of the collective reasoning, in additive terms, (cf. Theorem \ref{th:aills} and Definition \ref{def:twostep}). Our analysis of the doctrinal paradox entails that the solution provided by the aggregation procedure $M_{C}(\textbf{J})$' takes the conclusion-based horn of the dilemma, for which the conclusion $\ob_{\{2,3\}} \biat_{\{2,3\}} \sim r$ is accepted (prosaically, the trial must not be revised).\par

By embedding $M_{C}(\textbf{J})'$ into the full language of \UILLs, we infer from $\bel_{\{1,2\}}p$ and  $\bel_{\{2,3\}}q$ only the multiplicative combination $\bel_{\{1,2\} \sqcup \{2,3\}} (p \otimes q)$ (by means of Axiom B3), which cannot be used as a premise of the legal doctrine to infer that the trial must be revised. Consistently, the conclusion $\ob_{C} \biat_{C} r$ cannot be accepted for any combination of winning coalitions. $\ob_{C} \biat_{C} r$ is collectively accepted when a majority of individuals $i$ is autonomously accepting $\bel_{i} p$ and $\bel_{i}q$ and individually drawing the conclusion $\ob_{i} \biat_{i} r$ from $\bel_{i}p$ and $\bel_{i} q$. This corresponds, in turn, to a profile in which there is a single majority of agents that supports both $p$ and $q$. By contrast, in the profile of the doctrinal paradox, there is no single majority of individuals that support $p$ and $q$: $p$ is supported by $1$ and $2$, whereas $q$ is supported by $1$ and $3$. For that reason, $\ob_{C} \biat_{C}r$ is not accepted, and consistency is preserved. The conclusion-based solution is entailed by the fact that we view the legal doctrine in additive terms, that is, as a formula on which the individuals reason autonomously.\par
In order to get the premise-based solution---i.e., in order to infer $\ob_C \biat_C r$--- we need to drop the additive formulation of the legal doctrine and add the multiplicative version of it: $\bel_{C} (p \otimes q) \multimap \ob_{C} \biat_{C}r$. As we discussed, this constraint is external to the agenda on which the individuals cast their votes and it applies only to the collectively accepted propositions. We can obtain the premise-based reading by instantiating the procedure $T_{C}$ in the following way.
The individual agenda $\Phi_{\UCL}$ contains the following propositions $\{\bel_{C} p, \bel_{C} q, \bel_{C}(p \wedge q), \bel_{C} \neg p, \bel_{C} \neg q, \bel_{C} \neg(p \wedge q) \}$, where, $C \in \{1, 2, 3\}$. The collective agenda $\Phi_{\UCL}^{C}$ is then obtained by instantiating the formulas in the set above with any winning coalition $C \subseteq \{1, 2, 3\}$. The set of collective constraints is $\Sigma = \{\bel_{C \sqcup D} (p \otimes q) \implies \ob_{C \sqcup D} \biat_{C \sqcup D} r \}$, for any pair of winning coalitions $C$ and $D$ in $\{1,2,3 \}$.\footnote{Note that, as in the case of Section 5.1, by adding arbitrary constraints in the full language of \UILLs, consistency may be lost. However, in this case, the median property is again satisfied by the set of formulas $\Phi_{\UCL}' \cup \Sigma$.}
The set of conclusions is $\Delta = \{ \ob_{C \sqcup D} \biat_{C}r,  \ob_{C \sqcup D} \biat_{C \sqcup D} \sim r, \bel_{C \sqcup D} (p \otimes q)\}$, for any pair of winning coalitions $C$ and $D$ in $\{1,2,3 \}$. 
By assuming that the collective constraints are unanimously accepted, as it is standard in the view of doctrinal paradoxes,\footnote{Note that it is easy to adapt this treatment for the case of collective constraints that are a matter of aggregation.} the procedure $T_{C}$ returns the following set of formulas on the profile that coincides with that of Table 6 for the common part of the agenda:

 $$\{\bel_{\{1,2\}}p, \bel_{\{1,3\}} q , \bel_{\{2,3\}} \sim (p \with q), \bel_{C \sqcup D} (p \otimes q) \implies \ob_{C \sqcup D} \biat_{C \sqcup D} r, \ob_{\{1,2\} \sqcup \{1,3\}}  \biat_{\{1,2\} \sqcup \{2,3\}} \sim r \}$$
 
In this case, the conclusion  $\ob_{\{1,2\} \sqcup \{1,3\}}  \biat_{\{1,2\} \sqcup \{1,3\}} r$ is accepted (that is the trial must be revised). By means of $T_C$, we can also obtain the conclusion-based solution of the dilemma (that the trial must not be revised) by substituting the set of collective constraints $\Sigma$ with $\Sigma'$ containing the additive formulation of the legal doctrines $\bel_{C} (p \with q) \implies \ob_{C} \biat_{C} r$, for any winning coalition $C \subseteq \{1, 2, 3\}$.\par
We conclude by noticing, again, that the additive formulation is more demanding, as it requires that there exists a single winning coalition that accepts the premises of the doctrine.

\subsection{Towards corporate attitudes}

We have seen that it is in principle possible to model collective attitudes in a consistent way, provided that we keep track of the complex internal structure of collective agents, of the relationship between the coalitions, and of the distinction between individual and collective inferences. The view that we have discussed so far is an \emph{aggregative} view of collective attitudes and it can be applied to model a number of aspects of the view of \emph{common} collective attitudes, by replacing the majority rule with the unanimous aggregation. 
I did not approach a \emph{corporate} view of collective attitudes \cite{ListErkenntnis2014}. A corporate view requires the commitment to the existence of a \emph{single} agent $G$ who is the bearer of all the collective attitudes \cite{ListErkenntnis2014,PorelloEtAlFOIS2014}, (e.g. of the collectively accepted formulas in $T(\textbf{J})$), whereas in the previous modelling I have introduced a number of collective agents, one for each winning coalition that is responsible of supporting a certain proposition.
 The nature of the corporate agent $G$ can be approached in our framework by means of the following definition: \emph{if $\Box_{C} \phi$ is collectively accepted, for some coalition $C$, then $\Box_{G}\phi$} (e.g. if $\Box_{C} \phi$ it is in $T(\textbf{J})$, for some $C$, then $\Box_{G}\phi$). The reasoning principles that govern the logic of the corporate agent $G$, i.e. the principles of $\Box_{G}$, need to be carefully investigated, to preserve a modicum of rationality of $G$. In particular, they have to respect the structure of the coalitions that support $G$'s attitudes. For instance, we have to prevent the additive combination of attitudes to hold for $G$. Otherwise, we end up facing again the inconsistency manifested by the doctrinal paradox. In our previous example, from $\bel_{\{1,2\}} p$ we could infer $\bel_{G} p$ and from $\bel_{\{1,3\}}q$ would follow $\bel_G q$; by the additive combination, we would infer $\bel_G (p \with q)$ and that would contradict $\bel_{G} \sim (p \with q)$. By contrast, it is possible to prove that a version of the multiplicative axiom is harmless. We need to replace the disjoint union of coalitions with an operation of composition: $X \bullet Y = X \sqcup Y$, for $X, Y \neq G$ and $G \bullet X = G$. By rephrasing the multiplicative combination for $G$ and $\bullet$, we have $\Box_{G} p \otimes \Box_G q \implies \Box_{G \bullet G = G} (p \otimes q)$. Therefore, there is indeed a viable corporate view of collective attitudes based on the majoritarian aggregation: the definition of the collective attitudes of the corporate agent $G$ can be approached by means of the modalities $\Box_G$ that satisfy the axioms that we have presented for $\XILLs$ minus the additive combination. We leave the detailed treatment of $\Box_G$ and of its further principles for future work.

\section{Conclusion}
This paper has proposed a consistent modelling of collective attitudes, by focusing on the problematic case of collective attitudes that are defined by means of the majority rule. In order to preserve the consistency of the majority rule, a careful examination of the reasoning principles is needed; for that reason, the model of collective attitudes has been developed within substructural logics, that provide a fine-grained analysis of logical operators. We have presented three modal extensions of intuitionistic linear logic with strong negation to cope with action, beliefs, and obligations of individual and collective agents. The adequacy of this logics is shown by proving soundness and completeness wrt the suitable classes of models (Appendix B and C).\par
The substructural propositional base enables the consistency of the majority rule and allows for distinguishing inferences performed at the individual level and inferences performed at the collective level. The modal approach to individual and collective attitudes allows for making visible in the logical modelling the individual and the collective contributions to the ascription of collective attitudes. Therefore, this approach enables the formal understanding of collectivities as rational agents.\par
Future work concerns the extension of this treatment to other aggregation procedures. For instance, it is possible to extend the preservation of consistency exhibited by substructural logics to super-majoritarian aggregation procedures. On the other hand, it is possible to extend the propositional layer that we have used from linear to relevant logic (i.e. by adding contraction), in order to enrich the reasoning capabilities at the collective level, while still ensuring consistency. 
A number of problems remains open. Firstly, it is worth pursuing a systematic study of the principles of the logics obligations, beliefs, and action for the case of collective agents. Secondly, the study of the computational complexity of the fusion of the proposed logics deserves a dedicated treatment. Finally, it is worth studying the logical principles for modelling corporate attitudes.

\bibliographystyle{fundam}
\bibliography{groupattitudes}

\appendix
\section{Semantics of \ILLs}\label{app:sem}

A Kripke-like class of models for \ILL is substantially due to Urquhart \cite{UrquhartJSL1972}.  Here, we use the model for substructural logics developed in \cite{ono1985logics} with the addition of a strong negation \cite{kamide2006linear,wansing1993informational}. We term this structures \emph{Kripke resource algebras}.

\begin{definition}[Kripke resource algebra]
A \emph{Kripke resource algebra} is a \emph{semi-lattice ordered monoid} given by  $\langle M, \cap, \circ, e, \omega \rangle$, where $\langle M, \cap \rangle$ is a meet-semilattice with greatest element $\omega$ and the order $\leq$ is defined by $x \leq y$ iff $x \cap y = x$. $\langle M, \circ, e \rangle$ is a commutative monoid with identity $e$ that satisfies $x \circ \omega = \omega$  for any $x \in M$. Moreover, it holds that $z \circ (x \cap y) \circ w =  (z \circ x \circ w) \cap(z \circ y \circ w)$, for any $x,y,z,w \in M$. 

\end{definition}

To obtain a \emph{model} for \ILLs, two valuations on atoms $V^{+}: Atom \to 
\mathcal{P}(M)$ and $V^{-}: Atom \to \mathcal{P}(M)$ are added. The two valuations are introduced to model the strong negation: the positive valuation associates the verifiers of formulas in a model, the negative one their falsifiers \cite{wansing1993informational}.
They both have to satisfy the \emph{heredity} condition, which is written in this setting as follows: $m \cap n \in V^{+}(p)$ iff $m \in V^{+}(p)$ and $n \in V^{+}(p)$ and  $m \cap n \in V^{-}(p)$ iff $m \in V^{-}(p)$ and $n \in V^{-}(p)$. The conditions entail the version of heredity traditionally used in intuitionistic Kripke models, that is, if $m \in V^{+}(p)$ and $m \leq n$, then $n \in V^{+}(p)$.\par

 The truth conditions of the formulas of $\mathcal{L}_\ILLs$ are the following:

\begin{description} 
\item $m \models_\mathcal{M}^{+} p$ iff $m \in V^{+}(p)$.
\item $m \models_\mathcal{M}^{+} \textbf{1}$ iff $e \leq m$.
\item $m \models_{\mathcal{M}}^{+} \top$ for all $m \in M$.
\item $m \models_{\mathcal{M}}^{+} \bot$ iff  $m = \omega$.
\item $m \models_\mathcal{M}^{+} \phi \tensor \psi$ iff there exist $m_{1}$ and $m_{2}$ such that $m \geq m_{1} \circ m_{2}$ and $m_{1} \models_\mathcal{M} \phi$ and $m_{2} \models_\mathcal{M}^{+} \psi$.
\item $m \models_\mathcal{M}^{+} \phi \wedge \psi$ iff $m \models_\mathcal{M} \phi$ and $m \models_\mathcal{M} \psi$.
\item $m \models_\mathcal{M}^{+} \phi \oplus \psi$ iff  there exist $m_{1}$ and $m_{2}$ such that $m_{1} \cap m_{2} \leq m$ and $m_{1} \models^{+}_\mathcal{M} \phi$ or $m_{1} \models^{+}_\mathcal{M} \psi$ and $m_{2} \models^{+}_\mathcal{M} \phi$ or $m_{2} \models^{+}_\mathcal{M} \psi$.
\item $m \models_\mathcal{M}^{+} \phi \implies \psi$ iff for all $n \in M$, if $n \models_\mathcal{M} \phi$, then $n \circ m \models_\mathcal{M} \psi$.  
\item $m \models_\mathcal{M}^{+} \sim \phi$ iff $m \models_{\mathcal{M}}^{-} \phi$
\item $m \models_\mathcal{M}^{-} \sim \phi$ iff $m \models_\mathcal{M}^{+} \phi$
\item $m \models_\mathcal{M}^{-} p$ iff $m \in V^{-}(p)$.
\item $m \models_\mathcal{M}^{-} \textbf{1}$ iff $m =\omega$.
\item $m \models_\mathcal{M}^{-} \top$ iff $m =\omega$.
\item $m \models_\mathcal{M}^{-} \bot$ for all $x\in M$ .
\item $m \models_\mathcal{M}^{-} \phi \tensor \psi$ iff there exist $m_{1}$ and $m_{2}$ such that $m \geq m_{1} \circ m_{2}$ and $m_{1} \models^{-}_\mathcal{M} \phi$ and $m_{2} \models^{-}_\mathcal{M} \psi$.
\item $m \models_\mathcal{M}^{-} \phi \wedge \psi$ iff there exist $m_{1}$ and $m_{2}$ such that $m_{1} \cap m_{2} \leq m$ and $m_{1} \models^{-}_\mathcal{M} \phi$ or $m_{1} \models^{-}_\mathcal{M} \psi$ and $m_{2} \models^{-}_\mathcal{M} \phi$ or $m_{2} \models^{-}_\mathcal{M} \psi$.
\item $m \models_\mathcal{M}^{-} \phi \oplus \psi$ iff $m\models^{-}_\mathcal{M} \phi$ and $m \models^{-}_\mathcal{M}\psi$.
\item $m \models_\mathcal{M}^{-} \phi \implies \psi$ iff  there exist $m_{1}$, $m_{2}$ with $m_{1} \circ m_{2} \leq m$, such that $m_{1} \models^{+}_\mathcal{M} \phi$ and $m_{2} \models^{-}_\mathcal{M} \psi$.
\end{description}

Denote by $||\phi||^{+}_\mathcal{M}$  the set of worlds of $\mathcal{M}$ that verify $\phi$, $\{m \in M \mid m \models^{+} \phi\}$ and by $||\phi||^{-}_{\mathcal{M}}$ the set of falsifiers of $\phi$, $\{m \in M \mid m \models^{-} \phi\}$. A formula $\phi$ is \emph{true} in a model $\mathcal{M}$ if $e
\models^{+}_\mathcal{M} \phi$.\footnote{When the context is clear we will write $||\phi||^{+}$ instead of $||\phi||^{+}_\mathcal{M}$, and $m \models^{+} \phi$ instead of $m \models^{+}_\mathcal{M} \phi$.} A formula $\phi$ is \emph{valid} in Kripke resource algebra, denoted by $\models \phi$, iff it is true in every model. The heredity conditions easily extends by induction to every formula of \ILLs. By means of this semantics, \ILLs is sound and complete wrt to the class of Kripke resource algebras \cite{kamide2006linear}.


\section{A minimal modal extension of \ILLs}\label{app:mills}

We extend \ILLs by adding a non-normal modality $\Box$. In this section, we study the minimal non-normal modal operator $\Box$, while in the next sections we refine the modality to cope with propositional attitudes. The Hilbert system is a simple extension of the propositional case, we label this logic by \MILLs, (modal intuitionistic linear logic with strong negation).

\begin{definition}[Deduction in \MILLs]\label{def:derivation-biatC}
A \emph{deduction tree} $\mathcal{D}$ in \MILLs is inductively constructed as follows. (i)~The leaves of the tree are assumptions $\phi \t \phi $, for $\phi \in \mathcal{L}_{\MILLs}$, or $\t \psi$ where $\psi$ is an axiom in Table \ref{hills} (base cases).\\
 (ii) If $\mathcal{D}$ and $\mathcal{D}'$ are deduction trees, then the trees in Definition \ref{def:derivation-ill} are also deduction trees in \MILLs. Moreover, the following are deduction trees (inductive steps). 

\begin{center}
 \AC{\stackrel{\mathcal{D}}{\vdash \phi \implies \psi}}
 \AC{\stackrel{\mathcal{D}'}{\vdash \psi \implies \phi}}
 \rl{$\Box$(re)}
 \BC{\t \biat_C \phi \implies \biat_C \psi}
 \dip
\end{center}
\end{definition}

\noindent
The extension wrt \ILLs is provided by the rule $\Box$(re) that imposes that the modality is well-defined, that is, it preserves logically equivalent formulas (for the rule $\Box$(re), see\cite{PorelloTroquardJANCL2015}).\par

The models of \MILLs are defined by mean of a neighborhood semantics \cite{chellas80} defined on top of the Kripke resource algebra. A neighborhood function is a mapping $N: M \to \mathcal{P}(\mathcal{P}(M))$ that associates a
world $m$ with a set of sets of worlds (see~\cite{chellas80}). The intuitive meaning of the neighborhood in this setting is that it associates to each world a set of propositions that are \emph{verifiable necessities} at $m$. It is interesting to notice that to cope with a strong constructive negation for $\Box$-formulas, and with the two valuations, we need also to introduce a \emph{negative} neighborhood function that specifies which propositions are (constructively) falsifiable necessity at $m$. 
%

We require the neighborhood functions to satisfy the following conditions, that are needed in order to preserve heredity for modal formulas.

\begin{equation}
\label{princ:heredity1}
 X \in N^{+}(m \cap n) \text{ iff } X \in N^{+}(m) \text{ and } X \in N^{+}(n)
\end{equation}

\begin{equation}\label{princ:heredity2}
 X \in N^{-}(m \cap n) \text{ iff } X \in N^{-}(m) \text{ and } X \in N^{-}(n)
\end{equation}

The structure required to interpret \MILLs is then the following.

\begin{definition}[Modal Kripke resource algebra]
A \emph{modal Kripke resource algebra} is given by  $\langle M, \cap, \circ, e, \omega, N^{+}, N^{-} \rangle$, where  $\langle M, \cap, \circ, e, \omega \rangle$ is a Kripke resource algebra and $N^{+}$ and $N^{-}$ are neighborhood functions that satisfy (\ref{princ:heredity1}) and (\ref{princ:heredity2}).
\end{definition}

\noindent
Given a Kripke resource algebra, a model is specified by adding two valuations $V^{+}$ and $V^{-}$ that are assumed to satisfy heredity on atoms. In order to interpret the modalities in a modal Kripke resource algebra, we extend the semantic definitions for \ILLs by adding the following clauses:

\begin{itemize}
\item[] $m \models^{+} \Box \phi\; \text{iff}\; || \phi ||^{+} \in N^{+}(m)$
\item[] $m \models^{-} \Box  \phi\; \text{iff}\; || \phi ||^{+} \in N^{-}(m)$
\end{itemize}

\noindent
It is easy to show that heredity can be extended to modal formulas by means of conditions (\ref{princ:heredity1}) and (\ref{princ:heredity2}).\par
We say that a formula $\phi$ is true in a model iff $e \models^{+} \phi$. Moreover, $\phi$ is valid in the class of modal Kripke resource algebras iff $\phi$ is true for every model. We say that $\Gamma$ logically entails $\phi$, $\Gamma \models \phi$ iff for every model, if $e \models^{+} \Gamma$, then $e \models^{+} \phi$.\par

We show now that the semantic is adequate for \MILLs. The proof of soundness of \MILLs wrt the class of modal Kripke resource algebras simply extends the one for the propositional case \cite{kamide2006linear} to the modal rule.

\begin{theorem}[Soundness of \MILLs] If $\Gamma \t \phi$ is derivable in \MILLs, then $\Gamma \models \phi$. 
\end{theorem}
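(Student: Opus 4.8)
The plan is to prove soundness by induction on the structure of the deduction tree in \MILLs, following the standard recipe for Hilbert-style linear systems and extending the known propositional soundness result. First I would set up the induction on the definition of deduction trees (Definition B.1 / Definition \ref{def:derivation-biatC}): the base cases are leaves of the form $\phi \t \phi$ and $\t \psi$ for $\psi$ an axiom of Table \ref{hills}, and the inductive cases are the $\implies$-rule, the $\with$-rule (inherited from Definition \ref{def:derivation-ill}) and the new modal rule $\Box$(re). For the base cases and the two propositional rules, I would simply invoke the soundness of \ILLs with respect to Kripke resource algebras (cited from \cite{kamide2006linear}): leaves $\phi \t \phi$ are trivially valid, every propositional axiom of Table \ref{hills} is valid in any Kripke resource algebra hence in any modal Kripke resource algebra (since the modal structure $N^{+}, N^{-}$ does not interfere with the interpretation of non-modal connectives), and the $\implies$-rule and $\with$-rule preserve validity by the monoid/semilattice properties exactly as in the propositional proof.

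The genuinely new work is the case of the rule $\Box$(re): assuming $\models \phi \implies \psi$ and $\models \psi \implies \phi$, I must show $\models \Box\phi \implies \Box\psi$. The key observation is that $\models \phi \implies \psi$ together with $\models \psi \implies \phi$ forces $\|\phi\|^{+}_{\mathcal{M}} = \|\psi\|^{+}_{\mathcal{M}}$ in every model $\mathcal{M}$. I would prove this by unfolding the truth condition for $\implies$: $e \models^{+} \phi \implies \psi$ means that for every $n$ with $n \models^{+}\phi$ we have $n \circ e \models^{+}\psi$, and since $e$ is the monoid identity $n \circ e = n$, this gives $\|\phi\|^{+} \subseteq \|\psi\|^{+}$; symmetrically the other entailment gives the reverse inclusion. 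Once the positive extensions coincide, the truth clauses $m \models^{+}\Box\phi$ iff $\|\phi\|^{+}\in N^{+}(m)$ and $m \models^{-}\Box\phi$ iff $\|\phi\|^{+}\in N^{-}(m)$ immediately yield $m \models^{+}\Box\phi \iff m\models^{+}\Box\psi$ and likewise for $\models^{-}$, for every $m$; in particular $\|\Box\phi\|^{+} = \|\Box\psi\|^{+}$, from which $\models \Box\phi \implies \Box\psi$ follows by the same argument (for any $n\models^{+}\Box\phi$, $n\circ e = n \models^{+}\Box\psi$). I should also note that heredity extends to modal formulas via conditions (\ref{princ:heredity1}) and (\ref{princ:heredity2}), as already remarked in the text, so that $\|\Box\phi\|^{+}$ is a legitimate hereditary set and the semantic clauses are well-behaved.

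Finally I would assemble the induction: from a derivation of $\Gamma \t \phi$, a routine argument (tracking the multiset of open assumptions through $\implies$-rule applications, which compose the corresponding monoid elements, and through $\with$-rule applications, which share the same context) shows that if $e \models^{+} \gamma$ for every $\gamma$ in $\Gamma$ then $e \models^{+}\phi$; when $\Gamma$ is empty this is just $\models\phi$. The main obstacle is not really an obstacle but a bookkeeping point: one must phrase the inductive hypothesis for sequents $\Gamma \t \phi$ rather than for closed theorems, so that the $\implies$-rule case goes through with the correct resource-sensitive combination of the two premise contexts via $\circ$; this is exactly the pattern of the propositional soundness proof in \cite{kamide2006linear}, and the only extra ingredient is the $\Box$(re) case handled above, which is immediate once one notices that provable equivalence collapses to equality of positive extensions.
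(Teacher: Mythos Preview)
Your proposal is correct and follows essentially the same approach as the paper: soundness is reduced to the propositional case (handled by \cite{kamide2006linear}) plus the single new modal rule $\Box$(re), where the key step is that $e\models^{+}\phi\implies\psi$ and $e\models^{+}\psi\implies\phi$ force $\|\phi\|^{+}=\|\psi\|^{+}$, whence the neighbourhood clause for $\Box$ gives $e\models^{+}\Box\phi\implies\Box\psi$. The paper's proof is terser (it only spells out the $\Box$(re) case), but your more explicit scaffolding of the induction and the unfolding of $\implies$ via the monoid identity are exactly the details that underlie it.
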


\begin{proof}
We only show the case of rule $\Box$(re). Assume $e \models^{+} \phi \implies \psi$ and $e \models^{+} \psi \implies \phi$, that is for all $m \in M$,  $m \models^{+} \phi$, iff $m \models^{+} \psi$. Thus, $||\phi ||^{+} = ||\psi||^{+}$. Hence, by the semantic clause for $\Box$-formulas, for every $m$, $||\phi||^{+} \in N(m)$ iff $||\psi||^{+} \in N(m)$, which entails $e \models^{+} \Box \phi \implies \Box \psi$.
\end{proof}

We turn now to show completeness. Firstly, we define a syntactical model, which is adapted for the propositional part from \cite{wansing1993informational} and \cite{kamide2006linear}, and we show that it is a canonical model.  Denote by $\Gamma$, $\Delta$ multisets of occurrences of formulas and by $\Delta^{\star} = \phi_1 \otimes \dots \otimes \phi_m$, for $\Delta = \{\phi_{1}, \dots, \phi_{m}\}$. We denote by $|\phi|$ the equivalence class of formulas modulo the interderivability relation in \MILLs: $\dashv$ $\vdash$. We denote by $V^{+c}(\phi)$ and $V^{-c}(\phi)$ the positive and negative extensions of a formula $\phi$ in $M^{c}$. Moreover, we denote by $[\phi] = \{|\Gamma^{\star}| \mid \Gamma^{\star} \t \Box \phi\}$.

\begin{definition}\label{def:mc}
Let $\mathcal{M}^c = \langle M^c,  e^c,  \omega^{c}, \circ^c, \cap^{c}, N^c, V^{+c}, V^{-c} \rangle$ such that:
\begin{itemize}
\item $M^c = \{ |\Gamma^{\star}| \mid \Gamma \text{ is a finite multiset of formulas} \}$;
\item $|\Gamma^{\star} | \circ^c |\Delta^{\star}| = |\Gamma^{\star} \otimes \Delta^{\star}|$;
\item $|\Gamma^{\star}| \cap |\Delta^{\star}| =  |\Gamma^{\star} \oplus \Delta^{\star}|$;
\item $e^c = |\textbf{1}|$;
\item $\omega^{c} = |\bot|$;
\item $|\Gamma^{\star}| \in V^{+c}(p)$ iff $\Gamma \vdash p$;
\item $|\Gamma^{\star}| \in V^{-c}(p)$ iff $\Gamma \vdash \sim p$;
\item $N^{+c}(| \Gamma^{\star} |) = \{ [\phi] \subseteq M^{c} \mid \Gamma^{\star} \vdash \Box \phi\}$.
\item $N^{-c}(| \Gamma^{\star} |) = \{ [\phi] \subseteq M^{c} \mid \Gamma^{\star} \vdash \sim \Box  \phi\}$
\end{itemize}
\end{definition}

For the propositional part, we refer to \cite{wansing1993informational}, that is $M^{c}$ is a Kripke resource algebra. We only show that $N^{+c}$  and $N^{-c}$ are neighborhood functions that satisfy conditions \ref{princ:heredity1} and \ref{princ:heredity2}.

\begin{lemma} $M^{c}$ is a modal Kripke resource algebra that satisfies \ref{princ:heredity1} and \ref{princ:heredity2}.
\end{lemma}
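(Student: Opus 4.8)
The propositional part of the claim --- that $\langle M^{c}, \cap^{c}, \circ^{c}, e^{c}, \omega^{c}, V^{+c}, V^{-c}\rangle$ is a Kripke resource algebra with heredity on atoms --- is exactly the canonical-model construction of \cite{wansing1993informational,kamide2006linear}, which I would invoke rather than reprove. The real content is the verification that $N^{+c}$ and $N^{-c}$ are genuine neighbourhood functions satisfying the heredity conditions \ref{princ:heredity1} and \ref{princ:heredity2}. The engine of the argument is that, in $\mathcal{M}^{c}$, neighbourhood membership is derivability in disguise: $[\phi] \in N^{+c}(|\Gamma^{\star}|)$ precisely when $\Gamma^{\star} \t \Box\phi$, and $[\phi] \in N^{-c}(|\Gamma^{\star}|)$ precisely when $\Gamma^{\star} \t \sim\Box\phi$. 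Two things are packed into this. First, the conditions on the right do not depend on the representative $\Gamma$ of $|\Gamma^{\star}|$: if $\Gamma^{\star} \dashv\vdash \Delta^{\star}$ then transitivity of $\t$ turns a derivation of $\Box\phi$ (resp.\ $\sim\Box\phi$) from one into a derivation from the other, so $N^{+c}$ and $N^{-c}$ are well defined on worlds and take values in $\mathcal{P}(\mathcal{P}(M^{c}))$. Second, if two formulas yield the same set of worlds, $[\phi]=[\psi]$, then --- since $|\Box\phi|\in[\phi]$, because $\Box\phi\t\Box\phi$ --- one gets $\Box\phi\dashv\vdash\Box\psi$, so the membership condition is unambiguous (and likewise $[\phi]\in N^{-c}$ is read off from $\Gamma^\star\t\sim\Box\phi$). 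Here transitivity of $\t$ and the deduction-theorem equivalence between $A\t B$ and $\t A\implies B$ are the standard facts about the Hilbert presentation of linear logic (cf. \cite{AvronTCS1988}), the $\with$-case of the deduction theorem being covered by Axiom~12.

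With this reduction, \ref{princ:heredity1} and \ref{princ:heredity2} become purely proof-theoretic. Since $|\Gamma^{\star}| \cap^{c} |\Delta^{\star}| = |\Gamma^{\star} \oplus \Delta^{\star}|$, condition \ref{princ:heredity1} amounts to the equivalence: for every $\phi$, $\Gamma^{\star} \oplus \Delta^{\star} \t \Box\phi$ if and only if both $\Gamma^{\star} \t \Box\phi$ and $\Delta^{\star} \t \Box\phi$ (for $X$ not of the form $[\phi]$ every membership in \ref{princ:heredity1} fails, so it holds trivially). Left to right follows from the injections $\Gamma^{\star} \t \Gamma^{\star} \oplus \Delta^{\star}$ and $\Delta^{\star} \t \Gamma^{\star} \oplus \Delta^{\star}$ (Axioms~13 and~14) by transitivity. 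Right to left: from $\Gamma^{\star} \t \Box\phi$ and $\Delta^{\star} \t \Box\phi$ the deduction theorem gives $\t \Gamma^{\star} \implies \Box\phi$ and $\t \Delta^{\star} \implies \Box\phi$; the $\with$-rule produces $\t (\Gamma^{\star} \implies \Box\phi) \with (\Delta^{\star} \implies \Box\phi)$, and Axiom~15 then yields $\t (\Gamma^{\star} \oplus \Delta^{\star}) \implies \Box\phi$, i.e.\ $\Gamma^{\star} \oplus \Delta^{\star} \t \Box\phi$. Condition \ref{princ:heredity2} is obtained by the identical argument with $\sim\Box\phi$ in place of $\Box\phi$ throughout, since nothing in the $\oplus$-manipulations is sensitive to the outer $\Box$ or $\sim$. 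Together with the propositional facts this gives that $\mathcal{M}^{c}$ is a modal Kripke resource algebra meeting \ref{princ:heredity1} and \ref{princ:heredity2}.

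The step I expect to need the most care is the bookkeeping of the first paragraph rather than the $\oplus$-calculations: one must genuinely confirm that $N^{+c}$ and $N^{-c}$ are functions of worlds and not of tensored multisets, and that ``$[\phi]\in N^{\pm c}(|\Gamma^{\star}|)$'' cannot depend on which formula representing the set $[\phi]$ one picks --- both of which reduce to transitivity of $\t$, the fact $\Box\phi\t\Box\phi$, and rule $\Box$(re), and, for the negative neighbourhood, to the replacement behaviour of the constructive negation $\sim$ on $\Box$-formulas afforded by the De Morgan and double-negation axioms (16, 17, 21--28). Once those coherence points are stated explicitly, the remainder is routine.
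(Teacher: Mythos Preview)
Your proposal is correct and follows essentially the same route as the paper. Both reduce conditions \ref{princ:heredity1} and \ref{princ:heredity2} to the proof-theoretic equivalence $\Gamma^{\star}\oplus\Delta^{\star}\t\Box\phi$ iff $\Gamma^{\star}\t\Box\phi$ and $\Delta^{\star}\t\Box\phi$ (and analogously with $\sim\Box\phi$); the paper simply invokes this as ``the inversion principle that holds for \ILLs'' from \cite{wansing1993informational}, whereas you derive it explicitly from Axioms~13--15 and the $\with$-rule, and you additionally spell out the well-definedness bookkeeping (independence of representatives via transitivity and $\Box$(re)) that the paper leaves implicit.
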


\begin{proof} 
We show that $N^{+c}$ satisfies conditions (\ref{princ:heredity1}). For (\ref{princ:heredity1}), assume that 
$X \in N^{+c}(|\Gamma^{\star}| \cap |\Delta^{\star}|)$, that is $X \in N^{+c}(|\Gamma^{\star} \oplus \Delta^{\star}|)$. Then, for $X = [\phi]$,
$\Gamma^{\star} \oplus \Delta^{\star} \t \Box \phi$ which is equivalent, by the inversion principle that holds for \ILLs \cite{wansing1993informational}, to having $\Gamma^{\star} \t \Box \phi$ and $ \Delta^{\star} \t \Box \phi$. The last two statements entail that $X = [\phi] \in N^{+c}(|\Gamma^{\star}|)$ and  $X = [\phi] \in N^{+c}(|\Delta^{\star}|)$. The case of condition (\ref{princ:heredity2}) is analogous: Suppose $X \in N^{-c}(\Gamma^{\star} \cap \Delta^{\star})$, then we have $X \in N^{-c}(|\Gamma^{\star} \oplus \Delta^{\star}|)$, which entails that, for $X = [\phi]$, $\Gamma^{\star} \t \neg \Box \phi$. From this, we conclude as before. 
\end{proof}

We can adapt the proof of the truth lemma of \cite{PorelloTroquardJANCL2015} and \cite{wansing1993informational} as follows. Let $\models^{+}_{c}$ the verification relation in the canonical model $M^{c}$ and $\models^{-}_{c}$ the falsification relation in the canonical model.

\begin{lemma}[Truth lemma]
$|\Gamma^{\star}| \models_{c}^{+} \phi$ iff $\Gamma \t \phi$ and $|\Gamma^{\star}| \models_{c}^{-} \phi$ iff  $\Gamma \t \sim \phi$
\end{lemma}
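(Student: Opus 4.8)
The plan is to prove the Truth Lemma by simultaneous induction on the structure of the formula $\phi$, establishing both biconditionals ($|\Gamma^{\star}| \models_{c}^{+} \phi$ iff $\Gamma \t \phi$, and $|\Gamma^{\star}| \models_{c}^{-} \phi$ iff $\Gamma \t\ \sim \phi$) together, since the clauses for $\sim$ and for the negative satisfaction relation interleave the two. The base case $\phi = p$ is immediate from the definitions of $V^{+c}$ and $V^{-c}$ in Definition \ref{def:mc}. For the propositional connectives ($\textbf{1}$, $\top$, $\bot$, $\otimes$, $\with$, $\oplus$, $\implies$, $\sim$) I would cite \cite{wansing1993informational} and \cite{kamide2006linear}: the canonical model construction there is exactly the propositional reduct of $\mathcal{M}^{c}$, so those inductive steps carry over verbatim. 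The only genuinely new case is $\phi = \Box \psi$.

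For the positive $\Box$-case I would argue as follows. By definition of satisfaction, $|\Gamma^{\star}| \models^{+}_{c} \Box \psi$ iff $\| \psi \|^{+}_{\mathcal{M}^{c}} \in N^{+c}(|\Gamma^{\star}|)$. By the induction hypothesis applied to $\psi$, the set $\| \psi \|^{+}_{\mathcal{M}^{c}} = \{ |\Delta^{\star}| \mid \Delta \t \psi \}$; I would observe that this set coincides with $[\psi] = \{ |\Gamma^{\star}| \mid \Gamma^{\star} \t \Box \psi \}$ only up to the equivalence that matters, and here is where care is needed. The point is that $N^{+c}(|\Gamma^{\star}|)$ was defined as $\{ [\phi] \subseteq M^{c} \mid \Gamma^{\star} \t \Box \phi \}$, so $\| \psi \|^{+} \in N^{+c}(|\Gamma^{\star}|)$ holds precisely when $\| \psi \|^{+}$ equals $[\chi]$ for some $\chi$ with $\Gamma^{\star} \t \Box \chi$. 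Using the induction hypothesis, $\| \psi \|^{+} = \| \chi \|^{+}$ means $\psi$ and $\chi$ are interderivable, i.e. $\vdash \psi \implies \chi$ and $\vdash \chi \implies \psi$; then the rule $\Box$(re) gives $\vdash \Box \psi \implies \Box \chi$, so from $\Gamma^{\star} \t \Box \chi$ we get $\Gamma^{\star} \t \Box \psi$, hence $\Gamma \t \Box \psi$ (after unpacking $\Gamma^{\star}$ via the $\otimes$-introduction/elimination axioms 6 and 7, which let one pass between $\Gamma \t \phi$ and $\Gamma^{\star} \t \phi$). Conversely if $\Gamma \t \Box \psi$ then $\Gamma^{\star} \t \Box \psi$, so $[\psi] \in N^{+c}(|\Gamma^{\star}|)$, and since by the induction hypothesis $[\psi] = \| \psi \|^{+}$, we conclude $|\Gamma^{\star}| \models^{+}_{c} \Box \psi$. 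The negative case $|\Gamma^{\star}| \models^{-}_{c} \Box \psi$ iff $\| \psi \|^{+} \in N^{-c}(|\Gamma^{\star}|)$ iff $\Gamma \t\ \sim \Box \psi$ is handled symmetrically, using the definition of $N^{-c}$ and again $\Box$(re) together with the induction hypothesis on $\psi$ (note $N^{-c}$ is indexed by $\| \psi \|^{+}$, not $\| \psi \|^{-}$, matching the semantic clause $m \models^{-} \Box \phi$ iff $\| \phi \|^{+} \in N^{-}(m)$, so no extra complication arises).

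I expect the main obstacle to be the bookkeeping around the identification of $\| \psi \|^{+}_{\mathcal{M}^{c}}$ with the syntactic set $[\psi]$, and more precisely making sure that membership $X \in N^{+c}(|\Gamma^{\star}|)$ is well-posed: $N^{+c}$ is defined only on sets of the form $[\phi]$, so one must check that $\| \psi \|^{+}$ is indeed of that form (it is, by the induction hypothesis, equal to $[\psi]$) before the equivalence can even be stated. A secondary subtlety is the passage between $\Gamma \t \phi$ and $|\Gamma^{\star}| \t \phi$, which relies on the deduction machinery of \MILLs and the tensor axioms; this is routine but should be flagged. Once these points are in place, the induction closes, and combined with the heredity conditions (\ref{princ:heredity1}), (\ref{princ:heredity2}) already verified in the preceding lemma, the Truth Lemma yields completeness of \MILLs with respect to modal Kripke resource algebras in the usual way: if $\Gamma \not\t \phi$ then in $\mathcal{M}^{c}$ the world $e^{c} = |\textbf{1}|$ (or rather $|\Gamma^{\star}|$) verifies $\Gamma$ but not $\phi$.
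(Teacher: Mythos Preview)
Your approach matches the paper's: both proceed by induction on $\phi$, defer the propositional connectives to \cite{wansing1993informational,kamide2006linear}, and handle the $\Box$-case by unfolding the semantic clause, applying the induction hypothesis to identify $\|\psi\|^{+}_{\mathcal{M}^c}$ with a syntactic extension, and then reading off the definition of $N^{\pm c}$. The only difference is that the paper treats the step $\|\psi\|^{+} = [\psi]$ as purely definitional (the displayed definition $[\phi] = \{|\Gamma^{\star}| \mid \Gamma^{\star} \t \Box\phi\}$ is evidently a slip for $\{|\Gamma^{\star}| \mid \Gamma \t \phi\}$, as the proof itself shows), whereas you take the literal definition at face value and bridge the gap via $\Box$(re) and interderivability; your extra care about well-posedness of $N^{+c}$ and the $\Gamma \leftrightarrow \Gamma^{\star}$ passage is sound but becomes unnecessary once the intended definition is in place.
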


\begin{proof} The proof is as usual an induction on the complexity of the formula $\phi$.
We show the case of $\phi = \Box \psi$ for $\models^{+}_{c}$.  The induction hypothesis is: $|\Gamma^{\star}| \models_{c}^{+} \psi$ iff $\Gamma \t \psi$. We have that $|\Gamma^{\star}| \models_{c}^{+} \Box \psi$ iff $||\psi||_{M^{c}}^{+} \in N^{+c}(|\Gamma^{\star}|)$ (by the semantic clause for $\Box \psi$) iff  $\{|\Delta^{\star}| \mid |\Delta^{\star}| \models_{c}^{+} \psi\}  \in N^{+c}(|\Gamma^{\star}|)$ (by definition of $||.||_{M^{c}}^{+}$) iff 
$\{|\Delta^{\star}| \mid \Delta \t \psi)\}  \in N^{+c}(|\Gamma^{\star}|)$ (by the induction hypothesis) iff $[\psi] \in  N^{+c}(|\Gamma^{\star}|)$ (by definition of $[\psi]$) iff $\Gamma \t \Box \psi$ (by definition of $N^{+c}$).\par
\noindent
The case of  $\phi = \Box \psi$ for $\models^{-}_{c}$ is established as follows.  We have that $|\Gamma^{\star}| \models_{c}^{-} \Box \psi$ iff $||\psi||_{M^{c}}^{+} \in N^{-c}(|\Gamma^{\star}|)$ (by the semantic clause for $\Box \psi$) iff  $\{|\Delta^{\star}| \mid |\Delta^{\star}| \models_{c}^{+} \psi\}  \in N^{-c}(|\Gamma^{\star}|)$ (by definition of $||.||_{M^{c}}^{+}$) iff 
$\{|\Delta^{\star}| \mid \Delta \t \psi)\}  \in N^{-c}(|\Gamma^{\star}|)$ (by the induction hypothesis) iff $[\psi] \in  N^{-c}(|\Gamma^{\star}|)$ (by definition of $[\psi]$) iff $\Gamma \t \sim \Box \psi$ (by definition of $N^{-c}$).
\end{proof}

By means of the usual arguments, we can establish completeness of \MILLs. 

\begin{theorem}[Completeness of \MILLs] \MILLs is complete wrt the class of modal Kripke resource algebras that satisfy \ref{princ:heredity1} and \ref{princ:heredity2}: If $\Gamma \models \phi$, then $\Gamma \t \phi$. 
\end{theorem}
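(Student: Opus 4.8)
The plan is to read off completeness from the Truth Lemma by the standard canonical-model argument, in contrapositive form, with no ingredient beyond what has already been assembled: the canonical structure $\mathcal{M}^{c}$ of Definition \ref{def:mc}, the preceding lemma that $\mathcal{M}^{c}$ is a modal Kripke resource algebra validating (\ref{princ:heredity1})--(\ref{princ:heredity2}), and the Truth Lemma. So I would assume $\Gamma \nvdash \phi$ and produce a model witnessing $\Gamma \not\models \phi$.

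The world to evaluate at is $|\Gamma^{\star}| \in M^{c}$. Since $\Gamma^{\star} \vdash \Gamma^{\star}$ trivially, the Truth Lemma gives $|\Gamma^{\star}| \models^{+}_{c} \Gamma^{\star}$, i.e. the multiset $\Gamma$ is satisfied at $|\Gamma^{\star}|$ in the multiplicative sense appropriate to a substructural context; and since $\Gamma \nvdash \phi$, the Truth Lemma gives $|\Gamma^{\star}| \nvDash^{+}_{c} \phi$. As $\mathcal{M}^{c}$ is a genuine model, this is exactly a counterexample to $\Gamma \models \phi$. To match literally the formulation of $\models$ that evaluates at the unit $e^{c}$, I would first apply the deduction theorem for \ILLs (which holds for the Hilbert system of Table 1, giving $\Gamma \vdash \phi$ iff $\vdash \Gamma^{\star} \multimap \phi$) to reduce to the case $\Gamma = \emptyset$, and then note that $e^{c} \models^{+}_{c} \chi$ iff $\textbf{1} \vdash \chi$ by the Truth Lemma, while grafting the derivation of $\vdash \textbf{1}$ (axiom 8) in place of the leaf $\textbf{1} \vdash \textbf{1}$ shows $\textbf{1} \vdash \chi$ implies $\vdash \chi$; hence $\nvdash (\Gamma^{\star} \multimap \phi)$ yields $e^{c} \nvDash^{+}_{c} (\Gamma^{\star} \multimap \phi)$, and the semantic clause for $\multimap$ lets one read off $\Gamma \not\models \phi$.

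Nothing in this argument touches the modality: all the modal work --- the definition of $N^{+c}, N^{-c}$, the verification of the heredity conditions, and the modal cases $\phi = \Box \psi$ of the Truth Lemma --- is already done, the semantic clauses for $\Box$ having been chosen precisely so that $\|\psi\|^{+}_{\mathcal{M}^{c}}$ lies in $N^{+c}(|\Gamma^{\star}|)$ exactly when $\Gamma \vdash \Box \psi$. Consequently I do not expect a real obstacle; the only point requiring discipline is the substructural bookkeeping --- carrying a context by its $\otimes$-collapse $\Gamma^{\star}$ rather than member by member, since in the absence of weakening one does \emph{not} have $\Gamma \vdash \psi$ for individual $\psi \in \Gamma$ --- and the resulting matching of $\vdash$ with $\models$ through $\Gamma^{\star}$. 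The theorem then follows exactly as in the propositional case of \cite{wansing1993informational,kamide2006linear}.
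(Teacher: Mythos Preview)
Your proposal is correct and follows exactly the paper's approach: the contrapositive canonical-model argument via the Truth Lemma. In fact you are more careful than the paper, which simply writes ``Suppose $\Gamma \nvdash \phi$, then by the truth lemma $|\Gamma^{\star}| \not\models_{c}^{+} \phi$, thus $\Gamma \not\models \phi$'' without spelling out the passage from the world $|\Gamma^{\star}|$ back to the unit $e^{c}$; your detour through the deduction theorem and the equivalence $e^{c}\models^{+}_{c}\chi$ iff $\textbf{1}\vdash\chi$ is precisely what is needed to bridge that step rigorously.
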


\begin{proof}
Suppose $\Gamma \nvdash \phi$, then by the truth lemma $|\Gamma^{\star}| \not\models_{c}^{+} \phi$, thus $\Gamma \not\models \phi$.  
\end{proof}

The logic \MILLs is the weakest non-normal modal logic that we can define on top of our propositional system. In the next sections, we use \MILLs as a base logic to develop modal logics for representing and reasoning about propositional attitudes.

\section{Semantics of the modal logics of actions, beliefs, and obligations}\label{app:abo}

\subsection{Semantics of \AILLs}\label{sec:linearbiatc}

The semantics of the bringing-it-about modality is defined by adding a neighborhood semantics on top of the Kripke resource algebra. The meaning of the neighborhood in this setting is that it associates to each world a set of propositions that can be elected by coalition $C$.  We take one neighborhood function $N_C$ for every coalition $C \in \textbf{C}$ and we define:

\begin{itemize}
\item[] $m \models^{+} \biat_C \phi\; \text{iff}\; || \phi ||^{+} \in N_{C}^{+}(m)$
\item[] $m \models^{-} \biat_C \phi\; \text{iff}\; || \phi ||^{+} \in N_{C}^{-}(m)$
\end{itemize}
The neighborhood  functions are assumed to satisfy conditions (\ref{princ:heredity1}) and (\ref{princ:heredity2}) in order to preserve heredity.
The rule ($\biat_{C}$re) does not require any further condition on the Kripke resource algebra, it is already true because of the definition of $\biat_C$, as we have seen in the previous section.\par The rule ($\sim$ nec) requires: 

\begin{equation}\label{eq:cond-no1}
\text{if}\; e \in X, \text{ then } X \in N_C^{-}(e)\; 
\end{equation}

Axiom E1 requires: 

\begin{equation}\label{eq:reflexivity}
\text{if }\; X \in N_C^{+}(m)\;\text{then}\; m \in X 
\end{equation}

We turn now to action compositions. E2 requires:

\begin{equation}{\label{eq:with}}
\text{if }\; X \in N_C^{+}(m)\; \text{and}\; Y \in N_{C}^{+}(m),\; \text{then}\; X \cap Y \in N_C^{+}(m) 
\end{equation}

Let $X \circ Y = \{x \circ y \mid x \in X\; \text{and}\; y \in Y\}$, the condition corresponding to the multiplicative version of action combination, Axiom E3, requires that the upper closure of $X \circ Y$, denote it by $(X \circ Y)^{\uparrow}$, is in $N_{C \sqcup D}(x \circ y)$:
  
\begin{equation}\label{cond:tensor}
\text{if}\; X \in N_{C}^{+}(m)\; \text{and}\;  Y \in N_{D}^{+}(n)\; \text{, then}\;  (X \circ Y)^{\uparrow} \in N^{+}_{C \sqcup D}(m \circ n)
\end{equation}

Summing up, \AILLs is evaluated over the Kripke resource algebras that satisfy  (\ref{princ:heredity1}), (\ref{princ:heredity2}),  (\ref{eq:cond-no1}), (\ref{eq:reflexivity}), (\ref{eq:with}), and (\ref{cond:tensor}).\par 


We approach now the proof of soundness and completeness of \AILLs wrt the suitable class of Kripke resource algebra. A proof of soundness and completeness for a modal logic of action based on \ILL in case of as single agent is provided in \cite{PorelloTroquardJANCL2015}. 

\begin{theorem}[Soundness of \AILLs]\label{thm:soundnessBIAT}
\AILLs is sound wrt the class of Kripke resource algebras that satisfy (\ref{princ:heredity1}) and (\ref{princ:heredity2}), (\ref{eq:cond-no1}), (\ref{eq:reflexivity}), (\ref{eq:with}), and (\ref{cond:tensor}): if $\Gamma \t \phi$, then $\Gamma \models \phi$. 
\end{theorem}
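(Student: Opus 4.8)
The plan is to proceed exactly as in the standard soundness proofs for neighbourhood-based modal systems, adapted to the Kripke resource algebra setting: show that every axiom in Table 3 is valid in the class of Kripke resource algebras satisfying the six conditions, and that every deduction rule preserves validity. Since the propositional axioms of \ILLs and the rules in Definition \ref{def:derivation-ill} were already shown sound (as recalled for \MILLs), and the rule ($\biat_C$re) is sound by exactly the argument given for $\Box$(re) in the soundness proof of \MILLs, the work reduces to the three new axioms E1, E2, E3 and the rule ($\sim$ nec). I would first note that, because heredity extends to all formulas (including modal ones, using conditions (\ref{princ:heredity1}) and (\ref{princ:heredity2})), it suffices to verify each axiom at the monoid identity $e$, or more precisely to check the implication semantically at an arbitrary world.

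For each new axiom I would unfold the truth condition for $\biat_C$ and invoke the matching frame condition. For E1, $m \models^{+} \biat_C\phi$ means $\|\phi\|^{+} \in N_C^{+}(m)$; by (\ref{eq:reflexivity}) this gives $m \in \|\phi\|^{+}$, i.e. $m \models^{+}\phi$, so $\biat_C\phi \implies \phi$ holds at every world. For E2, if $m \models^{+}\biat_C\phi \with \biat_C\psi$ then $\|\phi\|^{+}, \|\psi\|^{+} \in N_C^{+}(m)$, and since $\|\phi \with \psi\|^{+} = \|\phi\|^{+} \cap \|\psi\|^{+}$, condition (\ref{eq:with}) yields $\|\phi\with\psi\|^{+}\in N_C^{+}(m)$, hence $m \models^{+}\biat_C(\phi\with\psi)$. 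For E3, the key is the identity $\|\phi \otimes \psi\|^{+} = (\|\phi\|^{+} \circ \|\psi\|^{+})^{\uparrow}$ (the set of $m$ with $m \geq m_1\circ m_2$ for some $m_1 \models^{+}\phi$, $m_2\models^{+}\psi$, which is precisely the upper closure of the pointwise product); then from $\|\phi\|^{+}\in N_C^{+}(m)$ and $\|\psi\|^{+}\in N_D^{+}(n)$, condition (\ref{cond:tensor}) gives $(\|\phi\|^{+}\circ\|\psi\|^{+})^{\uparrow}\in N^{+}_{C\sqcup D}(m\circ n)$, which is what we need to conclude $\biat_C\phi \otimes \biat_D\psi \implies \biat_{C\sqcup D}(\phi\otimes\psi)$ via the truth condition for $\otimes$ in the antecedent and the truth condition for $\biat_{C\sqcup D}$ in the consequent. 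For the rule ($\sim$ nec), if $\t\phi$ then by the (already-established) soundness of the rest of the system $\phi$ is valid, so $e \models^{+}\phi$, hence $e \in \|\phi\|^{+}$; condition (\ref{eq:cond-no1}) then gives $\|\phi\|^{+}\in N_C^{-}(e)$, which by the falsification clause for $\biat_C$ means $e\models^{-}\biat_C\phi$, i.e. $e\models^{+}\sim\biat_C\phi$, so $\sim\biat_C\phi$ is valid.

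I expect the only genuinely delicate point to be the bookkeeping around the tensor clause in axiom E3 — in particular making precise that $\|\phi\otimes\psi\|^{+}$ coincides with the upper closure $(\|\phi\|^{+}\circ\|\psi\|^{+})^{\uparrow}$ and that condition (\ref{cond:tensor}) is stated exactly in the form needed to match the quantifier structure ($m\circ n$ on the left, then closing upward under $\geq$). The monotonicity of $N_C^{+}$ under $\leq$ that is implicit in the heredity condition (\ref{princ:heredity1}) is what lets the upper closure be absorbed correctly. Everything else is routine unfolding of the semantic clauses together with the soundness of \MILLs, which we may assume.
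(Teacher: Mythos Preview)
Your proposal is correct and follows essentially the same approach as the paper's proof: reduce to the modal axioms and rules, then verify each by unfolding the semantic clauses and invoking the matching frame condition. If anything you are slightly more explicit than the paper about the identity $\|\phi\otimes\psi\|^{+} = (\|\phi\|^{+}\circ\|\psi\|^{+})^{\uparrow}$ and about the role of heredity in passing from $m\circ n$ to $x\geq m\circ n$ in the E3 case; the paper also treats E1 by reference rather than spelling it out as you do.
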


\begin{proof} The propositional part is proven in \cite{kamide2006linear,wansing1993informational}. For the modal part, we only present the cases for the rule ($\sim$ nec) and axioms E2 and E3. The other cases are handled in similar way in \cite{PorelloTroquardJANCL2015}.\\
\noindent
The rule ($\sim$ nec) is sound. Suppose $e \models^{+} \phi$, then  $e \in ||\phi||^{+}$. By condition (\ref{eq:cond-no1}), we have that $||\phi||^{+} \in N_{C}^{-}(e)$. Thus, by definition of $\models^{-}$, we have that $e \models^{-} \biat_{C} \phi$, that is $e \models^{+} \sim \biat_{C} \phi$.

\noindent
We show that axiom E2 is valid. That is, for every model, $e \models^{+} \biat_C \phi  \wedge  \biat_C \psi \implies \biat_C (\phi \wedge \psi)$. That means, by definition of $\implies$, for every $x$, if $x \models^{+} \biat_C\phi \wedge \biat_C \psi$, then $x \models^{+} \biat_C (\phi \wedge \psi)$. If $x \models^{+} \biat_C \phi \wedge \biat_C \psi$, then $x \models^{+} \biat_C \phi$ and $x \models^{+} \biat_C \psi$, that entails, by definition of $\biat_C$, that $||\phi||^{+} \in N_C^{+}(x)$ and $|\psi||^{+} \in N_C^{+}(x)$. Thus, by condition (\ref{eq:with}), we infer $||\phi||^{+} \cap ||\psi||^{+} \in N_C^{+}(x)$. That means $x \models^{+} \biat_C (\phi \wedge \psi)$.\\
\noindent
We show that axiom E3 is valid, $e \models^{+} \biat_C \phi \otimes \biat_D \psi \implies \biat_{C \sqcup D} (\phi \otimes \psi)$. That is, for every $x$, if $x \models^{+} \biat_C \phi \otimes \biat_D \psi$, then $x \models^{+} \biat_{C \sqcup D} (\phi \otimes \psi)$. If $x \models^{+}  \biat_C \phi \otimes \biat_D \psi$, then by definition of $\otimes$, there exist $y$ and $z$, such that $x \geq y \circ z$ and $y \models^{+} \biat_C \phi$ and $z \models^{+} \biat_D \psi$. Therefore, $||\phi||^{+} \in N_C^{+}(y)$ and $||\psi||^{+} \in N_D^{+}(z)$, this by condition (\ref{cond:tensor}), we infer that $(||\phi||^{+} \circ ||\psi||^{+})^{\uparrow} \in N^{+}_{C \sqcup D}(y \circ z)$. Thus, since $x \geq y \circ z$, by condition (\ref{cond:tensor}),  $(||\phi||^{+} \circ ||\psi||^{+})^{\uparrow} \in N^{+}_{C \sqcup D}(x)$, that is $x \models^{+} \biat_{C\sqcup D} (\phi \otimes \psi)$. 

\end{proof}

We adapt the proof of completeness to the case of \AILLs. The construction of the canonical model is analogous to the one presented in Definition \ref{def:mc}.

\begin{lemma}\label{lem:mc}
$\mathcal{M}^c$ is a modal Kripke resource algebra that satisfies (\ref{princ:heredity1}), (\ref{princ:heredity2}), (\ref{eq:cond-no1}), (\ref{eq:reflexivity}), (\ref{eq:with}), and (\ref{cond:tensor}).
\end{lemma}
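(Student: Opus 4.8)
The plan is to verify the six conditions one at a time, keeping the propositional part of the construction of Definition~\ref{def:mc} unchanged and adjoining, for each coalition $C\in\textbf{C}$, the neighbourhood functions
\[
N_C^{+c}(|\Gamma^{\star}|)=\{[\phi]\mid \Gamma^{\star}\t\biat_C\phi\},\qquad
N_C^{-c}(|\Gamma^{\star}|)=\{[\phi]\mid \Gamma^{\star}\t\;\sim\!\biat_C\phi\},
\]
where $[\phi]$ is the syntactic truth set $\{|\Delta^{\star}|\mid\Delta^{\star}\t\phi\}$. That $\langle M^c,\cap^c,\circ^c,e^c,\omega^c\rangle$ is a Kripke resource algebra, and that $N_C^{+c},N_C^{-c}$ have the right type, is inherited from the propositional development in~\cite{wansing1993informational}; only the six conditions are genuinely new, and the whole argument is uniform in the coalitions $C,D$.

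First I would dispose of heredity, conditions~(\ref{princ:heredity1}) and~(\ref{princ:heredity2}), by repeating for each $N_C^{\pm c}$ the step used for $N^{+c},N^{-c}$ in the proof of the analogous lemma of Appendix~\ref{app:mills}, reading $\biat_C$ (resp.\ $\sim\!\biat_C$) for $\Box$: since $|\Gamma^{\star}|\cap^c|\Delta^{\star}|=|\Gamma^{\star}\oplus\Delta^{\star}|$, the inversion principle for the additive disjunction of \ILLs{} gives $\Gamma^{\star}\oplus\Delta^{\star}\t\biat_C\phi$ iff $\Gamma^{\star}\t\biat_C\phi$ and $\Delta^{\star}\t\biat_C\phi$, which is~(\ref{princ:heredity1}), and symmetrically with $\sim\!\biat_C$ for~(\ref{princ:heredity2}). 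This step uses no agentive axiom.

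The remaining four conditions each collapse to a single use of one axiom or rule. For~(\ref{eq:cond-no1}): if $e^c\in[\phi]$ then $\t\phi$ (recall $e^c=|\textbf{1}|$), so the rule ($\sim$~nec) yields $\t\;\sim\!\biat_C\phi$, i.e.\ $[\phi]\in N_C^{-c}(e^c)$. For~(\ref{eq:reflexivity}): if $[\phi]\in N_C^{+c}(|\Gamma^{\star}|)$ then $\Gamma^{\star}\t\biat_C\phi$, and axiom~E1 with the $\implies$-rule gives $\Gamma^{\star}\t\phi$, i.e.\ $|\Gamma^{\star}|\in[\phi]$. For~(\ref{eq:with}): from $\Gamma^{\star}\t\biat_C\phi$ and $\Gamma^{\star}\t\biat_C\psi$ the $\with$-rule gives $\Gamma^{\star}\t\biat_C\phi\with\biat_C\psi$, whence~E2 gives $\Gamma^{\star}\t\biat_C(\phi\with\psi)$, so $[\phi\with\psi]\in N_C^{+c}(|\Gamma^{\star}|)$; since $[\phi]\cap[\psi]=[\phi\with\psi]$ (by axioms~10--11 and the $\with$-rule) this is exactly $[\phi]\cap[\psi]\in N_C^{+c}(|\Gamma^{\star}|)$. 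For~(\ref{cond:tensor}): from $\Gamma^{\star}\t\biat_C\phi$ and $\Delta^{\star}\t\biat_D\psi$, tensor introduction (axiom~6 with two applications of the $\implies$-rule) gives $\Gamma^{\star}\otimes\Delta^{\star}\t\biat_C\phi\otimes\biat_D\psi$, and~E3 then gives $\Gamma^{\star}\otimes\Delta^{\star}\t\biat_{C\sqcup D}(\phi\otimes\psi)$; since $|\Gamma^{\star}|\circ^c|\Delta^{\star}|=|\Gamma^{\star}\otimes\Delta^{\star}|$ and $([\phi]\circ[\psi])^{\uparrow}=[\phi\otimes\psi]$ (the upward closure taken in the order of $M^c$), this says $([\phi]\circ[\psi])^{\uparrow}\in N_{C\sqcup D}^{+c}(|\Gamma^{\star}|\circ^c|\Delta^{\star}|)$.

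The steps above are routine syntactic bookkeeping; the places where I expect to have to be slightly careful are the two ``bridge'' identities $[\phi]\cap[\psi]=[\phi\with\psi]$ and $([\phi]\circ[\psi])^{\uparrow}=[\phi\otimes\psi]$, which rest on the invertibility lemmas for \ILLs{} that already underpin the propositional truth lemma of~\cite{wansing1993informational}, together with the monotonicity of $\otimes$ and $\with$ with respect to $\t$. It is also worth recording explicitly that each condition need only be checked for sets $X,Y$ of the form $[\phi]$, which is all the canonical neighbourhoods contain and all that the soundness proof of Theorem~\ref{thm:soundnessBIAT} invokes. With Lemma~\ref{lem:mc} in place, the truth lemma for \AILLs{}, and hence its completeness, follow exactly as in the \MILLs{} case.
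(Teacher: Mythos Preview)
Your proposal is correct and follows essentially the same route as the paper's proof: both reduce each condition to a short derivation in \AILLs, using ($\sim$~nec) for (\ref{eq:cond-no1}), the $\with$-rule together with E2 for (\ref{eq:with}), and tensor introduction together with E3 for (\ref{cond:tensor}), closing via the identities $[\phi]\cap[\psi]=[\phi\with\psi]$ and $([\phi]\circ[\psi])^{\uparrow}=[\phi\otimes\psi]$. The paper omits the explicit treatment of (\ref{princ:heredity1}), (\ref{princ:heredity2}) and (\ref{eq:reflexivity}), deferring them to \cite{PorelloTroquardJANCL2015}, whereas you spell them out; your remark that the conditions need only be verified for sets of the form $[\phi]$ is a useful clarification that the paper leaves implicit.
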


\begin{proof}
We only show the case of condition (\ref{eq:cond-no1}), (\ref{eq:with}), and (\ref{cond:tensor}), which differs from the proof in \cite{PorelloTroquardJANCL2015}.\\
\noindent
For condition $(\ref{eq:cond-no1})$, suppose $e^{c} = |\textbf{1}| \in X$ and $X = [\phi]$. If $|\textbf{1}| \in [\phi]$, then, by definition, $\textbf{1} \t \phi$.  By the rule $(\sim)$nec, we have that $\t \sim \biat_{C} \phi$, thus $[\phi] \in N_{C}^{-c}(|\textbf{1}|)$.

\noindent
For condition (\ref{eq:with}), suppose $X \in N_C^{+c}(|\Gamma^{\star}|)$ and $Y \in N_{C}^{+c}(|\Gamma^{\star}|)$. By definition of $N_{C}^{+c}$, if $X = [\phi]$, $\Gamma \t \biat_C \phi$ is provable in the Hilbert system. Analogously, $\Gamma \t  \biat_C \psi$, where $Y = [\psi]$. Then, we can prove in the Hilbert system that $\Gamma \t \biat_C \phi \wedge \biat_C \psi$, by means of the $\wedge$-rule: 

$$
\AC{\stackrel{\mathcal{D}}{\Gamma \t \biat_C \phi}}
\AC{\stackrel{\mathcal{D}'}{\Gamma \t \biat_C \psi}} \rl{$\wedge$-rule}
\BC{\Gamma \t \biat_C \phi \wedge \biat_C \psi}
\dip
$$

By the $\implies$-rule (i.e. \emph{modus ponens}), we conclude $\Gamma \implies \biat_C (\phi \wedge \psi)$ as follows: 

$$
\AC{\Gamma \t \biat_C \phi \wedge \biat_C \psi}
\AC{\t \biat_C \phi \wedge \biat_C \psi \implies \biat_C (\phi \wedge \psi)}
\rl{$\implies$-rule}
\BC{\Gamma \t \biat_C (\phi \wedge \psi)}
\dip
$$

Since $\Gamma \t \biat_C (\phi \wedge \psi)$, we have that $[\phi \wedge \psi] \in N_{C}^{+c}(|\Gamma^{\star}|)$. Therefore, we can conclude since  
$[\phi \wedge \psi] = [\phi] \cap [\psi] = X \cap Y$.\\

\noindent
For condition (\ref{cond:tensor}), assume $X \in N_{C}^{+c}(|\Gamma^{\star}|)$,  $Y \in N_{D}^{+c}(|\Delta^{\star}|)$. By definition of canonical neighborhood, we have: $\Gamma \t \biat_C \phi$, $\Delta \t \biat_D \psi$, where $[\phi]= X$ and $[\psi] = Y$. We can prove that $\Gamma, \Delta \t \biat_C \phi \otimes \biat_D \psi$ as follows.

$$
\AC{\Gamma \t \biat_C \phi}
\AC{\t  \biat_C \phi \implies (\biat_D \psi \implies  (\biat_C \phi \otimes \biat_D \psi))\; \text{(ax. )}}
\rl{$\implies$-rule}
\BC{\Gamma \t \biat_D \psi \implies  \biat_C \phi \otimes \biat_D \psi}
\AC{\Delta \t \biat_D \psi}
\LeftLabel{$\implies$-rule}
\BC{\Gamma, \Delta \t \biat_C \phi \otimes \biat_C \psi}
\dip
$$

By means of axiom E3, we infer $\Gamma, \Delta \t \biat_{C \sqcup D} (\phi \otimes \psi)$.

$$
\AC{\Gamma, \Delta \t \biat_C \phi \otimes \biat_C \psi}
\AC{\t \biat_C \phi \otimes \biat_D \psi \implies \biat_{C \cup D} (\phi \otimes \psi)\; \text{(ax 13, $C \cap D = \emptyset$})}
\rl{$\implies$-rule}
\BC{\Gamma, \Delta \t \biat_{C \cup D} \phi \otimes \psi}
\dip
$$

Therefore, $[\phi \otimes \psi] \in N^{+c}_{C \sqcup D} (|\Gamma^{\star}| \circ |\Delta^{\star}|)$. We conclude by noticing that $(X \circ Y)^{\uparrow} = ([\phi] \circ [\psi])^{\uparrow}$.
\end{proof}

The proof of the truth lemma simply adapts our previous proof, thus completeness for \AILLs follows. 

\begin{theorem}[Completeness of \AILLs]
\AILLs is complete wrt the class of modal Kripke resource algebras that satisfy  (\ref{princ:heredity1}), (\ref{princ:heredity2}), (\ref{eq:cond-no1}), (\ref{eq:reflexivity}), (\ref{eq:with}), and (\ref{cond:tensor}): If $\Gamma \models \phi$, then $\Gamma \t phi$.
\end{theorem}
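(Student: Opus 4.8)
The plan is to prove completeness of \AILLs by the standard canonical-model argument, reusing as much of the machinery built for \MILLs as possible. The key observation is that the only new ingredients over \MILLs are the coalition-indexed neighbourhood functions $N_C^{+}$, $N_C^{-}$ and the conditions (\ref{eq:cond-no1}), (\ref{eq:reflexivity}), (\ref{eq:with}), (\ref{cond:tensor}) corresponding to ($\sim$ nec), E1, E2, E3. First I would define the canonical model $\mathcal{M}^c$ exactly as in Definition \ref{def:mc}, except that the single neighbourhood function is replaced by a family indexed by $C \in \textbf{C}$:
$$N_C^{+c}(|\Gamma^{\star}|) = \{[\phi] \subseteq M^c \mid \Gamma^{\star} \t \biat_C \phi\}, \qquad N_C^{-c}(|\Gamma^{\star}|) = \{[\phi] \subseteq M^c \mid \Gamma^{\star} \t\; \sim \biat_C \phi\}.$$
Lemma \ref{lem:mc} already establishes that this $\mathcal{M}^c$ is a modal Kripke resource algebra satisfying all six conditions (\ref{princ:heredity1})–(\ref{cond:tensor}); in particular the propositional part and the heredity conditions are inherited from the \MILLs construction, and the verifications of (\ref{eq:cond-no1}), (\ref{eq:with}), (\ref{cond:tensor}) are carried out there, with (\ref{eq:reflexivity}) handled as in \cite{PorelloTroquardJANCL2015}.

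Next I would prove the truth lemma for \AILLs: for every multiset $\Gamma$ and formula $\phi$,
$$|\Gamma^{\star}| \models^{+}_c \phi \iff \Gamma \t \phi \qquad\text{and}\qquad |\Gamma^{\star}| \models^{-}_c \phi \iff \Gamma \t\; \sim \phi.$$
This is an induction on the complexity of $\phi$. The propositional cases are exactly those of \cite{wansing1993informational,kamide2006linear} and the \MILLs truth lemma; the only genuinely new case is $\phi = \biat_C \psi$, and this is handled just as the $\Box\psi$ case in the \MILLs truth lemma, replacing $N^{+c}$ by $N_C^{+c}$ and $N^{-c}$ by $N_C^{-c}$ throughout. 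Concretely, $|\Gamma^{\star}| \models^{+}_c \biat_C \psi$ iff $||\psi||^{+}_{M^c} \in N_C^{+c}(|\Gamma^{\star}|)$ iff (by the induction hypothesis, which identifies $||\psi||^{+}_{M^c}$ with $[\psi]$) $[\psi] \in N_C^{+c}(|\Gamma^{\star}|)$ iff $\Gamma \t \biat_C \psi$; and symmetrically for $\models^{-}_c$ using $N_C^{-c}$. Finally, completeness follows by the usual contrapositive: if $\Gamma \nvdash \phi$, then by the truth lemma $|\Gamma^{\star}| \not\models^{+}_c \phi$, while $|\Gamma^{\star}| \models^{+}_c \gamma$ for every $\gamma \in \Gamma$ (again by the truth lemma), so $\Gamma \not\models \phi$.

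I do not expect a serious obstacle here, since the argument is a routine adaptation of the already-completed \MILLs proof together with Lemma \ref{lem:mc}. The one point requiring a little care is making sure the induction hypothesis in the truth lemma is strong enough: because of the strong negation, the lemma must be stated as a conjunction covering \emph{both} $\models^{+}_c$ and $\models^{-}_c$ simultaneously, so that in the inductive step for $\biat_C\psi$ (and for $\sim$, $\implies$, $\otimes$, etc.) one can appeal to the falsification clause as well as the verification clause. Provided this is set up as in the \MILLs case, the coalition indexing adds no real difficulty — the conditions (\ref{eq:cond-no1}), (\ref{eq:reflexivity}), (\ref{eq:with}), (\ref{cond:tensor}) are precisely what the soundness proof (Theorem \ref{thm:soundnessBIAT}) shows the axioms force, and Lemma \ref{lem:mc} shows they hold canonically, so the match between syntax and semantics is exact.
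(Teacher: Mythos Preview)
Your proposal is correct and follows essentially the same approach as the paper: the paper simply states that ``the proof of the truth lemma simply adapts our previous proof, thus completeness for \AILLs follows,'' relying on Lemma~\ref{lem:mc} for the canonical-model conditions and on the \MILLs truth lemma for the inductive argument, with completeness concluded by the same contrapositive you give. Your write-up is in fact more explicit than the paper's, spelling out the coalition-indexed neighbourhoods and the $\biat_C\psi$ case of the truth lemma, but the underlying argument is identical.
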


\subsection{Semantics of \BILLs}

The semantic of the epistemic modalities is defined by adding neighborhood functions to the Kripke resource algebra. To distinguish them from the neighbourhood functions for actions, we denote them by $B_{C}$, where $C$ is a symbol for a coalition in $\textbf{C}$. In this setting, the interpretation of neighbourhood is the one provided by \cite{vardi1986epistemic}, that is, the function associates to a world the set of propositions that are believed by the coalition $C$ at that world. Accordingly, we shall assume a number of neighborhood functions, labelled by $B_{C}$. The semantic definitions extend those of \ILLs by adding the following conditions.

\begin{itemize}
\item[] $m \models^{+} \bel_C \phi\; \text{iff}\; || \phi ||^{+} \in B^{+}_{C}(m)$
\item[] $m \models^{-} \bel_C \phi\; \text{iff}\; || \phi ||^{+} \in B^{-}_{C}(m)$
\end{itemize}

We present the semantic conditions that correspond to the axioms of \BILLs. Firstly, we ensure that wrt the new modalities, heredity is preserved by assuming the conditions (\ref{princ:heredity1}), (\ref{princ:heredity2}). The minimal rationality assumption encoded by axiom B1 corresponds to the following condition, which excludes that a coalition of agents holds contradictory beliefs. 

\begin{equation}
\label{cond:absurdbeliefs}
 \{\omega\} \in B^{-}_C(e)\
 \end{equation}

Recall that $\{\omega\}$ is $||\bot||^{+}$. The positive introspection axiom (B4) corresponds to the following condition.

\begin{equation}
\label{cond:posintro}
\text{if } X \in B_C^{+}(m) \text{ then } \{n \mid X \in B_{C}(n)\} \in B_{C}^{+}(m)
\end{equation}

%

The following condition adequately captures axiom B5.

\begin{equation}
\label{cond:withmonB}
\text{if } X \in B_C^{+}(m)\; \text{and}\;  Y \subseteq X\; \text{then}\; Y \in B^{+}_{C}(m)
\end{equation}

We turn now to beliefs compositions. The conditions are analogous to those for actions composition. Axiom B2 requires:

\begin{equation}{\label{cond:withB}}
\text{if }\; X \in B_C^{+}(w)\; \text{and}\; Y \in B_{C}^{+}(w),\; \text{then}\; X \cap Y \in B_C^{+}(w) 
\end{equation}

The multiplicative version of belief combination, axiom B3, require:
  
\begin{equation}\label{cond:tensorB}
\text{if}\; X \in B_{C}^{+}(x)\; \text{and}\;  Y \in B_{D}^{+}(y)\; \text{, then}\;  (X \circ Y)^{\uparrow} \in B_{C \sqcup D}^{+}(x \circ y)
\end{equation}

We can now establish soundness and completeness of \BILLs wrt the suitable class of modal Kripke resource algebras.

\begin{theorem}[Soundness of \BILLs]\label{soundnessEPI} \BILLs is sound wrt the class of modal Kripke resource algebras that satisfy  (\ref{princ:heredity1}), (\ref{princ:heredity2}), (\ref{cond:absurdbeliefs}), (\ref{cond:posintro}),  (\ref{cond:withmonB}), (\ref{cond:withB}), and (\ref{cond:tensorB}).
\end{theorem}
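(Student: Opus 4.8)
The plan is to argue by induction on the structure of \BILLs\ deduction trees, following the soundness proof for \AILLs\ (Theorem~\ref{thm:soundnessBIAT}) essentially line by line. The propositional part --- the axioms of Table~1 and the $\implies$- and $\with$-rules --- is already sound over Kripke resource algebras by \cite{kamide2006linear,wansing1993informational}, and conditions (\ref{princ:heredity1}) and (\ref{princ:heredity2}) are exactly what makes heredity propagate through the two semantic clauses for $\bel_C$, so that $||\bel_C\phi||^{+}$ is again an upward-closed subset of $M$ and the truth-set manipulations below are legitimate. It therefore suffices to check the modal rule $\bel_C$(re) together with the axioms B1--B5; note that, unlike \AILLs, the system \BILLs\ contains no ($\sim$ nec) rule, so that case does not arise.

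The rule $\bel_C$(re) is sound for the same reason as every neighbourhood rule of equivalents: its hypotheses give $||\phi||^{+}=||\psi||^{+}$, hence $||\phi||^{+}\in B_C^{+}(m)$ iff $||\psi||^{+}\in B_C^{+}(m)$ for every $m$, which is precisely $e\models^{+}\bel_C\phi\implies\bel_C\psi$. Axioms B2 and B3 are verbatim copies of the arguments for E2 and E3, with conditions (\ref{cond:withB}) and (\ref{cond:tensorB}) in place of (\ref{eq:with}) and (\ref{cond:tensor}): B2 uses $||\phi\wedge\psi||^{+}=||\phi||^{+}\cap||\psi||^{+}$ together with closure of $B_C^{+}(m)$ under $\cap$, and B3 uses that $||\phi\otimes\psi||^{+}$ is the upper closure of $||\phi||^{+}\circ||\psi||^{+}$ together with (\ref{cond:tensorB}). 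For B1 one notes that the clause for $\bot$ gives $||\bot||^{+}=\{\omega\}$, so condition (\ref{cond:absurdbeliefs}) reads $||\bot||^{+}\in B_C^{-}(e)$, i.e.\ $e\models^{-}\bel_C\bot$, i.e.\ $e\models^{+}\sim\bel_C\bot$. Axiom B5 is obtained from (\ref{cond:withmonB}) applied to the identity $||\phi\wedge\psi||^{+}=||\phi||^{+}\cap||\psi||^{+}$.

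The one step that does not merely transcribe the \AILLs\ proof, and hence the step I expect to need the most care, is B4 (positive introspection). The key preliminary fact is that, by the clause $m\models^{+}\bel_C\chi$ iff $||\chi||^{+}\in B_C^{+}(m)$, the truth set of the inner modality is $||\bel_C\phi||^{+}=\{n\mid ||\phi||^{+}\in B_C^{+}(n)\}$. Then, if $x\models^{+}\bel_C\phi$, i.e.\ $||\phi||^{+}\in B_C^{+}(x)$, condition (\ref{cond:posintro}) --- instantiated with $X=||\phi||^{+}$ and reading the bare $B_C$ there as $B_C^{+}$ --- yields $\{n\mid ||\phi||^{+}\in B_C^{+}(n)\}\in B_C^{+}(x)$; identifying this set with $||\bel_C\phi||^{+}$ gives $||\bel_C\phi||^{+}\in B_C^{+}(x)$, which is exactly $x\models^{+}\bel_C\bel_C\phi$. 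The difficulty here is entirely the bookkeeping of aligning the abstract set appearing in the frame condition with the concrete extension of $\bel_C\phi$; once that is pinned down the induction goes through, and the companion completeness result would then follow by the canonical-model method used for \MILLs\ and \AILLs, enlarging the canonical neighbourhood functions so as to force conditions (\ref{cond:absurdbeliefs})--(\ref{cond:tensorB}).
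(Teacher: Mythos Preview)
Your proposal is correct and follows essentially the same approach as the paper: induction on derivations, the propositional base handled by the cited results, and the modal cases B1, B4, B5 verified exactly as the paper does (via conditions (\ref{cond:absurdbeliefs}), (\ref{cond:posintro}), (\ref{cond:withmonB})), with B2, B3 and $\bel_C$(re) treated by the same arguments used for E2, E3 and $\biat_C$(re) in Theorem~\ref{thm:soundnessBIAT}. The paper's proof is terser --- it only writes out B1, B4, B5 --- but your additional detail on the remaining cases matches what is intended.
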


\begin{proof}
 We only show the case of Axioms B1, B4, and B5.
For B1, we have to show that  $e \models^{+} \sim \bel_{C} \bot$. That is $e \models^{-} \bel_{C} \bot$ which is equivalent by definition to $||\bot||^{+} \in B_{C}^{-}(e)$. Since $||\bot||^{+} = \{ \omega\}$, by condition (\ref{cond:absurdbeliefs}), we conclude.\par
For B4, suppose $m \models^{+} \bel_{C} \phi$, then $||\phi||^{+} \in B_{C}(m)$. By condition (\ref{cond:posintro}), $\{n \mid ||\phi|| \in B_{C}(n)\} \in B_{C}^{+}(m)$. Therefore, $||B_{C} \phi ||^{+} \in B_{C}^{+}(m)$, thus $m \models^{+} \bel_{C} \bel_{C} \phi$.\par
For B5, suppose $e \models^{+} \bel_{C}(\phi \wedge \psi)$, then $||\phi \wedge \psi||^{+} \in B_{c}(e)$. Since $||\phi \wedge \psi||^{+} \subseteq ||\phi||^{+}$, by condition (\ref{cond:withmonB}), we have that $||\phi||^{+} \in B_{C}^{+}(e)$, therefore $e \models^{+} \bel_{C} \phi$.
\end{proof}

For completeness, we adapt the construction of the canonical model $M^{c}$.

\begin{lemma}\label{lem:mc}

$\mathcal{M}^c$ is a modal Kripke resource algebra that satisfies  (\ref{princ:heredity1}), (\ref{princ:heredity2}), (\ref{cond:absurdbeliefs}), (\ref{cond:posintro}),  (\ref{cond:withmonB}), (\ref{cond:withB}), and (\ref{cond:tensorB}).
\end{lemma}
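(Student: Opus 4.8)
The plan is to obtain this lemma for the doxastic case by the same recipe already used for \AILLs and \MILLs: build the canonical model $\mathcal{M}^{c}$ exactly as in Definition~\ref{def:mc}, now taking, for each coalition $C\in\textbf{C}$, the neighbourhood functions $B_{C}^{+c}(|\Gamma^{\star}|)=\{[\phi]\mid \Gamma^{\star}\t\bel_{C}\phi\}$ and $B_{C}^{-c}(|\Gamma^{\star}|)=\{[\phi]\mid \Gamma^{\star}\t \sim\bel_{C}\phi\}$, and then check each semantic condition against the corresponding axiom by exhibiting the relevant Hilbert derivation inside $\mathcal{M}^{c}$. The propositional reduct of $\mathcal{M}^{c}$ is already a Kripke resource algebra by \cite{wansing1993informational}, and conditions (\ref{princ:heredity1}) and (\ref{princ:heredity2}) for $B_{C}^{+c}$ and $B_{C}^{-c}$ follow verbatim from the \MILLs case, using the inversion principle of \ILLs applied to $\bel_{C}$-formulas; so only the five conditions (\ref{cond:absurdbeliefs})--(\ref{cond:tensorB}) need genuine work.

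First I would clear the conditions that are essentially copies of earlier arguments. For (\ref{cond:absurdbeliefs}): Axiom~B1 is $\t \sim\bel_{C}\bot$, so $[\bot]\in B_{C}^{-c}(e^{c})$, and $[\bot]=\{\omega^{c}\}=||\bot||^{+}$ because $\Gamma^{\star}\t\bot$ together with $\bot\t\Gamma^{\star}$ forces $|\Gamma^{\star}|=|\bot|=\omega^{c}$. For (\ref{cond:withB}) I repeat the E2 derivation: from $\Gamma\t\bel_{C}\phi$ and $\Gamma\t\bel_{C}\psi$ the $\with$-rule gives $\Gamma\t\bel_{C}\phi\with\bel_{C}\psi$, and modus ponens on B2 gives $\Gamma\t\bel_{C}(\phi\with\psi)$, so $[\phi\with\psi]=[\phi]\cap[\psi]\in B_{C}^{+c}(|\Gamma^{\star}|)$. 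For (\ref{cond:tensorB}) I repeat the E3 derivation: from $\Gamma\t\bel_{C}\phi$ and $\Delta\t\bel_{D}\psi$, two applications of modus ponens on Axiom~6 give $\Gamma,\Delta\t\bel_{C}\phi\otimes\bel_{D}\psi$, and Axiom~B3 gives $\Gamma,\Delta\t\bel_{C\sqcup D}(\phi\otimes\psi)$, so $([\phi]\circ[\psi])^{\uparrow}=[\phi\otimes\psi]\in B_{C\sqcup D}^{+c}(|\Gamma^{\star}|\circ|\Delta^{\star}|)$, the index bookkeeping for $\sqcup$ being identical to the E3 case. For the monotonicity condition tied to B5, note $[\phi\with\psi]\subseteq[\phi]$ and that $[\phi\with\psi]\in B_{C}^{+c}(|\Gamma^{\star}|)$, i.e.\ $\Gamma\t\bel_{C}(\phi\with\psi)$, yields $\Gamma\t\bel_{C}\phi$ by B5 and modus ponens, hence $[\phi]\in B_{C}^{+c}(|\Gamma^{\star}|)$; the general case uses that $[\chi]\subseteq[\psi]$ gives $\chi\t\psi$ (since $|\chi|\in[\chi]\subseteq[\psi]$), whence $\chi\dashv\vdash\chi\with\psi$, so one concludes through $\bel_{C}$(re), B5 and commutativity of $\with$.

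The step I expect to be the real obstacle is positive introspection, (\ref{cond:posintro}). The derivation is trivial --- from $[\phi]\in B_{C}^{+c}(|\Gamma^{\star}|)$, i.e.\ $\Gamma\t\bel_{C}\phi$, Axiom~B4 and modus ponens give $\Gamma\t\bel_{C}\bel_{C}\phi$, so $[\bel_{C}\phi]\in B_{C}^{+c}(|\Gamma^{\star}|)$ --- but to match the condition one must show that the set $\{n\mid [\phi]\in B_{C}^{+c}(n)\}$ occurring in its statement really equals $[\bel_{C}\phi]$. This forces two facts about the canonical construction to be stated with care: that $[\phi]\in B_{C}^{+c}(|\Delta^{\star}|)$ holds \emph{iff} $\Delta\t\bel_{C}\phi$ (the canonical neighbourhood of a world contains exactly the truth-sets of its $\bel_{C}$-provable formulas, nothing spurious), and that this membership is well-defined on $\dashv\vdash$-classes because $\bel_{C}$ respects interderivability through $\bel_{C}$(re). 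Granting these, $\{n\mid [\phi]\in B_{C}^{+c}(n)\}=\{|\Delta^{\star}|\mid \Delta\t\bel_{C}\phi\}=||\bel_{C}\phi||^{+}_{\mathcal{M}^{c}}=[\bel_{C}\phi]$, which closes the last condition. Once all seven conditions hold, the truth lemma transfers verbatim from the \MILLs case and completeness of \BILLs follows as before.
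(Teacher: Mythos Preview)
Your proposal is correct and follows essentially the same strategy as the paper, which only treats conditions (\ref{cond:absurdbeliefs}) and (\ref{cond:posintro}) explicitly and refers back to the \AILLs lemma for the rest. Your handling of (\ref{cond:posintro}) matches the paper's almost verbatim, though you are more explicit about why $\{n\mid [\phi]\in B_{C}^{+c}(n)\}=[\bel_{C}\phi]$, which the paper simply asserts ``coincides by definition''; your extra care there is warranted but not a different idea.

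The one genuine addition you make is for condition (\ref{cond:withmonB}), tied to axiom B5: the paper's proof omits this case entirely (it is not covered by the reference to the \AILLs lemma, since that logic has no analogue of B5), whereas you supply the argument via $[\chi]\subseteq[\psi]\Rightarrow\chi\vdash\psi\Rightarrow\chi\dashv\vdash\chi\with\psi$, then $\bel_{C}$(re), commutativity of $\with$, and B5. This is the right argument. Note also that the inclusion direction you actually verify --- upward closure, $X\in B_{C}^{+c}(m)$ and $X\subseteq Y$ implies $Y\in B_{C}^{+c}(m)$ --- is the one the soundness proof of B5 uses, even though the displayed condition (\ref{cond:withmonB}) in the paper has the inclusion written the other way; you have implicitly corrected a typo.
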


\begin{proof}
We only show the case of condition (\ref{cond:absurdbeliefs}) and (\ref{cond:posintro}), as the other arguments are very similar to those presented in the proof of Lemma \ref{lem:mc}. For condition (\ref{cond:absurdbeliefs}), we show that $\{w^{c}\} = \{|\bot|\} \in B_{C}^{-c}(|\textbf{1}|)$. Since $\t \sim B_{C}^{-}\bot$, $[\bot] = |\bot| \in B_{C}^{-c}(|\textbf{1}|)$.\par

For condition \ref{cond:posintro}, assume $X \in B_{C}^{+c}(\Gamma)$, we have to show that $\{ \Delta \mid X \in B_{C}(\Delta)\} \in B_{C}^{+c}(\Gamma)$. By assumption, $X \in \{ [ \phi ] \mid \Gamma \t \bel_{C} \phi \}$. The set $\{ \Delta \mid [\phi] \in B_{C}^{+c}(\Delta)\}$ coincides by definition with $\{ \Delta \mid \Delta \t \bel_{C} \phi\} = [\bel_{C} \phi]$.  By axiom B4, we infer $\Gamma \t \bel_{C} \bel_{C} \phi$, that is by definition  $[\bel_{C} \phi] \in B_{C}^{+c}(\Gamma)$.
\end{proof}

The proof of the truth lemma and of the completeness are easy adaptation of the previous proofs.

\begin{theorem}[Completeness \BILLs]\label{completenessEPI} \BILLs is complete wrt the class of Kripke resource algebra that satisfy (\ref{princ:heredity1}), (\ref{princ:heredity2}), (\ref{cond:absurdbeliefs}), (\ref{cond:posintro}),  (\ref{cond:withmonB}), (\ref{cond:withB}), and (\ref{cond:tensorB}).
\end{theorem}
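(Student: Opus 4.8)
The plan is to obtain completeness by the same canonical-model route already used for \MILLs and \AILLs, so that only two things need checking: that the canonical structure lies in the intended class of models, and that the truth lemma goes through for the new modality $\bel_C$. The first is exactly Lemma~\ref{lem:mc} just proved, which equips $\mathcal{M}^c$ of Definition~\ref{def:mc} with the canonical belief-neighbourhoods $B_C^{+c}(|\Gamma^\star|)=\{[\phi]\mid \Gamma^\star\vdash \bel_C\phi\}$ and $B_C^{-c}(|\Gamma^\star|)=\{[\phi]\mid \Gamma^\star\vdash\;\sim\bel_C\phi\}$ and verifies (\ref{princ:heredity1}), (\ref{princ:heredity2}), (\ref{cond:absurdbeliefs}), (\ref{cond:posintro}), (\ref{cond:withmonB}), (\ref{cond:withB}), and (\ref{cond:tensorB}). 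So in the body of the proof it suffices to cite that Lemma and then argue the truth lemma and the contraposition.

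First I would state and prove the truth lemma for \BILLs: $|\Gamma^\star|\models_c^+\phi$ iff $\Gamma\vdash\phi$, and $|\Gamma^\star|\models_c^-\phi$ iff $\Gamma\vdash\;\sim\phi$, by induction on the complexity of $\phi$. All propositional cases are verbatim as in \cite{wansing1993informational,kamide2006linear}, so only $\phi=\bel_C\psi$ is new, and it is handled exactly like the $\Box\psi$ case of the \MILLs truth lemma: $|\Gamma^\star|\models_c^+\bel_C\psi$ iff $||\psi||_{M^c}^+\in B_C^{+c}(|\Gamma^\star|)$ iff, by the induction hypothesis, $[\psi]\in B_C^{+c}(|\Gamma^\star|)$ iff $\Gamma\vdash\bel_C\psi$ by definition of $B_C^{+c}$; dually, $|\Gamma^\star|\models_c^-\bel_C\psi$ iff $||\psi||_{M^c}^+\in B_C^{-c}(|\Gamma^\star|)$ iff $[\psi]\in B_C^{-c}(|\Gamma^\star|)$ iff $\Gamma\vdash\;\sim\bel_C\psi$. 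The only point to flag is that $B_C^{\pm c}$ is well-defined on the equivalence class $[\psi]$, i.e. that $\bel_C$ respects interderivability; this is precisely what the rule $\bel_C$(re) buys us, just as $\Box$(re) did for \MILLs.

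Finally the completeness claim follows by contraposition, identically to the \MILLs case: if $\Gamma\nvdash\phi$ then by the truth lemma $|\Gamma^\star|\not\models_c^+\phi$, while $|\Gamma^\star|$ does satisfy $\Gamma$ (indeed $\Gamma^\star$), so $\mathcal{M}^c$ together with the world $|\Gamma^\star|$ witnesses $\Gamma\not\models\phi$. I do not expect a real obstacle here: the one place that needed genuine care — checking that the canonical $B_C^{\pm c}$ validate the positive-introspection condition (\ref{cond:posintro}) (recognising $\{n\mid X\in B_C(n)\}$ as $[\bel_C\psi]$ and then applying axiom B4) and the no-absurd-belief condition (\ref{cond:absurdbeliefs}) — has already been discharged inside Lemma~\ref{lem:mc}, so everything remaining is routine bookkeeping inherited from the propositional and \MILLs proofs.
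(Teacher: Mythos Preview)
Your proposal is correct and follows exactly the route the paper indicates: the paper itself does not spell out this proof, merely stating that ``the proof of the truth lemma and of the completeness are easy adaptation of the previous proofs,'' and what you have written is precisely that adaptation, invoking Lemma~\ref{lem:mc} for the canonical-model conditions and rerunning the \MILLs truth-lemma argument with $\bel_C$ in place of $\Box$.
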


\subsection{Semantics of \OILLs}

The semantics of collective obligations is obtained again by adding a number of neighborhood functions $O_{C}$. The semantics of the modal formulas of $\mathcal{L}_{\OILLs}$ is then as follows. 

\begin{itemize}
\item[] $m \models^{+} \ob_{C} \phi$ iff $||\phi||^{+} \in O_{C}^{+} (m)$
\item[] $m \models^{-} \ob_{C} \phi$ iff $||\phi||^{+} \in O_{C}^{-} (m)$

\item[] $m \models^{+} \per_{C} \phi$ iff $||\phi||^{-} \notin O_{C}^{-}(m)$
\item[] $m \models^{-} \per_{C} \phi$ iff $||\phi||^{-} \in O_{C}^{+}(m)$
\end{itemize}

Note that the duality between $\ob_{C}$ and $\per_{C}$ is ensured: $\ob_{C} \phi$ is equivalent to $\sim \per_{C} \sim \phi$. 
We have that $m \models^{+} \ob_{C} \phi$ iff $||\phi||^{+} \in O_{C}^{+}(m)$ iff $|| \sim \phi ||^{-} \in O_{C}^{+}(m)$ iff $m \models^{-} \per_{C} \sim \phi$ iff $m \models^{+} \sim \per_{C} \sim \phi$. Firstly, we assume that the neighborhood functions $O_{C}$ satisfy (\ref{princ:heredity1}) and (\ref{princ:heredity2}). 
The conditions required by the new axioms are then the following.  Axiom O1 requires, as in the case of beliefs:

\begin{equation}
\label{cond:trueob}
 \{\omega\} \in B_C^{-}(e)\
 \end{equation}

Axioms O2 and O3 require the familiar conditions on combinations of propositions.

%

\begin{equation}{\label{cond:withO}}
\text{if }\; X \in O_C^{+}(m)\; \text{and}\;  Y \in O_{C}^{+}(m),\; \text{then}\; X \cap Y \in O_C^{+}(m) 
\end{equation}
  
\begin{equation}\label{cond:tensorO}
\text{if}\;  X \in O_{C}^{+}(x)\;  \text{and}\;  Y \in O_{D}^{+}(y)\; \text{, then}\;  (X \circ Y)^{\uparrow} \in O_{C \sqcup D}^{+}(x \circ y)
\end{equation}

Axioms O4 and O5 are already valid. Finally, the situation of axiom O6 is delicate. In the classical case a condition such as $\text{If } X \in O_{C}^{+}(x), \text{ then } M\setminus X \in O_{C}^{-}(x)$ would suffice. Unfortunately, the set-theoretic complement is not available in the Kripke resource algebra, thus, we do not have the direct means to talk about the set of falsifiers in terms of set-theoretic construction. One way to cope with this problem is to introduce a new pairs of neighborhoods $P_{C}^{+}$ and $P_{C}^{-}$  and putting the constraint:

 \begin{equation}\label{cond:obper}
\text{if}\;  X \in O_{C}^{+}(x)\;  \text{, then}\;  X \in P_{C}^{+}(x \circ y)
\end{equation}

We leave a proper treatment of this strategy for a dedicated work. We can now establish soundness and completeness for \OILLs (minus axiom O6).

\begin{theorem}[Soundness of \OILLs]\label{soundnessO} \OILLs is sound wrt the class of modal Kripke resource algebras that satisfy  (\ref{princ:heredity1}), (\ref{princ:heredity2}),(\ref{cond:trueob}), (\ref{cond:withO}), and (\ref{cond:tensorO}). 
\end{theorem}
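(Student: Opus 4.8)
The plan is to follow the template already used for the soundness of \MILLs, \AILLs (Theorem~\ref{thm:soundnessBIAT}) and \BILLs (Theorem~\ref{soundnessEPI}). The propositional fragment is sound by \cite{kamide2006linear,wansing1993informational}, and conditions (\ref{princ:heredity1}) and (\ref{princ:heredity2}), imposed on every $O_C^{+}$ and $O_C^{-}$, propagate heredity to the modal formulas $\ob_C\phi$ and $\per_C\phi$ by the usual induction, so that $||\phi||^{+}$ and $||\phi||^{-}$ behave as in the propositional case (in particular $||\phi||^{+}$ is upward closed, which is what the multiplicative step below needs). It then remains to check that the rule $\ob_C$(re) preserves validity and that axioms O1--O5 are valid over every modal Kripke resource algebra satisfying (\ref{princ:heredity1}), (\ref{princ:heredity2}), (\ref{cond:trueob}), (\ref{cond:withO}) and (\ref{cond:tensorO}). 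The rule $\ob_C$(re) is handled verbatim as for \MILLs: $e \models^{+} \phi \implies \psi$ and $e \models^{+} \psi \implies \phi$ force $||\phi||^{+} = ||\psi||^{+}$, hence $||\phi||^{+} \in O_C^{+}(m)$ iff $||\psi||^{+} \in O_C^{+}(m)$ for all $m$, i.e.\ $e \models^{+} \ob_C\phi \implies \ob_C\psi$.

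The modal axioms O1--O3 are mechanical adaptations of cases already treated. O1 follows from $||\bot||^{+} = \{\omega\}$ and condition (\ref{cond:trueob}), which give $||\bot||^{+} \in O_C^{-}(e)$, hence $e \models^{-} \ob_C\bot$, i.e.\ $e \models^{+} \sim\ob_C\bot$ --- the literal analogue of the B1 argument on (\ref{cond:absurdbeliefs}). O2 follows from condition (\ref{cond:withO}) together with $||\phi \wedge \psi||^{+} = ||\phi||^{+} \cap ||\psi||^{+}$: from $x \models^{+} \ob_C\phi \wedge \ob_C\psi$ one gets $||\phi||^{+}, ||\psi||^{+} \in O_C^{+}(x)$, so $||\phi \wedge \psi||^{+} \in O_C^{+}(x)$, i.e.\ $x \models^{+} \ob_C(\phi \wedge \psi)$, exactly as for E2/B2. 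For O3: from $x \models^{+} \ob_C\phi \otimes \ob_D\psi$ pick $y,z$ with $x \geq y \circ z$, $||\phi||^{+} \in O_C^{+}(y)$ and $||\psi||^{+} \in O_D^{+}(z)$; condition (\ref{cond:tensorO}) gives $(||\phi||^{+} \circ ||\psi||^{+})^{\uparrow} \in O_{C \sqcup D}^{+}(y \circ z)$, which lifts to $O_{C \sqcup D}^{+}(x)$ via heredity and $x \geq y \circ z$, and since $||\phi \otimes \psi||^{+} = (||\phi||^{+} \circ ||\psi||^{+})^{\uparrow}$ this reads $x \models^{+} \ob_{C \sqcup D}(\phi \otimes \psi)$ --- exactly the E3/B3 pattern.

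Axioms O4 and O5 require no new structural condition and are obtained by unwinding the truth clauses, which were set up precisely so that the $\per_C$-clauses mirror the $\ob_C$-clauses (the computation witnessing $\ob_C\phi \equiv \sim\per_C\sim\phi$ is already in the text, via $||\sim\phi||^{-} = ||\phi||^{+}$); O5 then falls out as an identity and O4 is checked the same way, with no appeal to the neighbourhoods beyond their presence. I expect the real obstacle, and the reason the statement is confined to the conditions listed, to be axiom O6, $\ob_C\phi \implies \per_C\phi$: validating it semantically would call for a link of the shape ``$X \in O_C^{+}(m)$ implies $M \setminus X \in O_C^{-}(m)$'', which is not expressible in a Kripke resource algebra since a proposition has no Boolean complement there. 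The remedy noted before the theorem --- a fresh pair of permission neighbourhoods $P_C^{+}, P_C^{-}$ with the linking condition (\ref{cond:obper}) --- is deferred to dedicated work, so the present theorem should be read as establishing soundness for the fragment generated by O1--O5 and $\ob_C$(re), i.e.\ \OILLs without O6, which is exactly what the listed conditions support.
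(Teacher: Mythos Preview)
Your proposal is correct and follows exactly the approach the paper intends: the paper does not spell out a proof for this theorem at all, but simply states it after having matched each axiom to its semantic condition (O1 to (\ref{cond:trueob}), O2 to (\ref{cond:withO}), O3 to (\ref{cond:tensorO}), noting that ``Axioms O4 and O5 are already valid'', and explicitly excluding O6). Your write-up is thus a faithful and more explicit unfolding of the template used for \AILLs and \BILLs, including the correct observation that the listed conditions only cover \OILLs minus O6.
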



%

\begin{theorem}[Completeness \OILLs]\label{completenessEPI} \OILLs is complete wrt the class of Kripke resource algebra that satisfy (\ref{princ:heredity1}), (\ref{princ:heredity2}),(\ref{cond:trueob}), (\ref{cond:withO}), (\ref{cond:tensorO}). 
\end{theorem}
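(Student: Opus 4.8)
The proof follows the canonical-model method already used for \MILLs, \AILLs, and \BILLs, so the plan is to adapt the three ingredients of those proofs: the definition of the canonical model, the lemma placing it in the intended class of structures, and the truth lemma. Write $[\phi]$ for the set of classes of multisets proving $\phi$, i.e. $[\phi] = \{|\Delta^{\star}| \mid \Delta \t \phi\}$, as in Definition \ref{def:mc}.

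First I would build the canonical model $\mathcal{M}^{c}$ exactly as in Definition \ref{def:mc}, keeping the propositional components unchanged and defining the deontic neighbourhoods by $O_{C}^{+c}(|\Gamma^{\star}|) = \{[\phi] \mid \Gamma^{\star} \t \ob_{C}\phi\}$ and $O_{C}^{-c}(|\Gamma^{\star}|) = \{[\phi] \mid \Gamma^{\star} \t\ \sim \ob_{C}\phi\}$. Since axiom O6 is set aside, no auxiliary neighbourhoods $P_{C}$ are needed, and the permission modality is interpreted by the clauses already given, which make $\per_{C}\phi$ the de Morgan dual of $\ob_{C}$.

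Second, I would prove the lemma that $\mathcal{M}^{c}$ satisfies (\ref{princ:heredity1}), (\ref{princ:heredity2}), (\ref{cond:trueob}), (\ref{cond:withO}), and (\ref{cond:tensorO}). The propositional part is from \cite{wansing1993informational,kamide2006linear}. The heredity conditions for $O_{C}^{\pm c}$ reduce, via the inversion principle for \ILLs that $\Gamma^{\star} \oplus \Delta^{\star} \t \chi$ iff $\Gamma^{\star} \t \chi$ and $\Delta^{\star} \t \chi$, to the same computation as in the \MILLs case. Condition (\ref{cond:trueob}) follows from O1 exactly as (\ref{cond:absurdbeliefs}) followed from B1 in Lemma \ref{lem:mc}. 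Conditions (\ref{cond:withO}) and (\ref{cond:tensorO}) are obtained by replaying the Hilbert-style derivations in the proof of Lemma \ref{lem:mc}, using O2 together with the $\with$-rule and modus ponens for (\ref{cond:withO}), and using axiom 6 together with O3 for (\ref{cond:tensorO}). Then I would establish the truth lemma $|\Gamma^{\star}| \models^{+}_{c} \phi$ iff $\Gamma \t \phi$ and $|\Gamma^{\star}| \models^{-}_{c} \phi$ iff $\Gamma \t\ \sim\phi$ by induction on $\phi$: all cases except $\phi = \ob_{C}\psi$ and $\phi = \per_{C}\psi$ are verbatim as before, the case $\phi = \ob_{C}\psi$ is the analogue of the $\Box\psi$ case for \MILLs using the definitions of $O_{C}^{+c}$ and $O_{C}^{-c}$, and the case $\phi = \per_{C}\psi$ is handled through the provable equivalence $\per_{C}\psi \dashv\t\ \sim \ob_{C}\sim\psi$ (from O4, O5, and $\sim\sim$ elimination/introduction) together with the matching semantic duality recorded in the text, so that the positive and negative canonical clauses for $\per_{C}\psi$ collapse to those already proved for $\sim \ob_{C}\sim\psi$. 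Completeness then follows by the standard argument: if $\Gamma \nvdash \phi$, the truth lemma gives $|\Gamma^{\star}| \not\models^{+}_{c} \phi$ while $|\Gamma^{\star}|$ verifies $\Gamma$, whence $\Gamma \not\models \phi$.

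I expect the only genuinely delicate step to be the $\per_{C}$ case of the truth lemma: because $\per_{C}$ is interpreted through the \emph{falsifier} set $||\psi||^{-}$ and the \emph{negative} neighbourhood $O_{C}^{-}$, one must check that the strong-negation bookkeeping lines up, i.e. that the inductive hypothesis for the negative satisfaction of $\psi$, combined with the definition of $O_{C}^{\pm c}$, really yields $|\Gamma^{\star}| \models^{+}_{c} \per_{C}\psi$ iff $\Gamma \t \per_{C}\psi$. Everything else is a routine transcription of the \MILLs, \AILLs, and \BILLs arguments; in particular, the omission of axiom O6 removes the one ingredient — the $P_{C}$ neighbourhoods — that would otherwise require a new idea.
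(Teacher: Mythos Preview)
Your proposal is correct and follows exactly the approach the paper intends: the paper gives no explicit proof for this theorem, relying instead on the same canonical-model adaptation of the \MILLs/\AILLs/\BILLs arguments that you spell out. Your identification of the $\per_{C}$ case of the truth lemma as the only non-routine point is apt and goes beyond what the paper makes explicit.
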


\end{document}